\documentclass[11pt, letterpaper]{article}
\usepackage[utf8]{inputenc}
\usepackage{fullpage,times}

\usepackage[procnumbered,ruled,vlined,linesnumbered]{algorithm2e}

\DontPrintSemicolon
\SetKw{KwAnd}{and}
\SetProcNameSty{textsc}
\SetFuncSty{textsc}

\usepackage{amsmath,amsfonts,amsthm,amssymb,multirow}
\usepackage{times}
\usepackage{graphicx}
\usepackage{floatpag}
\usepackage{bbold}
\usepackage{enumitem}
\usepackage{subcaption} 
\usepackage{calc}
\usepackage{tikz}
\usetikzlibrary{decorations.markings}
\tikzstyle{vertex}=[circle, draw, inner sep=0pt, minimum size=4pt, fill = black]

\usepackage{graphicx}
\usepackage{tabularx}
\makeatletter
\newcommand{\multiline}[1]{%
  \begin{tabularx}{\dimexpr\linewidth-\ALG@thistlm}[t]{@{}X@{}}
    #1
  \end{tabularx}
}
\makeatother
\usepackage{mathtools}

\makeatletter
\def\BState{\State\hskip-\ALG@thistlm}
\makeatother

\newcommand{\floor}[1]{\lfloor #1 \rfloor}
\newcommand{\rd}[2]{d(#1\leftrightarrows #2)}
\newcommand{\rdh}[3]{d_{#1}(#2\leftrightarrows #3)}

\usepackage[compact]{titlesec}
\titlespacing{\section}{0pt}{3ex}{2ex}
\titlespacing{\subsection}{0pt}{2ex}{1ex}
\titlespacing{\subsubsection}{0pt}{0.5ex}{0ex}

\newtheorem{theorem}{Theorem}[section]

\newenvironment{proofof}[1]{{\bf Proof of #1.  }}{\hfill$\Box$}

\newtheorem{lemma}{Lemma}[section]
\newtheorem{claim}{Claim}
\newtheorem{hypothesis}{Hypothesis}

\makeatletter
\let\c@fconjecture\c@conjecture
\makeatother

\makeatletter
\let\c@fconj\c@conj
\makeatother

\def \eps {\varepsilon}

\newcommand{\ignore}[1]{}

\def\tO{\tilde{O}}

\title{Conditionally optimal approximation algorithms \\for the girth of a directed graph}
\author{Mina Dalirrooyfard\\{MIT, minad@mit.edu} \and Virginia Vassilevska Williams\\ {MIT, virgi@mit.edu} }
\date{}

\begin{document}

\maketitle
\begin{abstract}
The girth is one of the most basic graph parameters, and its computation has been studied for many decades.
Under widely believed fine-grained assumptions, computing the girth exactly is known to require $mn^{1-o(1)}$ time, both in sparse and dense $m$-edge, $n$-node graphs, motivating the search for fast approximations. Fast good quality approximation algorithms for undirected graphs have been known for decades. For the girth in directed graphs, until recently the only constant factor approximation algorithms ran in $O(n^\omega)$ time, where $\omega<2.373$ is the matrix multiplication exponent. These algorithms have two drawbacks: (1) they only offer an improvement over the $mn$ running time for dense graphs, and (2) the current fast matrix multiplication methods are impractical.
The first constant factor approximation algorithm that runs in $O(mn^{1-\eps})$ time for $\eps>0$ and all sparsities $m$ was only recently obtained by Chechik et al. [STOC 2020]; it is also combinatorial.

It is known that a better than $2$-approximation algorithm for the girth in dense directed unweighted graphs needs $n^{3-o(1)}$ time unless one uses fast matrix multiplication. Meanwhile, the best known approximation factor for a combinatorial algorithm running in $O(mn^{1-\eps})$ time (by Chechik et al.) is $3$. Is the true answer $2$ or $3$?





The main result of this paper is a (conditionally) tight approximation algorithm for directed graphs.
First, we show that under a popular hardness assumption, any algorithm, even one that exploits fast matrix multiplication, would need to take at least $mn^{1-o(1)}$ time for some sparsity $m$ if it achieves a $(2-\eps)$-approximation for any $\eps>0$. Second we give
 a $2$-approximation algorithm for the girth of unweighted graphs running in $\tilde{O}(mn^{3/4})$ time, and a $(2+\eps)$-approximation algorithm (for any $\eps>0$) that works in weighted graphs and runs in $\tilde{O}(m\sqrt n)$ time. Our algorithms are combinatorial. 
 

We also obtain a $(4+\eps)$-approximation of the girth running in $\tilde{O}(mn^{\sqrt{2}-1})$ time, improving upon the previous best $\tilde{O}(m\sqrt n)$ running time by Chechik et al.
Finally, we consider the computation of roundtrip spanners. We obtain a $(5+\eps)$-approximate roundtrip spanner on $\tilde{O}(n^{1.5}/\eps^2)$ edges in $\tilde{O}(m\sqrt n/\eps^2)$ time. This improves upon the previous approximation factor $(8+\eps)$ of Chechik et al. for the same running time.
\end{abstract}
\thispagestyle{empty}
\newpage
\setcounter{page}{1}

\section{Introduction.}
One of the most basic and well-studied graph parameters is the {\em girth}, i.e. the length of the shortest cycle in the graph. Computing the girth in an $m$-edge, $n$-node graph can be done by computing all pairwise distances, that is, solving the All-Pairs Shortest Paths (APSP) problem. This gives an $\tilde{O}(mn)$ time algorithm for the general version of the girth problem: directed or undirected integer weighted graphs and no negative weight cycles\footnote{If the weights are nonnegative, running Dijkstra's algorithm suffices. If there are no negative weight cycles, one can use Johnson's trick to make the weights nonnegative at the cost of a single SSSP computation which can be achieved for instance in $\tilde{O}(m\sqrt n\log M)$ time if $M$ is the largest edge weight magnitude via Goldberg's algorithm \cite{goldberg}, so as long as the weights have at most $\tilde{O}(\sqrt n)$ bits, the total time is $\tilde{O}(mn)$.}. 

The $\tilde{O}(mn)$ running time for the exact computation of the girth is known to be tight, up to $n^{o(1)}$ factors, both for sparse and dense weighted graphs, under popular hardness hypotheses from fine-grained complexity \cite{focsy,lincolnsoda}. In unweighted graphs or graphs with integer weights of magnitude at most $M$, one can compute the girth in $\tilde{O}(Mn^\omega)$ time \cite{seidel,Clique1,RodittyVW11,cyganbaur} where $\omega<2.373$ is the exponent of $n\times n$ matrix multiplication \cite{vstoc12,legallmult}.
This improves upon $mn$ only for somewhat dense graphs with small weights, and moreover is not considered very practical due to the large overhead of fast matrix multiplication techniques. 

Due to the subcubic equivalences of \cite{focsy}, however, it is known that even in unweighted dense graphs, any algorithm that computes the girth in $O(n^{3-\eps})$ time needs to use fast matrix multiplication techniques, unless one can obtain a subcubic time combinatorial Boolean Matrix Multiplication (BMM) algorithm. Thus, under popular fine-grained complexity assumptions, if one wants to have a fast combinatorial algorithm, or an algorithm that is faster than $mn$ for sparser graphs, one needs to resort to {\em approximation}.

Fast approximation algorithms for the girth in undirected graphs have been known since the 1970s, starting with the work of Itai and Rodeh \cite{Clique1}. The current strongest result shows a $2$-approximation in $\tilde{O}(n^{5/3})$ time \cite{RodittyW12}; note that if the graph is dense enough this algorithm is sublinear in the input. Such good approximation algorithms are possible for undirected graphs because of known strong structural properties. For instance, as shown by Bondy and Simonovits \cite{BoSi74}, for any integer $k\geq 2$, if a graph has at least $100kn^{1+1/k}$ edges, then it must contain a $2k$ cycle, and this gives an immediate upper bound on the girth. 
There are no such structural results for directed graphs, making the directed girth approximation problem quite challenging.

Zwick~\cite{zwickbridge} showed that if the maximum weight of an edge is $M$, one can obtain in $\tilde{O}(n^\omega\log(M/\eps)/\eps)$ time a $(1+\eps)$-approximation for APSP, and this implies the same for the girth of directed graphs. As before, however, this algorithm does not run fast in sparse graphs, and can be considered impractical.

The first nontrivial approximation algorithms (both for sparse graphs and combinatorial) for the girth of directed graphs were achieved by Pachocki et al. \cite{PachockiRSTW18}. The current best result by Chechik et al. \cite{improvedgirtharx,improvedgirthSTOC} achieves for every integer $k\geq 1$, a randomized $O(k\log k)$-approximation algorithm running in time $\tilde{O}(m^{1+1/k})$. The best approximation factor that Chechik et al. obtain in $O(mn^{1-\eps})$ time for $\eps>0$ is $3$, in $\tilde{O}(m\sqrt n)$ time.

What should be the best approximation factor attainable in $O(mn^{1-\eps})$ time for $\eps>0$? 
It is not hard to show (see e.g. \cite{vthesis}, the construction in Thm 4.1.3) that graph triangle detection can be reduced to triangle detection in a directed graph whose cycle lengths are all divisible by $3$. This, coupled with the combinatorial subcubic equivalence between triangle detection and BMM \cite{focsy} implies that any $O(n^{3-\eps})$ time algorithm for $\eps>0$ that achieves a $(2-\delta)$-approximation for the girth implies an $O(n^{3-\eps/3})$ time algorithm for BMM, and hence fast matrix multiplication techniques are likely necessary for faster $(2-\eps)$-approximation of the directed girth.

\subsection{Our results}
We first give a simple extension to the above hardness argument for $(2-\eps)$-approximation, giving a conditional lower bound on the running time of $(2-\eps)$-girth approximation algorithms under the so called $k$-Cycle hardness hypothesis \cite{ancona2019,dynssspMaxNicole20,lincolnsoda}. 

The $k$-Cycle hypothesis states that for every $\eps>0$, there is a $k$ such that $k$-cycle in $m$-edge directed unweighted graphs cannot be solved in $O(m^{2-\eps})$ time (on a $O(\log n)$ bit word-RAM). 

The hypothesis is consistent with all known algorithms for detecting $k$-cycles in directed graphs, as these run at best in time $m^{2-c/k}$ for various small constants $c$ \cite{YuZw04,AlYuZw97,lincolnsoda,patternscycles19}, even using powerful tools such as matrix multiplication.  Moreover, as shown by Lincoln et al. \cite{lincolnsoda} any $O(mn^{1-\eps})$ time algorithm (for $\eps>0$) that, for odd $k$, can detect $k$-cycles in $n$-node $m$-edge directed graphs with
$m=\Theta(n^{1+2/(k-1)})$, would imply an $O(n^{k-\delta})$ time algorithm for $k$-clique detection for $\delta>0$. If the cycle algorithm is ``combinatorial'', then the clique algorithm would be ``combinatorial'' as well, and since all known $O(n^{k-\delta})$ time $k$-clique algorithms use fast matrix multiplication, such a result for $k$-cycle would be substantial.

In Section \ref{sec:hardness}, with a very simple reduction we show:

\begin{theorem}
\label{thm:lowerbound}
Suppose that for some constants $\eps>0$ and $\delta>0$, there is an $O(m^{2-\eps})$ time algorithm that can compute a $(2-\delta)$-approximation of the girth in an $m$-edge directed graph. Then for every constant $k$, one can detect whether an $m$-edge directed graph contains a $k$-cycle, in $O(m^{2-\eps})$ time, and hence the $k$-Cycle Hypothesis is false.
\end{theorem}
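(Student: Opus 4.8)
The plan is to reduce $k$-cycle detection to girth approximation. From the input directed graph $G$ we build an auxiliary directed graph $G'$ in which \emph{every} cycle has length divisible by $k$ and whose shortest cycle has length exactly $k$ if and only if $G$ contains a simple $k$-cycle. Given such a $G'$ we run the hypothesized $(2-\delta)$-approximation algorithm on it and compare the returned value $\hat g$ to the threshold $2k$: if $\mathrm{girth}(G')=k$ then $\hat g\in[k,(2-\delta)k]\subseteq[k,2k)$, while if $G$ has no $k$-cycle then $\mathrm{girth}(G')$ is a multiple of $k$ that is at least $2k$ (possibly $\infty$), so $\hat g\ge 2k$. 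Hence the single test ``$\hat g<2k$'' exactly detects a $k$-cycle in $G$, and the factor $2-\delta$ is just small enough to make this work.

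To build $G'$ I would start from the standard $\Z_k$-layered blow-up of $G$: take $k$ copies $(v,0),\dots,(v,k-1)$ of each vertex $v$ and, for every edge $(u,v)$ of $G$ and every $i\in\Z_k$, add the edge $(u,i)\to(v,i{+}1\bmod k)$. Every cycle of this graph has length divisible by $k$, and its cycles of length $k$ are exactly the closed walks of length $k$ in $G$. To turn ``closed walk of length $k$'' into ``simple $k$-cycle'' I would compose this with colour coding: colour $V(G)$ uniformly at random with colours $\{0,\dots,k-1\}$ and keep, in layer $i$, only the copies of the vertices of colour $i$. The resulting $G'$ has at most $n$ vertices and at most $m$ edges, every cycle still has length divisible by $k$, and any length-$k$ cycle of $G'$ meets each layer exactly once and therefore projects to a $k$-cycle of $G$ whose vertices receive all $k$ colours --- in particular a genuine simple cycle. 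Conversely, a fixed simple $k$-cycle of $G$ becomes such a ``rainbow'' cycle of $G'$ with probability $k!/k^k=2^{-O(k)}$.

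Putting it together: repeat the colouring $2^{O(k)}$ times (or, for a deterministic algorithm, use a $k$-perfect family of hash functions $V(G)\to\{0,\dots,k-1\}$ of size $2^{O(k)}\log n$), each time forming $G'$ and running the $(2-\delta)$-approximation algorithm, and report that $G$ has a $k$-cycle iff some run returns a value below $2k$. By the first paragraph there are never false positives, and if a $k$-cycle exists then some colouring makes it rainbow, so the procedure is correct (with high probability in the randomized variant, and the error is driven down by further repetition). Each $G'$ has $O(m)$ edges and $k$ is constant, so the total running time is $O(m^{2-\eps})$ up to a $2^{O(k)}\,\mathrm{polylog}(n)$ factor, which is absorbed by decreasing $\eps$ infinitesimally (and is unnecessary in the randomized one-sided-error version). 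This contradicts the $k$-Cycle Hypothesis for the $k$ it supplies for this $\eps$.

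The only genuinely delicate point is the gap between closed walks and simple cycles: the bare layered construction detects closed walks of length $k$ in $G$ rather than simple $k$-cycles (for instance, for even $k$ a $2$-cycle traversed $k/2$ times produces such a walk), so the role of colour coding is exactly to force the length-$k$ cycles of $G'$ to be simple. Everything else --- that all cycle lengths in $G'$ are multiples of $k$, that the ``$=k$'' case is separated from the ``$\ge 2k$'' case by the approximation factor, and the edge-count bookkeeping --- is routine.
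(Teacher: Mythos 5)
Your proposal is essentially the same as the paper's proof. The paper directly colors $V(G)$ with $\{0,\dots,k-1\}$ and keeps an edge $(u,v)$ iff $c(v)\equiv c(u)+1\pmod k$; your $\Z_k$-layered blow-up followed by the ``keep only $(v,c(v))$'' restriction yields exactly the same one-vertex-per-original-vertex subgraph. The cycle-length-divisible-by-$k$ observation, the $[k,2k)$ vs.\ $\geq 2k$ separation via the $(2-\delta)$ factor, the color-coding repetition, and the derandomization remark are all as in the paper. One small quantitative slip: for a fixed simple $k$-cycle $v_0\to\cdots\to v_{k-1}\to v_0$ to survive in $G'$ it is not enough that the $v_j$ receive all $k$ colors (``rainbow''); the colors must cycle in order, i.e.\ $c(v_{j+1})\equiv c(v_j)+1$, so the survival probability is $k/k^k$ rather than $k!/k^k$. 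This changes the number of repetitions from $2^{O(k)}$ to $k^{O(k)}$ (the paper uses $O(k^k\log n)$), but since $k$ is a fixed constant this has no effect on the conclusion.
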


Thus, barring breakthroughs in Cycle and Clique detection algorithms, we know that the best we can hope for using an $O(mn^{1-\eps})$ time algorithm for the girth of directed graphs is a $2$-approximation. The proof of Theorem \ref{thm:lowerbound} is presented in section \ref{sec:hardness}.

The main result of this paper is the first ever $O(mn^{1-\eps})$ time for $\eps>0$ $2$-approximation algorithm for the girth in directed graphs. This result is conditionally tight via the above discussion.

\begin{theorem}
\label{thm:2approxunw}
There is an $\tilde{O}(mn^{3/4})$ time randomized algorithm that $2$-approximates the girth in directed unweighted graphs whp. For every $\eps>0$, there is a $(2+\eps)$-approximation algorithm for the girth in directed graphs with integer edge weights that runs in $\tilde{O}(m\sqrt{n}/\eps)$ time. The algorithms are randomized and are correct whp.
\end{theorem}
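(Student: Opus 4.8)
The plan is to split on how many vertices a shortest cycle $C$ has, against a threshold $\tau$; note that in the unweighted case $C$ has exactly $g$ vertices, so ``many vertices'' there is the same as ``large girth''. If $C$ has at least $\tau$ vertices, then a uniform random sample $R$ of $\tilde{O}(n/\tau)$ vertices hits $C$ whp. Running Dijkstra from each $r\in R$ in $G$ and in the reverse graph, and reporting $\min_{r\in R}\min_{w\neq r}\big(d(r,w)+d(w,r)\big)$, gives the girth exactly: every closed walk has length at least $g$, while for $r\in C$ and any $w\in C$ the two arcs of $C$ give $d(r,w)+d(w,r)=g$. This phase costs $\tilde{O}(mn/\tau)$.

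\textbf{Small/few-vertex cycles: guesses and roundtrip balls.} For the complementary case I would iterate over guesses $\hat g$ --- the powers of two up to $\tau$ in the unweighted case, a geometric sequence $\{(1+\eps)^i\}$ in the weighted case --- and for each $\hat g$ run the following with a size cap $s$. From every vertex $v$, grow the roundtrip ball $B^{\mathrm{rt}}(v,\hat g)=\{x: d(v,x)+d(x,v)\le\hat g\}$ together with the distances $d(v,x),d(x,v)$, aborting once $s$ vertices have been settled. If the growth finishes without aborting, then $\min_{x\in B^{\mathrm{rt}}(v,\hat g)}\big(d(v,x)+d(x,v)\big)$ equals the shortest closed walk through $v$, which is exactly $g$ whenever $v$ lies on a shortest cycle of length $g\le\hat g$ (the true minimiser has roundtrip value $g\le\hat g$, hence lies inside the ball). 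If instead the growth aborts for some vertex $v_0$ of a shortest cycle, then $|B^{\mathrm{rt}}(v_0,\hat g)|\ge s$, so a second uniform random sample $P$ of $\tilde{O}(n/s)$ vertices hits $B^{\mathrm{rt}}(v_0,\hat g)$ whp; a full pair of Dijkstras from the hit vertex $p$ then certifies, taking $w=v_0$, a closed walk through $p$ of length at most $\hat g$. For the tightest guess one has $g\le\hat g<2g$, so the estimate produced always lies in $[g,2g)$; the loss of the factor $2$ is exactly the price of affording only $O(\log n)$ guesses. Taking the best estimate over all guesses and over the large-girth phase gives a $2$-approximation (whp).

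\textbf{Parameters and the weighted version.} The two sampling steps cost $\tilde{O}(mn/\tau)$ and $\tilde{O}(mn/s)$ (summed over the $O(\log n)$ guesses), and balancing $\tau$, $s$ against the total ball-growing cost (see below) yields $\tilde{O}(mn^{3/4})$. In the weighted case a cycle with few vertices can have arbitrarily large weight, so the split is on the number of vertices of $C$ at threshold $\sqrt n$: cycles on at least $\sqrt n$ vertices are caught exactly by sampling $\tilde{O}(\sqrt n)$ vertices in $\tilde{O}(m\sqrt n)$ time, while for cycles on fewer than $\sqrt n$ vertices one runs the roundtrip-ball scheme above but with hop-bounded distances (at most $\sqrt n$ hops along the cycle), computed by the standard bridging-set reduction: $\tilde{O}(n/\hat h)$ random bridge vertices at hop scale $\hat h$, full Dijkstras from them, and recursion for the short-hop remainder, costing $\tilde{O}(m\sqrt n)$ at the balance point. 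Using $O(\log/\eps)$ geometric guesses, and paying an extra constant factor for the hop-bounded approximation, turns the guarantee into $(2+\eps)$ and the running time into $\tilde{O}(m\sqrt n/\eps)$.

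\textbf{Main obstacle.} The crux is to grow $B^{\mathrm{rt}}(v,\hat g)$ in time governed by the roundtrip ball, not by the one-way balls $B_{\mathrm{out}}(v,\hat g)$ and $B_{\mathrm{in}}(v,\hat g)$: a vertex can lie on a tiny cycle while reaching, or being reached from, almost all of $G$ within distance $\hat g$, so a naive alternating in/out Dijkstra is hopelessly slow. I would run a pruned simultaneous in/out search in the spirit of the roundtrip-ball primitives of Pachocki et al.\ and Chechik et al., expanding a vertex $x$ (already settled at out-distance $d(v,x)$) only while the concurrently advancing reverse search has not yet certified $d(x,v)>\hat g-d(v,x)$, so that essentially only the vertices of $B^{\mathrm{rt}}(v,O(\hat g))$ are ever expanded; coupling this with a heavy/light degree split --- the $O(m/D)$ high-degree vertices each get one full pair of Dijkstras up front and are never expanded inside any ball growth --- is what keeps $\sum_v(\text{cost of }v\text{'s ball})$ within budget, and controlling that sum (together with its hop-bounded analogue in the weighted case) is the part requiring the most care.
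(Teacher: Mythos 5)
Your high-level dichotomy (hit long cycles by random sampling plus in/out Dijkstra, handle short cycles with per-vertex exploration, balance the two) is the same as the paper's, and the idea that a per-vertex exploration can be aborted once it grows past a size cap, with an abort then being caught by a second sample, is also in the same spirit as the paper's use of the set $V'_i$ hit by the sample $Q$. But the core technical step — the one you yourself flag as the crux — is where the proposal does not go through.

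The problem is how to make the cost of the per-vertex search proportional to the relevant small set. You want to grow $B^{\mathrm{rt}}(v,\hat g)$ and propose a pruned simultaneous in/out search: expand $x$ from the out-side only while the reverse search has not certified $d(x,v)>\hat g-d(v,x)$. But in a directed graph the reverse search cannot cheaply certify this: to rule out a short path from $x$ to $v$, the in-search from $v$ must first settle all vertices that could sit on such a path, and the in-ball $B_{\mathrm{in}}(v,\hat g)$ (and likewise $B_{\mathrm{out}}(v,\hat g)$) can be of size $\Theta(n)$ even when $B^{\mathrm{rt}}(v,\hat g)$ has constant size. So the simultaneous search still pays for the one-way balls, and the claim that ``essentially only the vertices of $B^{\mathrm{rt}}(v,O(\hat g))$ are ever expanded'' is not justified. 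The heavy/light degree split does not fix this: after the degree reduction (which the paper also performs), the troublesome explosion is in the number of \emph{settled} vertices, not in the degree of any one vertex. This is not a presentation gap but the central algorithmic difficulty; without a way around it the per-vertex phase does not run in the claimed time.

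The paper's route is genuinely different at exactly this point, and it is worth noting what replaces the roundtrip-ball growth. The paper does not attempt to compute $B^{\mathrm{rt}}(u,\hat g)$ at all. Instead it performs a pruned one-directional BFS/Dijkstra out of $u$, level by level, keeping at each distance level $j$ only a set $B'^j(u)$ of size $O(n^t)$. The correctness of the pruning rests on a structural inequality about cycles: if $x$ lies on a cycle $C$ of length at most $i$ through $u$ with $d(u,x)=j$, then for every $y$ with $d(u,y)\le j$ one has $d(x,y)\le d(x,u)+d(u,y)\le (|C|-j)+j\le i$. So vertices on the cycle at level $j$ are close (in one direction) to \emph{all} of $\bar B^j(u)$, which lets one identify them by testing distances to an $O(\log n)$-size random sample of $\bar B^j(u)$. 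Combined with the set-shrinking Lemma (if no sampled vertex $Q$ has a small roundtrip ball, then iterated sampling reduces each level set geometrically), this yields level sets of size $O(n^t)$ that provably contain the cycle, at cost $\tilde O(m n^{t-1})$ per level per vertex. Nothing in this argument is a bidirectional ball growth, and the one-directional inequality above is precisely what makes the pruning sound where a symmetric roundtrip-ball argument would not be. Without this observation (or an equivalent replacement), the small-girth phase of your proposal lacks a correct implementation, so the running-time claim does not follow.

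A smaller point: for the weighted case the paper does not use a vertex-count split or hop-bounded bridging sets; it geometrically buckets distances into $O(\log(Mn)/\eps)$ intervals and uses the same pruning idea with a modified Dijkstra. Because the number of intervals is polylogarithmic in $Mn$ rather than $\sqrt n$, the weighted algorithm in fact dispenses with the separate large-girth phase entirely and directly achieves $\tilde O(m\sqrt n\log(M)/\eps)$; the extra $\log M$ is then removed by a standard rescaling argument. Your hop-bounded/bridging-set scheme is an alternative route whose parameters you have not verified, but this is secondary to the missing ball-growth primitive discussed above.
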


If one wanted to obtain a $(4+\eps)$-approximation to the girth via Chechik et al.'s $O(k\log k)$ approximation algorithms, the best running time one would be able to achieve is $\tilde{O}(m\sqrt n)$. Here we show how to get an improved running time for a $(4+\eps)$ approximation.

\begin{theorem}
\label{thm:4approx}
For every $\eps>0$, there is a $(4+\eps)$-approximation algorithm for the girth in directed graphs with integer edge weights that runs in $\tilde{O}(mn^{\sqrt{2}-1} /\eps)$ time. The algorithm is randomized and correct whp.
\end{theorem}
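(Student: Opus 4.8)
The plan is to reduce, as is standard, to the decision problem: for a fixed guess $g$ of the girth, either output a cycle of length at most $(4+O(\eps))g$, or correctly report that there is no cycle of length at most $g$. Running this for all $g$ in a geometric sequence $1,(1+\eps),(1+\eps)^2,\dots,O(nM)$ and returning the shortest cycle ever found loses only a $(1+\eps)$ factor in the approximation ratio and a $O(\log(nM)/\eps)$ factor in the running time; the logarithmic part is absorbed into $\tilde O$ and the $1/\eps$ is the one appearing in the statement. For a fixed $g$, rounding each edge weight to a multiple of $\eps g/n$ keeps all relevant path lengths within a range of size $O(n/\eps)$, so every shortest-path computation below can be run as a bucket-based truncated Dijkstra search.

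For fixed $g$ I would use the same ``interleaved bidirectional ball-growing'' primitive that underlies Theorem~\ref{thm:2approxunw}: from a source $v$, run Dijkstra forward and backward from $v$ in lockstep up to radius $g$, always extending whichever of the two searches has so far scanned fewer edges, and halting as soon as the total number of scanned edges exceeds a budget $\tau$. If the search halts by exhausting both sides within budget, it has computed $B^{out}(v,g)$ and $B^{in}(v,g)$ in full, hence in particular every cycle through $v$ of length at most $g$, which can be read off directly. If it halts by exceeding $\tau$, then by the lockstep rule each of $B^{out}(v,g)$ and $B^{in}(v,g)$ has scanned about $\tau/2$ edges without closing, so $v$ sits inside a ``heavy'' region that a coarser random sample is likely to hit. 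The algorithm runs this primitive in a bounded number of tiers (concretely three): from all $n$ vertices with a small budget $\tau_1$; from a random subset $S_2$ with a larger budget $\tau_2$; and from a still sparser random subset $S_3$ with no budget (a full bidirectional Dijkstra, cost $\tilde O(m)$ each). The sizes of $S_2$ and $S_3$ are chosen just large enough to hit, whp, the vertex sets of the balls that overflow at the previous tier; since a ball that has scanned $\tau$ edges need only contain $\Omega(\sqrt\tau)$ vertices, the required sampling rate is governed by this edge-count--to--vertex-count conversion (and a separate handling of high-degree vertices / dense balls), which is what produces the unusual exponent.

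The correctness lemma is the crux. Fix a shortest cycle $C$ with $|C|=g^*\le g$ and trace the primitive started from a vertex $c\in C$. Either its tier-$1$ call terminates within budget, in which case it already exhibits a cycle of length $\le g^*$; or $B^{out}(c,g)$ (or $B^{in}(c,g)$) is heavy, so some $s_1\in S_2$ satisfies $d(c,s_1)\le g$ (resp. $d(s_1,c)\le g$). One then repeats the analysis for $s_1$'s tier-$2$ call: either it reveals a short cycle, or it overflows and some $s_2\in S_3$ lies within distance $g$ of $s_1$ in the relevant direction; the full bidirectional Dijkstra from $s_2$ then closes the loop. Concatenating the at most four shortest-path segments produced along this chain --- each of length at most $g$, with the last one a sub-path of $C$ of length at most $g^*$ --- yields a closed walk of length at most $4g$, hence a cycle of length at most $4g\le 4(1+\eps)g^*$. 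I expect this case analysis to be the main obstacle. The delicate point is that ``$B^{out}(c,g)$ is heavy'' only guarantees a sampled vertex that is forward-reachable from $c$, and such a vertex need not be able to reach $C$ back; so the radii used at each tier, the choice of which of $B^{out},B^{in}$ triggers the next sample, and the direction in which each sampled vertex must ``see'' its predecessor all have to be arranged so that every branch of the chain genuinely closes up and uses no more than four segments. (The weighted rounding contributes only lower-order additive error per segment, which is absorbed into the $\eps$.)

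Finally, substituting the sampling rates forced by correctness into the tier costs $\tilde O(n\tau_1)$, $\tilde O(|S_2|\,\tau_2)$, and $\tilde O(|S_3|\,m)$, and optimizing over the two free budgets $\tau_1,\tau_2$ (together with the high-degree threshold), yields the running time; the resulting balance equation has $\sqrt2-1$ as the relevant exponent, giving the claimed $\tilde O(mn^{\sqrt2-1}/\eps)$ bound. High-probability correctness follows from a union bound over the $O(\log(nM)/\eps)$ guesses of $g$ and the $O(\log n)$ hitting-set events inside each tier.
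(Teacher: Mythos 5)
Your proposal takes a genuinely different route from the paper, but the correctness argument has a gap that you yourself flag as ``the main obstacle'' and never close. The chain $c \to s_1 \to s_2$ with $s_1 \in B^{\mathrm{out}}(c,g)$ and $s_2 \in B^{\mathrm{out}}(s_1,g)$ gives $d(c,s_2)\le 2g$, but nothing bounds $d(s_2,c)$ or $d(s_2,v)$ for any $v$ on the short cycle $C$: the sampled vertex is forward-reachable from the cycle but need not have a bounded-length return path, so the full Dijkstra from $s_2$ has no short cycle it is guaranteed to find, and the ``at most four segments'' walk does not actually close up. Hitting the heavy ball with a random sample only establishes one-directional reachability; to get roundtrip control you need the sampled landmark to be \emph{both} reachable from and able to reach the relevant vertex within $O(g)$, and the interleaved-bidirectional budget by itself does not provide that (both $B^{\mathrm{out}}(c,g)$ and $B^{\mathrm{in}}(c,g)$ may be heavy and hit by the sample, yet the out-hit and the in-hit are different vertices).

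The paper resolves exactly this by (a) using the explicitly roundtrip-based set $V'_i = \{v : \exists q\in Q,\ d(v,q)\le \beta(1+\eps)^{i+1}\text{ and }d(q,v)\le \beta(1+\eps)^{i+1}\}$ so that ``a ball is heavy'' is replaced by ``$u$ is roundtrip-close to $Q$,'' directly yielding a $\le 2\beta(1+\eps)^{i+1}$ cycle; and (b) a structural lemma (Lemma~\ref{lemma:modifiedDijkstra} together with the inequality depicted in Figure~\ref{fig:aroundcyclewt}) showing that any vertex $x\in B^j(u)$ lying on a short cycle through $u$ has $d(x,y)\le O(g')$ for \emph{all} $y\in \bar B^j(u)$, which is what makes the sampled filters $R_i^j(u)$ sound. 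Moreover, the paper's exponent is driven by a different mechanism than your three fixed tiers: when the filtered ball around $u$ has size in $(n^\alpha, O(n^{1-\alpha})]$, the algorithm recurses once on the $O(n^{1-\alpha})$-vertex subgraph and \emph{marks all of the $\ge n^\alpha$ processed vertices off}, so the amortized per-vertex cost is $\tilde O\bigl(\tfrac{m}{n}\, n^{1-\alpha^2-\alpha}\bigr)$; balancing $1-\alpha^2-\alpha=\alpha$ yields $\alpha = \sqrt 2 - 1$. Your tier cost accounting $\tilde O(n\tau_1 + |S_2|\tau_2 + |S_3|m)$ does not encode this amortized mark-off and recursion, and it is not clear that the sampling-rate constraints you gesture at can be balanced to the same exponent without it. To salvage your approach you would need to (i) replace the one-sided heavy-ball hitting by a roundtrip criterion, (ii) prove a structural lemma that lets you prune balls to $O(n^{1-\alpha})$ vertices while keeping all cycle vertices, and (iii) add the mark-off amortization; at that point you would essentially have rederived the paper's algorithm.
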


In fact, we obtain a generalization of the above algorithms that improves upon the algorithms of Chechik et al. for all constants $k$. 

\begin{theorem}\label{thm:genk}
For every $\eps>0$ and integer $k\geq 1$, there is a $(2k+\eps)$-approximation algorithm for the girth in directed graphs with integer edge weights that runs in $\tilde{O}(mn^{\alpha_k} /\eps)$ time, where $\alpha_k>0$ is the solution to $\alpha_k(1+\alpha_k)^{k-1}=1-\alpha_k$. The algorithms are randomized and correct whp.
\end{theorem}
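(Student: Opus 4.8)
The plan is to generalize the two specific algorithms (Theorems \ref{thm:2approxunw} and \ref{thm:4approx}) via a $k$-level recursive sampling scheme. The basic template is the standard roundtrip-distance approach: to approximate the girth $g$, we guess a value $\rho$ (by binary search / scaling over $O(\log(nM)/\eps)$ geometrically spaced values, which accounts for the $1/\eps$ and the additive $\eps$ in the approximation factor), and we try to certify the existence of a cycle of length $O(k\rho)$ or conclude that $g > \rho$. Following Pachocki et al.\ and Chechik et al., the key subroutine is, from a source vertex $v$, to grow a roundtrip ball: run Dijkstra/BFS out of $v$ and into $v$ up to radius $\Theta(\rho)$, and if some vertex lies in both the out-ball and in-ball we have found a short cycle. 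The cost of one such ball computation is proportional to the number of edges it touches, so the whole scheme hinges on controlling, via random sampling, how many vertices we run this from and how large the balls are allowed to grow.

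The heart of the argument is a recursive hitting-set / ball-growing recurrence with $k$ levels, which is what produces the exponent $\alpha_k$. At level $i$ we sample a set $S_i$ of vertices with probability $p_i$; from each sampled vertex we grow a roundtrip ball, stopping either when it closes a cycle or when it has absorbed more than some threshold $t_i$ of vertices (in which case, whp, it contains a vertex of the next sampled set $S_{i+1}$, because $p_{i+1} t_i \gtrsim \log n$). Running the ball from a vertex in $S_{i+1}$ then covers a strictly larger radius, and iterating $k$ times either certifies a cycle of length $O(k\rho)$ — giving a $(2k+\eps)$-approximation after accounting for the triangle-inequality blow-up inherent in roundtrip distances and the scaling step — or establishes that the girth exceeds $\rho$. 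Setting the per-level costs equal and balancing $|S_i| \cdot (\text{ball size at level } i)$ across all levels forces the relations $p_i \approx n^{-\text{(something)}}$ and thresholds $t_i \approx n^{\text{(something)}}$ whose product telescopes to the stated constraint $\alpha_k(1+\alpha_k)^{k-1} = 1-\alpha_k$; I would verify this by writing the recurrence as $T_i \sim n^{a_i} m$ and checking $a_i = a_{i+1}$ gives exactly $\alpha_k$. One should double-check the base cases: $k=1$ should recover $\alpha_1 = 1/2$ (matching the $m\sqrt n$ result, hence the $(2+\eps)$-approximation) and $k=2$ should give $\alpha_2 = \sqrt 2 - 1$ (matching Theorem \ref{thm:4approx}), which is a good sanity check on the exponent algebra.

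The correctness argument decomposes along a shortest cycle $C$ of length $g$. Fix $\rho$ to be (within a $(1+\eps)$ factor) equal to $g$. If $C$ contains a sampled vertex from the first level, growing the ball from it immediately finds a cycle of length $\le 2g$ (the roundtrip distance between any two vertices of $C$ is at most $g$). If not, then $C$ is "small" in the relevant sense, and one argues that any ball that grows large must eventually reach a sampled vertex, so the uncovered shortest cycles are confined to low-degree / low-volume regions handled by the next recursion level — this is the same win/win structure as in the $k=1,2$ cases, just iterated. The main obstacle I anticipate is making this win/win airtight across all $k$ levels simultaneously: one must ensure that the failure event ("ball grew past threshold but missed every sampled vertex of the next level") has probability $o(1/\text{poly})$ at every level and for every starting vertex, so a union bound over all $O(n)$ ball computations and all $O(\log(nM)/\eps)$ guesses for $\rho$ still gives high probability overall; and one must carefully track that the cycle length certified at level $k$ is $O(k\rho)$ with a clean constant, since sloppiness there degrades the approximation factor beyond $2k+\eps$. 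The weighted case adds the usual subtlety that Dijkstra balls are defined by distance rather than hop-count, so "ball size" must be measured in vertices (or in explored edges) rather than radius, and the scaling over $\rho$ must cover the full weight range; these are routine but need to be stated precisely.
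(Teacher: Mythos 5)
Your proposal correctly identifies the flavor of the argument (geometric guesses for the girth, random sampling, ball-growing, a $k$-level scheme whose exponent balance gives $\alpha_k$, and the sanity checks $\alpha_1=1/2$ and $\alpha_2=\sqrt{2}-1$), but the mechanism you describe is a different one that, as stated, would not yield a $(2k+\eps)$-approximation. You propose $k$ \emph{independent} sampled sets $S_1,\dots,S_k$ at decreasing density and argue that a ball from level $i$ that overflows a threshold $t_i$ must hit $S_{i+1}$ whp; this is essentially the Pachocki et al.\ / Chechik et al.\ framework, whose natural analysis gives an approximation factor around $O(k\log k)$ rather than $2k$, because the certified radius grows multiplicatively through the levels rather than additively. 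The paper does not use stacked sample densities at all: it uses a \emph{single} sample $Q$ of $\tilde{O}(n^\alpha)$ vertices, and the key sampling lemma (Lemma~\ref{lemma:modifiedDijkstra}) to construct, for every center $u$ and every distance scale $j$, a set $R_i^j(u)$ of size $O(\log^2 n)$ with the property that only $O(n^{1-\alpha})$ vertices at scale $j$ survive the ``close to all of $R_i^j(u)$'' filter. It is this filter, with the parameter $\beta=k+O(k^2\eps)$ controlling the radius slack, that bounds ball sizes for \emph{every} vertex $u$, not a hitting argument against a coarser next-level sample.

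The second structural piece you are missing is the nested-ball recursion around a single center. For each vertex $u$ one computes truncated Dijkstra balls $S_i^1(u)\subseteq\cdots\subseteq S_i^k(u)$ at radii $(2l-1)g'/2$, with the invariant (Lemma~\ref{lemma:klevelsets}) that any $\le g'$ cycle containing a vertex of $S_i^{l}(u)$ and no ``off'' vertex lies entirely in $S_i^{l+1}(u)$. Because $|S_i^1(u)|> n^\alpha$ (otherwise $u$ is processed directly) and $|S_i^k(u)|\le O(n^{1-\alpha})$ (by the sampling lemma), a pigeonhole argument (Lemma~\ref{lemma:alphaformula}) produces an $l<k$ with $|S_i^{l+1}(u)|\le c\,(|S_i^l(u)|\cdot n^\alpha)^{1/(1+\alpha)}$; the algorithm then recurses on the induced subgraph $G[S_i^{l+1}(u)]$ and marks all of $S_i^l(u)$ ``off,'' charging the recursion cost $\tilde{O}((m/n)\,|S_i^{l+1}(u)|^{1+\alpha})$ to the $|S_i^l(u)|$ newly retired vertices. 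It is exactly this pigeonhole over nested balls — not a telescoping balance of independent per-level sample densities — that produces the equation $\alpha(1+\alpha)^{k-1}=1-\alpha$, and the mark-off/recurse-on-induced-subgraph bookkeeping is what yields the stated running time. Without these two ingredients your sketch both over-counts work (each vertex could be touched at every level) and cannot tightly certify that the cycle found has length at most $(2k+O(\eps))\cdot g$ rather than $O(k\log k)\cdot g$.
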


For example, let's consider $\alpha_1$ in the above theorem. 
It is the solution to $\alpha_1=1-\alpha_1$, giving $\alpha_1=1/2$ and recovering the result of Theorem~\ref{thm:2approxunw} for weighted graphs. On the other hand, $\alpha_2$ is the solution to $\alpha_2(1+\alpha_2)=1-\alpha_2$, which gives $\alpha_2 = \sqrt{2}-1$ and recovering Theorem~\ref{thm:4approx}. Finally, say we wanted to get a $6+\eps$ approximation, then we need $\alpha_3$, which is the solution to $\alpha_3(1+\alpha_3)^2=1-\alpha_3$, giving $\alpha_3\leq 0.354$, and thus there's an $\tilde{O}(mn^{0.354} /\eps)$ time $(6+\eps)$-approximation algorithm. Note that there is only one positive solution to the equation defining $\alpha_k$ in Theorem \ref{thm:genk}.

As $k$ grows, $\alpha_k$ grows as $\Theta(\log k / k)$, and so the algorithm from Theorem~\ref{thm:genk} has similar asymptotic guarantees as the algorithm of Chechik et al. as it achieves an $O(\ell\log \ell)$ approximation in $\tilde{O}(mn^{1/\ell})$ time. The main improvements lie in the improved running time for small constant approximation factors.

%

Our approximation algorithms on weighted graphs can be found in section \ref{sec:weighted}. If we are aiming for an algorithm running in $T(n,m)$ time, we first suppose that the maximum edge weight of the graph is $M$ and we obtain an algorithm in $T(n,m)\log{M}$ time. We then show how to remove the $\log{M}$ factor at the end of section \ref{sec:weighted}. 

\noindent {\bf Roundtrip Spanners.}
Both papers that achieved nontrivial combinatorial approximation algorithms for the directed girth were also powerful enough to compute sparse approximate roundtrip spanners. 

A $c$-approximate roundtrip spanner of a directed graph $G=(V,E)$ is a subgraph $H=(V,E')$ of $G$ such that for every $u,v\in V$, $d_H(u,v)+d_H(v,u)\leq c\cdot (d_G(u,v)+d_G(v,u))$. 
Similar to what is known for spanners in undirected graphs, it is known \cite{duanroundtrip} that for every integer $k\geq 2$ and every $n$, every $n$-node graph contains a $(2k-1+o(1))$-approximate roundtrip spanner on $O(kn^{1+1/k}\log n)$ edges; the $o(1)$ error can be removed if the edge weights are at most polynomial in $n$ and the result then is optimal, up to log factors under the Erd\"os girth conjecture.

The best algorithms to date for computing sparse roundtrip spanners, similarly to the girth, achieve an $O(k\log k)$ approximation in $\tilde{O}(m^{1+1/k})$ time \cite{improvedgirthSTOC}. The best constant factor approximation achieved for roundtrip spanners in $O(mn^{1-\eps})$ time for $\eps>0$ is again achieved by Chechik et al.: a $(8+\eps)$ approximate  $O(n^{1.5})$-edge (in expectation) roundtrip spanner can be computed in $\tilde{O}(m\sqrt n)$ expected time. We improve this latter result:

\begin{theorem}
\label{thn:weighted_spanner}
There is an $\tilde{O}(m\sqrt n \log^2(M)/\eps^2)$ time randomized algorithm that computes a $(5+\eps)$-approximate roundtrip spanner on $\tilde{O}(n^{1.5}\log^2(M) /\eps^2)$ edges whp, for any $n$-node $m$-edge directed graph with edge weights in $\{1,\ldots,M\}$.
\end{theorem}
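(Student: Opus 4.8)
The plan is to leverage the $(2+\eps)$-approximation girth machinery from Theorem~\ref{thm:2approxunw} (the $\tilde O(m\sqrt n/\eps)$ weighted version), together with the standard ball-growing / roundtrip-cover paradigm used by Chechik et al. and by Pachocki et al. First, since the edge weights lie in $\{1,\dots,M\}$, roundtrip distances lie in the range $[1, nM]$, so I would guess the ``scale'' $r$ of the shortest roundtrip distance we care about by trying all powers of $(1+\eps)$ in $[1,nM]$; this costs only a $\log_{1+\eps}(nM) = \tilde O(\log(M)/\eps)$ multiplicative overhead, which is absorbed into the claimed bound. Fix such a scale $r$. The core subroutine is a \emph{roundtrip cover at scale $r$}: a collection of subsets (balls) $B_1,B_2,\dots$ of $V$ such that (i) every pair $u,v$ with roundtrip distance $\rd{u}{v}\le r$ is contained together in some $B_i$, (ii) each $B_i$ has small roundtrip diameter (ideally $O(r)$, but we will allow $(5+\eps)r$ after contributions from the two ``sides'' of the ball), and (iii) each vertex lies in only $\tilde O(1)$ balls (low overlap). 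Given such a cover, for each $B_i$ I would add to $H$ the edges of an in-shortest-path tree and an out-shortest-path tree from a chosen center $c_i$ of $B_i$ (restricted to $G[B_i]$); this contributes $O(|B_i|)$ edges per ball, hence $\tilde O(n)$ edges per scale and $\tilde O(n^{1.5}\log^2 M/\eps^2)$ overall once we account for the cover's overlap factor and the number of scales — matching the target edge bound.

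For the construction of the cover at scale $r$, I would use the sampling idea behind the girth algorithm: pick a random set $S$ of $\tilde O(\sqrt n)$ vertices (or grow balls from random centers), and for each $s\in S$ grow the in-ball and out-ball of $s$ up to radius roughly $\tfrac12(1+\eps)r$, using Dijkstra truncated when the ball exceeds $\tilde O(\sqrt n)$ vertices; the ``heavy'' vertices (those whose balls are large) get recursed on via the standard hitting-set argument, so that whp every relevant pair is separated correctly and the total work is $\tilde O(m\sqrt n)$ per scale. The key point for the improved $5+\eps$ factor (versus the $8+\eps$ of Chechik et al.) is the distance accounting: if $u,v$ both land in a ball centered at $c$ with $\rd u v\le r$, then $d_H(u,c)+d_H(c,u)$ and $d_H(v,c)+d_H(c,v)$ are each bounded by the ball radius, and combining the two trees gives a roundtrip path $u\to c\to v\to c\to u$ in $H$ of total length at most $2\cdot(\text{radius of }B_c)$, so choosing the radius as $(\tfrac52+O(\eps))r$ — achievable because we only need to ensure $u,v$ are \emph{co-located} in some ball, not that the ball is centered near the geodesic — yields the $(5+\eps)$ guarantee. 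I would carry out the bookkeeping so that the $\eps$-losses from the geometric scale guessing, from the $(2+\eps)$-approximate girth subroutine used to decide which scale is active, and from the ball-radius slack all combine into a single $(1+O(\eps))$ factor, then rescale $\eps$.

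The main obstacle I anticipate is simultaneously controlling the \emph{overlap} of the balls (needed for the $\tilde O(n^{1.5})$ edge bound and the near-linear-per-scale running time) and the \emph{diameter} of the balls (needed for the $5+\eps$ stretch): the naive ball-growing can produce heavily overlapping or oversized balls on adversarial weighted instances, so I would need the recursive hitting-set / random-center argument (à la Pachocki–Roditty–Sidford–Tov–Woodruff and Chechik et al.) to guarantee that after removing a random $\tilde O(\sqrt n)$-sized set the remaining balls are small, and iterate $O(\log n)$ times. A secondary technical point is removing the dependence on $M$ in the number of distinct scales — but as the excerpt notes for the girth algorithms (see the remark that the $\log M$ factor is removed at the end of Section~\ref{sec:weighted}), the same scaling/weight-bucketing trick applies here, leaving only the $\log^2 M$ that already appears in the statement. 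I would present the cover construction as a standalone lemma, prove its three properties, and then derive Theorem~\ref{thn:weighted_spanner} as a short corollary by summing over scales.
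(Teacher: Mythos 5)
Your proposal takes a genuinely different route from the paper and, as written, has a gap at the crux. You propose a roundtrip-cover construction: balls of roundtrip radius $\approx(\tfrac52+O(\eps))r$ with $\tilde O(\sqrt n)$ overlap, then add in/out shortest-path trees from each ball center and analyze the stretch via the path $u\to c\to v\to c\to u$. The difficulty is that your entire improvement over the $8+\eps$ of Chechik et al.\ rests on the claim that a roundtrip cover with radius $(\tfrac52+O(\eps))r$ (rather than $\approx 4r$) and $\tilde O(\sqrt n)$ overlap can be constructed in $\tilde O(m\sqrt n)$ time, and you give no argument for this — you assert it is ``achievable because we only need to ensure $u,v$ are co-located in some ball,'' but that is precisely the property the prior cover constructions already guarantee, and they still only get radius $\approx 4r$. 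Shrinking the ball radius is not a free parameter: smaller radius at fixed overlap is exactly what is hard, and if it were a matter of tuning, the prior work would have done it. So the stretch bound $5+\eps$ is unsubstantiated in your write-up.

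The paper avoids the cover paradigm entirely. Its stretch argument is a \emph{path-hitting} bound (Lemma~\ref{lemma:hitpath}): if the shortest $u\to v$ path contains any vertex $x$ that is within roundtrip distance $2d$ of a sampled center $q$, then detouring $u\rightsquigarrow x\to q\to x\rightsquigarrow v$ through $q$'s in- and out-trees gives $d_H(u,v)\le d(u,v)+2d$, an \emph{additive} penalty of $2d$ per direction. Setting $d=(1+\eps)^{i+2}$ and $(1+\eps)^i\le\rd{u}{v}$, this yields $\rdh{H}{u}{v}\le\rd{u}{v}+4(1+\eps)^{i+2}\le(5+O(\eps))\rd{u}{v}$ — note the detour vertex $x$ sits \emph{on the geodesic}, so there is no need for $u$ and $v$ to be jointly contained in a ball of bounded radius around $q$; this is where the factor $5$ (rather than $8$) comes from. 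The remaining pairs whose $u\to v$ shortest path misses every such hittable vertex are handled \emph{exactly} by the modified Dijkstra of Lemma~\ref{lemma:moddijk}, using the level-pruning sets from Lemma~\ref{lemma:modifiedDijkstra}, and the exact trees are added to $H$. Your sketch does invoke the girth machinery and the hitting-set idea, so you have some of the ingredients, but you route them through a cover whose parameters you cannot justify rather than through the additive path-hitting analysis that actually closes the argument.
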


\section{Preliminary Lemmas}
We begin with some preliminary lemmas. The first two will allow us to decrease all degrees to roughly $m/n$, while keeping the number of vertices and edges roughly the same. The last lemma, implicit in \cite{improvedgirtharx}, is a crucial ingredient in our algorithms.

The following lemma was proven by Chechik et al. \cite{improvedgirtharx}:

\begin{lemma}\label{lemma:zerowt}
Given a directed graph $G=(V,E)$ with $|V|=n,|E|=m$, we can in $O(m+n)$ time construct a graph $G'=(V',E')$ with $V\subseteq V'$, so that $|V'|\leq O(n)$, $|E'|\leq O(m+n)$, for every $v\in V'$, $deg(v)\leq \lceil m/n\rceil$,
and so that for every $u,v\in V$, $d_{G'}(u,v)=d_G(u,v)$, and so that any path $p$ between some nodes $u\in V$ and $v\in V$ in $G'$ (possibly $u=v$) is in one-to-one correspondence with a path in $G$ of the same length.
\end{lemma}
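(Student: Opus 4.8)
The plan is to reduce all degrees by a vertex‑splitting gadget built out of weight‑$0$ edges, designed so that every walk of $G$ through a vertex $v$ is forced through a single ``choke'' node of $v$'s gadget. Set $b=\lceil m/n\rceil$ and assume $b\ge 3$ (when $m=O(n)$ one uses a constant‑degree version of the same gadget; the size bounds below are unaffected, the degree then being an absolute constant). For each $v\in V$ list the edges of $G$ entering $v$ as $e^v_1,\dots,e^v_p$ (with $p=\deg^-(v)$) and the edges leaving $v$ as $f^v_1,\dots,f^v_q$ ($q=\deg^+(v)$). Replace $v$ by: an \emph{in‑router}, an in‑oriented $(b{-}1)$‑ary tree with $p$ leaves and all edges of weight $0$, whose $p$ leaves are \emph{entry ports} and whose root is $v_{\mathrm{in}}$; an \emph{out‑router}, an out‑oriented $(b{-}1)$‑ary tree with $q$ leaves, weight‑$0$ edges, root $v_{\mathrm{out}}$, leaves the \emph{exit ports}; and a node $v^{*}$ (which we identify with $v$, so that $V\subseteq V'$) together with the weight‑$0$ edges $v_{\mathrm{in}}\to v^{*}\to v_{\mathrm{out}}$. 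Finally reroute each original edge $e=(x,y)$ of $G$ so that it goes from an unused exit port of $x$'s out‑router to an unused entry port of $y$'s in‑router, keeping its original weight.

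\textbf{Size, degree, running time.} Each leaf has degree $2$ (its tree‑parent plus its one rerouted original edge), each non‑root internal tree node has degree $\le (b{-}1)+1=b$, each root has $\le b{-}1$ children plus the one edge to/from $v^{*}$, hence degree $\le b=\lceil m/n\rceil$, and $v^{*}$ has degree $2$; as the old vertices carry no original edge in $G'$, every node of $V'$ has degree $\le\lceil m/n\rceil$ (for $b\ge 3$, so $b\ge 2$). A $(b{-}1)$‑ary tree with $\ell$ leaves has $O(1+\ell/b)$ nodes and edges, so $v$'s gadget has $O\!\big(1+(\deg^-(v)+\deg^+(v))/b\big)$ new nodes and weight‑$0$ edges; summing, $\sum_v O(1+\deg(v)/b)=O(n+m/b)=O(n)$ since $b\ge m/n$. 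Together with the $m$ rerouted original edges this gives $|V'|=O(n)$ and $|E'|=O(m+n)$, and everything is produced in one scan of the adjacency lists in $O(m+n)$ time.

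\textbf{Correctness.} Each gadget is a DAG, and the crucial property is: every directed weight‑$0$ path inside $v$'s gadget from an entry port to an exit port has the form (entry port)$\to\cdots\to v_{\mathrm{in}}\to v^{*}\to v_{\mathrm{out}}\to\cdots\to$(exit port), and for a fixed pair of ports this path is \emph{unique} (tree paths are unique). Hence any walk $W$ in $G'$ between the copies of $u,v\in V$ factors uniquely as alternating weight‑$0$ gadget‑segments and original edges; contracting each gadget‑segment yields a walk $\phi(W)$ in $G$ on exactly the same ordered sequence of original edges, of the same total length (gadget edges contribute $0$), and $\phi$ is a bijection onto walks $u\rightsquigarrow v$ in $G$ — its inverse splits each $G$‑edge at the appropriate ports and inserts the unique connecting gadget‑segments. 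Moreover, if $W$ is a \emph{simple} path then $v^{*}$ is visited at most once for each $v$ (twice, at its ends, when $u=v$ is that vertex), so $W$ uses at most one entry$\to$exit traversal of each gadget and $\phi(W)$ is a simple path in $G$ (a cycle through $u$ when $u=v$); conversely the image of a simple $G$‑path is a simple $G'$‑path. Thus $\phi$ restricts to the claimed length‑preserving one‑to‑one correspondence between paths $u\rightsquigarrow v$ in $G'$ and in $G$, which forces $d_{G'}(u,v)=d_G(u,v)$ for all $u,v\in V$, including $u=v$.

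\textbf{Main obstacle.} The one genuine tension is satisfying the two size bounds simultaneously: a router handling $\deg^-(v)$ incoming edges must have $\Theta(\deg^-(v))$ ports, so keeping $|V'|=O(n)$ rules out a constant branching factor and demands bundling $\Theta(m/n)$ edges per router node — which is exactly why the stated degree bound is $\lceil m/n\rceil$ rather than $O(1)$. The only other points needing care are the choke‑node argument that \emph{simple} $G'$‑paths correspond to \emph{simple} $G$‑paths (not merely to walks), which is what the ``one‑to‑one correspondence'' in the statement requires, and the routine handling of the sparse regime $m=O(n)$ where $\lceil m/n\rceil$ is a small constant.
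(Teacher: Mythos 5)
The paper does not actually prove this lemma — it is imported verbatim from Chechik et al.\ \cite{improvedgirtharx} — so there is no in-paper argument to compare against; I am evaluating your proposal on its own terms.

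Your overall architecture (replace $v$ by an in-tree and an out-tree, joined at a single choke node $v^{*}$, with weight-$0$ gadget edges and rerouted original edges) is the right kind of construction, and the correctness part is sound: the gadgets are DAGs, every entry-to-exit traversal passes through $v^{*}$, and contracting gadget segments gives a length-preserving bijection between $u\rightsquigarrow v$ paths in $G'$ and in $G$.

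However, the size analysis contains a genuine error that, as written, breaks the $|V'|=O(n)$ bound. You state that ``a $(b{-}1)$-ary tree with $\ell$ leaves has $O(1+\ell/b)$ nodes and edges,'' but any tree with $\ell$ leaves has at least $\ell$ nodes. In your construction the in-router explicitly has $p=\deg^-(v)$ leaves (one entry port per incoming original edge) and the out-router has $\deg^+(v)$ leaves, so $v$'s gadget contributes $\Theta(\deg(v))$ new nodes, and summing over $v$ gives $|V'|=\Theta(m+n)$, not $O(n)$. This is not a cosmetic slip: the whole point of the degree bound $\lceil m/n\rceil$ is to allow each new node to absorb $\Theta(m/n)$ incident edges, and your ports absorb only one.

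The fix is local but necessary: let each entry port (leaf of the in-router) receive up to $b-1$ rerouted original edges directly, so it has in-degree $\leq b-1$ from $G$-edges plus one tree edge to its parent, total degree $\leq b$. Then the in-router needs only $\lceil \deg^-(v)/(b-1)\rceil$ leaves, hence $O(1+\deg^-(v)/b)$ nodes in all, and summing gives $O(n+m/b)=O(n)$ as required; the same change applies to exit ports. With that correction your construction meets all the stated bounds. (You flagged the sparse regime $m=O(n)$; there $\lceil m/n\rceil$ can be as small as $1$ or $2$ and the lemma's degree bound is vacuously tight or even unachievable as literally stated, but that is an imprecision in the cited lemma itself, not in your argument.)
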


The proof of the above lemma introduces edges of weight $0$, even if the graph was originally unweighted. 
In the lemma below which is proved in the appendix, we show how for an unweighted graph we can achieve essentially the same goal, but without adding weighted edges. This turns out to be useful for our unweighted girth approximation. 

\begin{lemma}
Given a directed unweighted graph $G=(V,E)$ and $|V|=n,|E|=m$, we can in $\tilde{O}(m+n)$ time construct an unweighted graph $G'=(V',E')$ with $V\subseteq V'$, so that $|V'|\leq O(n\log n)$, $|E'|\leq O(m+n\log n)$, for every $v\in V'$ $\text{out-deg}(v)\leq \lceil m/n\rceil$, and so that there is an integer $t$ such that for every $u,v\in V$, $d_{G'}(u,v)=t\cdot d_G(u,v)$, and so that any path $p$ between some nodes $u\in V$ and $v\in V$ in $G'$ (possibly $u=v$) is in one-to-one correspondence with a path in $G$ of length $1/t$ of the length of $p$.\label{lemma:unwtd}
\end{lemma}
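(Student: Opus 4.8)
The plan is to mimic the degree-reduction gadget of Lemma~\ref{lemma:zerowt}, but replace each zero-weight edge by a path of uniform length so that the graph stays unweighted. Concretely, for each vertex $v$ with out-degree $d_v > \lceil m/n \rceil$, I would build a balanced out-branching (a ``broadcast tree'') rooted at $v$ whose leaves are the original out-neighbors of $v$: the root and all internal nodes get out-degree at most $\lceil m/n \rceil$, the tree has depth $O(\log_{\lceil m/n\rceil} d_v) = O(\log n)$, and every root-to-leaf path has the \emph{same} length (pad shorter branches with dummy degree-$1$ vertices so all leaves sit at the same depth $L_v$). The issue is that different vertices $v$ may need trees of different depths $L_v$; to make the blow-up a single global constant $t$, I would fix $t := \max_v L_v = O(\log n)$ and pad \emph{every} tree (including trivial ones for low-degree vertices, which is just a path of length $t$) to depth exactly $t$. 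Replacing each original edge $(v,w)$ by the branch from the root of $v$'s tree down to the leaf corresponding to $w$ then multiplies the length of every original path by exactly $t$.

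The key steps, in order: (1) For each $v$, let $d_v$ be its out-degree; introduce $O(d_v)$ new internal vertices forming a tree of arity $\lceil m/n \rceil$ with the $d_v$ out-neighbors of $v$ as leaves, all at depth $t$ (padding as above). Redirect each edge $(v,w)\in E$ to start at the appropriate leaf-parent of the tree; the in-edges of $v$ are untouched, so $\text{in-deg}$ is unchanged and $\text{out-deg}$ of every new vertex is at most $\lceil m/n\rceil$. (2) Count: each original edge contributes at most $t = O(\log n)$ new vertices and edges along its branch, and each vertex's tree has $O(d_v + t)$ nodes, so $|V'| \le O(n\log n)$ and $|E'| \le O(m + n\log n)$ after summing; construction is a single pass, $\tilde O(m+n)$ time. (3) Distance preservation: any $u$-$v$ walk in $G$ using $\ell$ edges corresponds to a walk in $G'$ that traverses $\ell$ branches, each of length exactly $t$, so it has length $t\ell$; conversely any $u$-$v$ walk in $G'$ must enter and leave each tree only through root/leaves (the trees are ``one-way'' since internal edges point away from the root and each leaf has a unique outgoing original edge), so it decomposes uniquely into $\ell$ branch-traversals and projects back to a $G$-walk of length $\ell = (\text{length in }G')/t$. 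Applying this to shortest paths and to cycles (the $u=v$ case) gives $d_{G'}(u,v) = t\cdot d_G(u,v)$ and the claimed one-to-one path correspondence.

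The step that needs the most care is (3), specifically arguing that a shortest path in $G'$ cannot ``cheat'' by entering a broadcast tree other than at its root or leaving other than at a leaf, and cannot gain anything by routing through a tree whose leaf it doesn't need. This follows because every new internal vertex has a unique in-edge (it is not a root and not an original vertex) except at the branching points — so I should instead argue structurally: contracting each broadcast tree back to its root vertex $v$ is exactly the inverse operation, every edge of $G'$ lies on a unique branch, and a walk in $G'$ induces a walk in the contraction whose edges are in bijection with the branches used; since the contraction is $G$ itself, the projection is well-defined and length-scaling by $t$ is immediate. The rest is bookkeeping. Because this is essentially the proof of Lemma~\ref{lemma:zerowt} with ``weight-$0$ edge'' replaced by ``length-$t$ padded branch'', I would present the full details in the appendix as the statement indicates.
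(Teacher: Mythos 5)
Your construction is the same as the paper's: build a $q$-ary out-tree (with $q = \max\{2, \lceil m/n\rceil\}$) over each vertex's out-neighbors, level its leaves, and pad every tree at the root by a short directed path so that all trees reach a single global depth $t = \lceil\log_q n\rceil$; then every original edge becomes a unique length-$t$ branch, distances and cycles scale by $t$, and the contraction argument gives the one-to-one path correspondence. The one step you should tighten is the vertex count. The per-edge estimate you mention (``each edge contributes at most $t$ new vertices'') would give $O(m\log n)$, and the per-tree estimate $O(d_v + t)$ would give $O(m+n\log n)$ after summing --- neither yields the stated $|V'| = O(n\log n)$. What saves the bound is that a $q$-ary tree over $d_v$ out-neighbors has only $O(d_v/q) = O(n\, d_v/m)$ \emph{new internal} nodes (the $d_v$ ``leaves'' are the original out-neighbors, already in $V$), so the per-vertex auxiliary count is $O(d_v/q + t)$, and $\sum_v O(n\, d_v/m + \log n) = O(n\log n)$. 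The paper isolates exactly this bookkeeping into a separate statement (Claim~\ref{claim:tree}): for every $L$ there is a rooted out-tree on at most $3L$ nodes with $L$ leaves, all at depth $\lceil\log_q L\rceil$ and out-degree at most $q$; applying it with $L = \lceil d_v/q\rceil$ and hanging the original out-neighbors as children of its leaves gives the $O(d_v/q)$ count and the uniform leaf depth simultaneously.
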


In particular, the lemma will imply that the girth of $G'$ is exactly $t$ times the girth of $G$, and that given a $c$-roundtrip spanner of $G'$, one can in 
$\tilde{O}(m+n)$ time obtain from it a $c$-roundtrip spanner of $G$. We note that it is easy to obtain the same result but where both the in- and out-degrees are $O(m/n)$ (see the proof in the appendix).

Now we can assume that the degree of each node is no more than $O(m/n)$. This will allow us for instance to run Dijkstra's algorithm or BFS from a vertex within a neighborhood of $w$ nodes in $\tilde{O}(mw/n)$ time.

Another assumption we can make without loss of generality is that our given graph $G$ is strongly connected. In linear time we can compute the strongly connected components and then run any algorithm on each component separately. We know that any two vertices in different components have infinite roundtrip distance.

A final lemma (implicit in \cite{improvedgirtharx}) will be very important for our algorithms:

\begin{lemma}
Let $G=(V,E)$ be a directed graph with $|V|=n$ and integer edge weights in $\{1,\ldots,M\}$.
Let $S\subseteq V$ with $|S|>c \log n$ (for $c\geq 100/\log(10/9)$) and let $d$ be a positive integer.
Let $R$ be a random sample of $c\log n$ nodes of $S$ and define
$S':=\{s\in S~|~d(s,r)\leq d,~\forall r\in R\}.$
Suppose that for every $s\in S$ there are at most $0.2 |S|$ nodes $v\in V$ so that $d(s,v),d(v,s)\leq d$.
Then $|S'|\leq 0.8 |S|$.
\label{lemma:setreduce}
\end{lemma}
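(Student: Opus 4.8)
The plan is to show that the only way the bad event $|S'|>0.8|S|$ can occur is if the random sample $R$ happens to land entirely inside a provably small subset $L\subseteq S$, and then to bound the probability of that containment by an elementary calculation. Define the set of ``popular destinations''
\[
L \;:=\; \bigl\{\, r\in S \ :\ \bigl|\{\, s\in S : d(s,r)\le d \,\}\bigr| \;>\; 0.8\,|S| \,\bigr\},
\]
that is, the nodes of $S$ that are reached within distance $d$ from more than $0.8|S|$ nodes of $S$. The first step is a purely deterministic observation: if $|S'|>0.8|S|$ then $R\subseteq L$. Indeed, by definition of $S'$ every $s\in S'$ satisfies $d(s,r)\le d$ for all $r\in R$, so each $r\in R$ is reached within distance $d$ from every element of $S'$, a set of size more than $0.8|S|$, and hence $r\in L$.

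The heart of the argument is the second step: bounding $|L|$, and this is the only place the hypothesis of the lemma is used. Fix $r\in L$. Specializing the hypothesis to the node $r$ and letting $v$ range over $S$, at most $0.2|S|$ nodes $s\in S$ satisfy \emph{both} $d(r,s)\le d$ and $d(s,r)\le d$; and by the definition of $L$, fewer than $|S|-0.8|S|=0.2|S|$ nodes $s\in S$ satisfy $d(s,r)>d$. Splitting $\{\,s\in S : d(r,s)\le d\,\}$ according to whether $d(s,r)\le d$, we conclude that fewer than $0.4|S|$ nodes $s\in S$ have $d(r,s)\le d$. Now I would double-count $N:=|\{(x,y)\in S\times S : d(x,y)\le d\}|$ in two ways. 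Summing over the second coordinate and retaining only $y\in L$ gives $N>0.8|S|\cdot|L|$. Summing over the first coordinate, using the bound just obtained for $x\in L$ and the trivial bound $|S|$ for $x\in S\setminus L$, gives $N<0.4|S|\cdot|L|+|S|\cdot(|S|-|L|)$. Comparing the two estimates yields $1.4\,|L|<|S|$, hence $|L|<|S|/1.4<0.9\,|S|$.

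To finish, since $R$ consists of $c\log n$ uniform samples from $S$ (with or without replacement; in either case $\Pr[R\subseteq L]\le (|L|/|S|)^{c\log n}$), the first step gives $\Pr[|S'|>0.8|S|]\le\Pr[R\subseteq L]<0.9^{\,c\log n}=n^{c\log(9/10)}=n^{-c\log(10/9)}\le n^{-100}$ by the choice $c\ge 100/\log(10/9)$. Thus $|S'|\le 0.8|S|$ with high probability, which is the assertion (read as a high-probability statement, as it must be since $S'$ depends on the random $R$). The one step I expect to need care is the bound on $|L|$: the key realization is that for a popular destination $r$ the ``mutual $d$-ball'' hypothesis forces the \emph{forward} $d$-ball of $r$ inside $S$ to be small, after which an averaging/double-counting argument caps $|L|$ comfortably below $|S|$; everything else is routine bookkeeping and a standard sampling bound.
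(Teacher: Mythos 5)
Your proof is correct, and it is organized quite differently from the paper's. The paper's argument runs in two intertwined pieces: (i) any $s\in S$ with at least $0.1|S|$ nodes $s'\in S$ beyond distance $d$ survives into $S'$ with probability at most $0.9^{c\log n}\le n^{-100}$, so by a union bound, whp every surviving $s\in S'$ has a forward $d$-ball in $S$ of size at least $0.9|S|$; (ii) under the assumption $|S'|>0.8|S|$ this yields at least $0.72|S|^2$ ordered pairs $(s,s')$ with $d(s,s')\le d$, of which at most $|S|^2/2$ fail the reverse direction, leaving at least $0.22|S|^2>0.2|S|^2$ mutual pairs and contradicting the hypothesis. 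You instead make the deterministic and probabilistic parts disjoint: you define the deterministic ``popular destination'' set $L$, prove deterministically via a double-count that $|L|<|S|/1.4$, observe that the bad event $|S'|>0.8|S|$ forces the deterministic containment $R\subseteq L$, and bound $\Pr[R\subseteq L]\le(|L|/|S|)^{c\log n}<n^{-100}$. Thus the paper reasons about forward $d$-balls of survivors while you reason about backward $d$-balls into the sampled vertices; both hinge on the same cap on mutual neighbors, but your version isolates the randomness in a single step, which is structurally cleaner and easier to audit. Two minor notes: the strict inequality $N>0.8|S|\cdot|L|$ implicitly uses $L\ne\emptyset$, but if $L=\emptyset$ then $R\subseteq L$ is impossible for nonempty $R$ and the conclusion holds vacuously; and, as you already observe, the lemma's stated conclusion must be read as a whp statement since $S'$ depends on $R$ --- the paper's proof also establishes only that.
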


\begin{proof}
The proof will consist of two parts. First we will show that the number of ordered pairs $s,s'\in S$ for which $d(s,s'),d(s',s)\leq d$ is small. Then we will show that if $|S'|>0.8|S|$, then with high probability, the number of ordered pairs $s,s'\in S$ for which $d(s,s'),d(s',s)\leq d$ is large, thus obtaining a contradiction.

(1) If for every $s\in S$ there are at most $0.2 |S|$ nodes $v\in V$ so that $d(s,v),d(v,s)\leq d$, then the number of ordered pairs $s,s'\in S$ for which $d(s,s'),d(s',s)\leq d$ is clearly at most $0.2 |S|^2$.

(2) Suppose now that $|S'|>0.8|S|$. First, consider any $s\in S$ for which there are at least $0.1 |S|$ nodes $s'\in S$ such that $d(s,s')>d$. The probability that $d(s,r)\leq d$ for all $r\in R$ is then at most $0.9^{c\log n}\leq 1/n^{100}.$ Thus, via a union bound, with high probability at least $1-1/n^{99}$, for every $s\in S'$, there are at least $0.9 |S|$ nodes $s'\in S$ such that $d(s,s')\leq d$. 

Now, if $|S'|>0.8|S|$, with high probability, there are at least $0.8|S| \times 0.9|S|=0.72|S|^2$ ordered pairs $(s,s')$ with $s,s'\in S$ and $d(s,s')\leq d$. There are at most ${|S|\choose 2}\leq |S|^2/2$ ordered pairs $(s,s')$ such that exactly one of $\{d(s,s')\leq d,d(s',s)\leq d\}$ holds. Hence, with high probability there are at least $0.22|S|^2>0.2 |S|^2$ ordered pairs $(s,s')$ with $s,s'\in S$ and both $d(s,s')\leq d$ and $d(s',s)\leq d$. Contradiction.
\end{proof}

\section{$2$-Approximation for the Girth in Unweighted Graphs}
Here we show how to obtain a genuine $2$-approximation for the girth in unweighted graphs.
\begin{theorem}
\label{thm:2approx_unw}
Given a directed unweighted graph $G$ on $m$ edges and $n$ nodes, one can in $\tilde{O}(mn^{3/4})$ time compute a $2$-approximation to the girth.
\end{theorem}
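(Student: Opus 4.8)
The plan is to design an algorithm that handles "short" and "long" girth values separately, using the degree-reduction of Lemma~\ref{lemma:unwtd} to assume $\text{out-deg}(v)\leq O(m/n)$ for all $v$, and the set-reduction tool of Lemma~\ref{lemma:setreduce} to control the work spent looking for short cycles. Fix a guess $g$ for the girth (we try all $O(n)$ values, or powers of $2$ to save log factors, so we may pretend $g$ is known up to a factor $2$). The goal is: if the true girth is $g$, then in $\tilde O(mn^{3/4})$ time we output a cycle of length at most $2g$. Let $t=n^{3/4}$ be a threshold; the two regimes are $g \le t$ and $g > t$.

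For the \textbf{large-girth regime} ($g>n^{3/4}$), I would sample a set $R$ of $\tilde O(n/g)$ random vertices. Since the shortest cycle $C$ has $g$ vertices and is a cycle (so any vertex of $C$ reaches every other within distance $g$ along $C$), whp some sampled vertex $r$ lies on $C$; more usefully, any vertex within distance $g$ of $C$ "sees" a sampled hitter. The standard trick: from each $r\in R$ run Dijkstra/BFS out and in to distance $g$, which by the degree bound costs $\tilde O((m/n)\cdot g)$ per source (exploring a ball of $\le g$ vertices), hence $\tilde O((n/g)\cdot (m/n)\cdot g)=\tilde O(m)$ total — and reconstructing, for the $r$ on $C$, the roundtrip distance $d(r,v)+d(v,r)$ for the $v\in C$ achieving the minimum gives a cycle of length $\le 2g$ through $r$. (One must argue that whp the sample hits $C$: since $|C|=g$ and we sample $\tilde O(n/g)$ vertices, the expected number of hits is $\tilde O(1)$, and boosting the sample size by a $\log n$ factor makes this whp.)

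For the \textbf{small-girth regime} ($g\le n^{3/4}$), the ball-growing approach from a single vertex is too expensive ($\tilde O(mg/n)$ per source times $n$ sources is $\tilde O(mg)=\tilde O(mn^{3/4})$ — actually that's fine!). So here I would simply: from every vertex $v$, run BFS outward until either $g$ vertices are reached or a cycle back to $v$ is found; by the degree bound each such BFS touches $O((m/n)\cdot g)$ edges, and over all $n$ sources this is $\tilde O(mg)\le \tilde O(mn^{3/4})$. If the girth is $g$, the shortest cycle is found exactly from its own vertices within $g$ steps. This gives the bound when $g\le n^{3/4}$, and the previous paragraph covers $g> n^{3/4}$; combining and taking the minimum over all guesses $g$ yields a $2$-approximation in $\tilde O(mn^{3/4})$ time. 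The role of Lemma~\ref{lemma:setreduce} (and the careful definition of $S'$) is to handle the intermediate bookkeeping: when many vertices have small roundtrip balls we can recurse on a shrunken candidate set, which is what keeps the sampling-based step from costing more than $\tilde O(m)$ per guess.

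The \textbf{main obstacle} I anticipate is the boundary analysis between the two regimes, and in particular making the sampling step in the large-girth case actually certify a $2$-approximation rather than just detecting a short cycle somewhere: the subtlety is that the hitter $r\in R$ on $C$ gives roundtrip distance to $v\in C$ equal to exactly $|C|=g$ only if the shorter of the two arcs is used in \emph{both} directions, which need not happen; instead $d_G(r,v)+d_G(v,r)\le g$ holds for the $v$ diametrically placed, but to get a cycle of length $\le 2g$ we need $d_G(r,v)+d_G(v,r)\le 2g$ for \emph{some} $v$ we actually explored, which does hold (take $v=r$ gives the cycle $C$ itself of length $g$, but we need the BFS to have closed it). So the real care is: BFS from $r$ to depth $g$ in both directions, intersect the out-ball and in-ball, and the vertex minimizing $d(r,v)+d(v,r)$ over that intersection yields a closed walk of length $\le 2g$ containing a cycle of length $\le 2g$; one must check the depth-$g$ exploration suffices, which follows because along $C$ every vertex is within $g$ of $r$ in each direction. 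Finally, removing the per-guess $\log$ factors and confirming the $\tilde O(mn^{3/4})$ bound across all $g$ (the two regimes are balanced exactly at $g=n^{3/4}$) is routine.
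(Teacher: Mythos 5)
Your large-girth regime is essentially the paper's (sample $\tilde O(n^{3/4})$ hitters, run full BFS from each, take the minimum roundtrip through them), modulo a sloppy time bound: a BFS to depth $g$ can visit up to $n$ vertices, not $\le g$, so the per-source cost is $\tilde O(m)$, not $\tilde O(mg/n)$. Your conclusion still goes through since $\tilde O(n/g)\cdot m \le \tilde O(mn^{1/4})$ for $g>n^{3/4}$, but the intermediate claim ``exploring a ball of $\le g$ vertices'' is false.

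The fundamental gap is in the small-girth regime, and it is exactly the obstacle this theorem was designed to overcome. You propose: ``from every vertex $v$, run BFS outward until either $g$ vertices are reached or a cycle back to $v$ is found.'' This simply fails as a correctness argument: the shortest cycle through $v$ has length $g$, and the $j$-th cycle vertex is discovered at BFS level $j$, but the BFS tree at levels $1,\ldots,g$ can contain vastly more than $g$ vertices. If some ball $B^j(v)$ is large (say $\gg n^t$), your truncated BFS stops before reaching the cycle and reports nothing, even though a length-$g$ cycle through $v$ exists. The paper's algorithm does something substantially cleverer: it uses the sample $Q$ to define $V'_i$, observes that any cycle vertex $x\in B^j(u)$ on a $\le i$-length cycle through $u$ must reach every $y\in\bar B^j(u)$ within distance $i$ (the $d(x,y)\le d(x,u)+d(u,y)\le i$ inequality depicted in Figure~\ref{fig:aroundcycle}), and then uses Lemma~\ref{lemma:setreduce} repeatedly on $\bar B^j(u)$ to shrink the frontier down to $O(n^t)$ vertices without ever dropping a cycle vertex. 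That pruning (the sets $B'^j(u)$, the sampled sets $R^j_k(u)$, and the modified BFS of Claim~\ref{claim:modbfs}) is the heart of the proof. You gesture at Lemma~\ref{lemma:setreduce} — ``when many vertices have small roundtrip balls we can recurse on a shrunken candidate set'' — but this is a hand-wave that does not translate into an algorithm: you never say what set is being shrunk, what invariant guarantees cycle vertices survive the shrinking, or how the random samples are obtained without knowing the sets being sampled (the $S_{j,k}$ trick). Without this, the small-girth case is unproven.

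Two smaller calibration issues. First, your threshold is $g=n^{3/4}$ whereas the paper's is $g=n^{1/4}$; the paper's choice comes from balancing three phases (large-girth BFS, computing the $Q$-distances, and the modified BFS), not two, and the missing third phase is precisely the machinery you omitted. Second, you ``try all guesses $g$,'' whereas the paper avoids the guess by using $Q$ to find the smallest $i$ with $V'_{i+1}\ne\emptyset$ in a single pass; your version can be made to work (take powers of $2$) but only if the per-guess cost is controlled, which again depends on having a working small-girth subroutine.
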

Note that this is the first part of Theorem \ref{thm:2approxunw}. The pseudocode for the algorithm of Theorem \ref{thm:2approx_unw} can be found in Algorithm \ref{alg:highgirth}, and we will refer to it at each stage of the proof. 

We will consider two cases for the girth: when it is $\geq n^\delta$ and when  it is $<n^\delta$, for some $\delta>0$ we will eventually set to $1/4$. We will assume that all out-degrees in the graph are $O(m/n)$.

\subsection{Large girth.}
Pick a random sample $R$ of $100 n^{1-\delta} \log n$ nodes, run BFS to and from each $s\in R$. Return 
$$\min_{s\in R}\min_{v\neq s} d(s,v)+d(v,s).$$

If the girth is $\geq n^\delta$, with high probability, $R$ will contain a node $s$ on the shortest cycle $C$. Since any cycle must contain two distinct nodes, $\min_{s\in R}\min_{v\neq s} d(s,v)+d(v,s)$ is the weight of a shortest cycle that contains some node of $R$, and with high probability it must be the girth. Thus in $\tilde{O}(mn^{1-\delta})$ time we have computed the girth exactly. See Procedure $\textsc{HighGirth}$ in Algorithm \ref{alg:highgirth}.

\subsection{Small girth.}
Now let us assume that the girth is at most $n^\delta$. For a vertex $u$ and integer $j\in \{0,\ldots,n^\delta\}$, define 
$$B^{j}(u):=\{x\in V~|~d(u,x)=j\} \textrm{ and } \bar{B}^{j}(u):=\{x\in V~|~d(u,x)\leq j\}.$$

We will try all choices of integers $i$ from $3$ to $n^\delta$ to estimate the girth when it is $\leq i$. 

Our algorithm first computes a random sample $Q$ of size $O(n^{1-t}\log n)$ for a parameter $t$,
does BFS from and to all nodes in $Q$, and computes for each $i\in \{1,\ldots,n^\delta\}$,  $V'_i=\{v\in V~|~\exists q\in Q:~d(v,q)\leq i \textrm{ and } d(q,v)\leq i\}$. 
The running time needed to do this for all $i\leq n^\delta$ is $\tilde{O}(mn^{1-t+\delta})$ \footnote{The running time is actually less, $\tilde{O}(n^{2-t+\delta}+mn^{1-t})$ but this won't matter for our algorithm.}.

If $V'_i\neq\emptyset$, the girth of $G$ must be $\leq 2i$. 

Now, pick the smallest $i$ for which $V'_{i+1}\neq\emptyset$. Then $V'_k=\emptyset$ for all $k\leq i$, and we have certified that the girth is $\leq 2i+2$. If the girth is $\geq i+1$, we already have a $2$-approximation. Otherwise, the girth must be $\leq i$. 

Consider any $u\in V$, and $j\leq i$.
Suppose that for all $j\leq i$, $|B^j(u)|\leq 100n^t$. 
Then, for $u$ and for all $v\in B^j(u)$ for $j\leq i$, we could compute the distances from $u$ to $v$ in $G$ efficiently: We do this by running BFS from $u$ but stopping when a vertex outside of $\cup_{j=0}^i B^j(u)$ is found. Note that the number of vertices in $\cup_{j=0}^i B^j(u)$ is $O(n^t \cdot i)$, 
and since we assumed that the degree of every vertex is $O(m/n)$, we get a total running time of $O(m n^{t-1}\cdot i)$. 
If this works for all vertices $u$, then we would be able to compute all distances up to $i$ exactly in total time
$O(m\cdot i n^t)\leq O(mn^{t+\delta})$. 

Unfortunately, however, some $B^j(u)$ balls can be larger than $100n^t$.
In this case, for every $j\leq i$, we will compute a small set of nodes $B'^j(u)$ that will be just as good as $B^j(u)$ for computing short cycles.

\begin{claim}Fix $i$: $1\leq i\leq n^\delta$.
Suppose that for every $j\leq i$ we are given black box access to sets $B'^j(u)\subseteq \bar{B}_j(u)$ of nodes such that (1) In $t(n)$ time we can check whether a node is in $B'^j(u)$, (2) 
$|B'^j(u)|\leq 100n^t$ whp, and (3) for any cycle $C$ of length $\leq i$ containing $u$, and every $j\leq i$, any node of $C$ that is in $B^j(u)$ is also in $B'^j(u)$.

Then there is an 
$O(mn^{t-1+\delta} t(n))$ time algorithm that can find a shortest cycle through $u$, provided that cycle has length $\leq i$.\label{claim:modbfs}
\end{claim}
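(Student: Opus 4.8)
The plan is to carry out the truncated BFS sketched just before the claim, but with the exact balls $B^j(u)$ replaced by the black-box sets $B'^j(u)$, and then to argue that this pruning never discards a vertex of a shortest cycle through $u$ and never underestimates a distance, so the procedure can neither miss the sought cycle nor report a spuriously shorter one.

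First I would isolate the structural fact behind correctness. If $C=u\to x_1\to\cdots\to x_{\ell-1}\to u$ is a shortest cycle through $u$ with $\ell\le i$, then $d(u,x_j)=j$ for every $j\le \ell-1$: clearly $d(u,x_j)\le j$, and if $d(u,x_j)<j$ then splicing a shortest $u$-to-$x_j$ path with the sub-arc $x_j\to\cdots\to u$ of $C$ yields a closed walk through $u$ of length $<\ell$, from which one extracts a strictly shorter cycle through $u$ (follow the spliced path until it first re-meets the sub-arc), contradicting minimality. In particular $x_j\in B^j(u)$, so property (3) gives $x_j\in B'^j(u)$: every vertex of $C$ survives the pruning.

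The algorithm itself is a layered BFS from $u$. Set $\hat L_0=\{u\}$; for $j=0,1,\ldots,i-1$, let $\hat L_{j+1}$ consist of the vertices $w$ that are out-neighbors of some vertex of $\hat L_j$, have not appeared in $\hat L_0\cup\cdots\cup\hat L_j$, and pass the test $w\in B'^{j+1}(u)$; keep BFS parent pointers. Whenever, while scanning the out-neighbors of a vertex $v\in\hat L_j$, we see $u$ itself, we output the cycle consisting of the BFS-tree path $u\leadsto v$ followed by the arc $v\to u$; this is a simple cycle through $u$ of length $j+1$ because the layers $\hat L_0,\ldots,\hat L_j$ are pairwise disjoint. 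The answer is the shortest such cycle found. For correctness, by induction on $j$ we get $x_j\in\hat L_j$ for all $j\le\ell-1$: the base case is immediate, and for the step, $x_{j+1}$ is an out-neighbor of $x_j\in\hat L_j$, it lies in $B'^{j+1}(u)$ by the previous paragraph, and it was not visited in an earlier layer since every vertex of $\hat L_0\cup\cdots\cup\hat L_j$ is at distance $\le j$ from $u$ whereas $d(u,x_{j+1})=j+1$. Hence at iteration $j=\ell-1$ the arc $x_{\ell-1}\to u$ is scanned and a cycle of length $\ell$ is reported; moreover every reported cycle truly passes through $u$, so has length $\ge\ell$, and therefore the shortest reported cycle has length exactly $\ell$. (If there is no cycle through $u$ of length $\le i$, nothing is reported, as a report at some $j\le i-1$ would be a cycle of length $\le i$.)

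For the running time: by property (2) and $i\le n^\delta$, whp $\sum_{j\le i}|\hat L_j|\le 1+i\cdot 100n^t=O(n^{t+\delta})$; for each of these $O(n^{t+\delta})$ vertices we scan its $O(m/n)$ out-edges — using the out-degree bound from Lemma~\ref{lemma:unwtd} — and spend $t(n)$ per edge on the membership test, for a total of $O(mn^{t-1+\delta}t(n))$. The one step carrying real content is the induction in the correctness proof, namely that the pruned BFS reaches each $x_j$ at exactly layer $j$; this hinges precisely on the two facts established above — that $d(u,x_j)$ equals $j$ exactly (so $x_j$ is never reached at an earlier, ``stolen'' layer) and that $x_j$ is not pruned away (property (3)). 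Everything else is bookkeeping.
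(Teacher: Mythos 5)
Your proof is correct and follows essentially the same layered modified-BFS argument as the paper's. The only cosmetic difference is that you prune a vertex $w$ upon entry into layer $j+1$ (testing $w\in B'^{j+1}(u)$) whereas the paper prunes upon expansion from layer $j$ (testing $z\in B'^{j}(u)$ before scanning $z$'s out-edges); both give the same correctness invariant and the same $O(mn^{t-1+\delta}t(n))$ bound, and your explicit derivation of $d(u,x_j)=j$ is exactly the fact the paper invokes when it asserts $x'\in B^j(u)$.
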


\begin{proof}
Let us assume that there is some cycle $C$ of length $\leq i$ containing $u$. Also, assume that we are given the sets $B'^j(u)$ for all $j\leq i$ as in the statement of the lemma.

Then we can compute a modified BFS out of $u$. We will show by induction that when considering distance $j\leq i$, our modified BFS will have found a set $N_j(u)$ of nodes such that for every $x\in N_j(u)$, $d(u,x)\leq j$, and so that for any cycle $C$ of length $\leq i$ containing $u$, any node of $C$ that is in $B^j(u)$ is also in $N_j(u)$. 

Initially, $N_0(u)=\{u\}$, so the base case is fine. Let's make the induction hypothesis for $j$ that for every $x\in N_j(u)$, $d(u,x)\leq j$, and for a shortest cycle $C$ of length $\leq i$ containing $u$, any node of $C$ that is in $B^j(u)$ is also in $N_j(u)$.

Our modified BFS proceeds as follows: Given $N_j(u)$, we go through each $z\in N_j(u)$, and if $z\in B'^j(u)$, we go through all out-neighbors $y$ of $z$, and if $y$ has not been visited until now, we place $y$ into $N_{j+1}(u)$. See Procedure $\textsc{ModBFS}$ in Algorithm \ref{alg:highgirth} (parameter $t$ is set to $1/2$ here).

Clearly, since $d(u,z)\leq j$ (by the induction hypothesis), we have that $d(u,y)\leq j+1$ for each out-neighbor $y$ of $z$. Now consider a shortest cycle $C$ containing $u$ of length $\leq i$. To complete the induction we only care about $j< |C|$.

Assume that the induction hypothesis for $j$ holds. Let $x$ be the node on $C$ at distance $j+1$ from $u$ along $C$, and let $x'$ be its predecessor on $C$, i.e. the node on $C$ at distance $j$ from $u$ along $C$. Since $C$ is a shortest cycle containing $u$ and since $x'\neq x$, we must have that $d(u,x')=j$ so that $x'\in B^j(u)$. Also, either $u=x$, or $d(u,x)=j+1$ and so $x\in B^{j+1}(u)$.

We know by the induction hypothesis that $x'\in N_{j}(u)$ and also that $x'\in B'^j(u)$ by the definition of $B'^j(u)$. Thus, we would have gone through the edges out of $x'$, and $x$ would have been discovered. If $u=x$, then the cycle $C$ will be found. Otherwise,
 $d(u,x)=j+1$, and $x$ cannot have been visited until now, so our modified BFS will insert $x$ into $N_{j+1}(u)$ thus completing the induction.

The running time of the modified BFS is determined by the fact that there are $i\leq n^\delta$ levels, each of $N_j(u)\cap B'^j(u)$ contains 
$\leq O(n^t)$ nodes, and we traverse the $O(m/n)$ edges out of every $x\in N_j(u)\cap B'^j(u)$. The running time is thus asymptotically $t(n)\times n^\delta \times n^t \times m/n$ which is $O(m n^{t+\delta-1} t(n))$.
\end{proof}

Now we want to explain how to compute the sets $B'^j(u)$.
We use Lemma~\ref{lemma:setreduce} from the preliminaries.
Suppose that the girth is at most $i$ and for every $k\leq i$, $V'_k=\emptyset$.

Let $u$ be a node on a cycle $C$ of length at most $i$.
Let $x$ be any node on $C$ so that $x\in B^j(u)$ for some integer $j\leq i$. 
Then we must have that for every $y\in \bar{B}^j(u):$
$$d(x,y)\leq d(x,u)+d(u,y)\leq |C|-d(u,x)+d(u,y) 
\leq i -j+j= i.$$


This inequality is crucial for our algorithm. See Figure~\ref{fig:aroundcycle} for a depiction of it.

\begin{figure}[h]
  \centering
  \includegraphics[width=.25\linewidth]{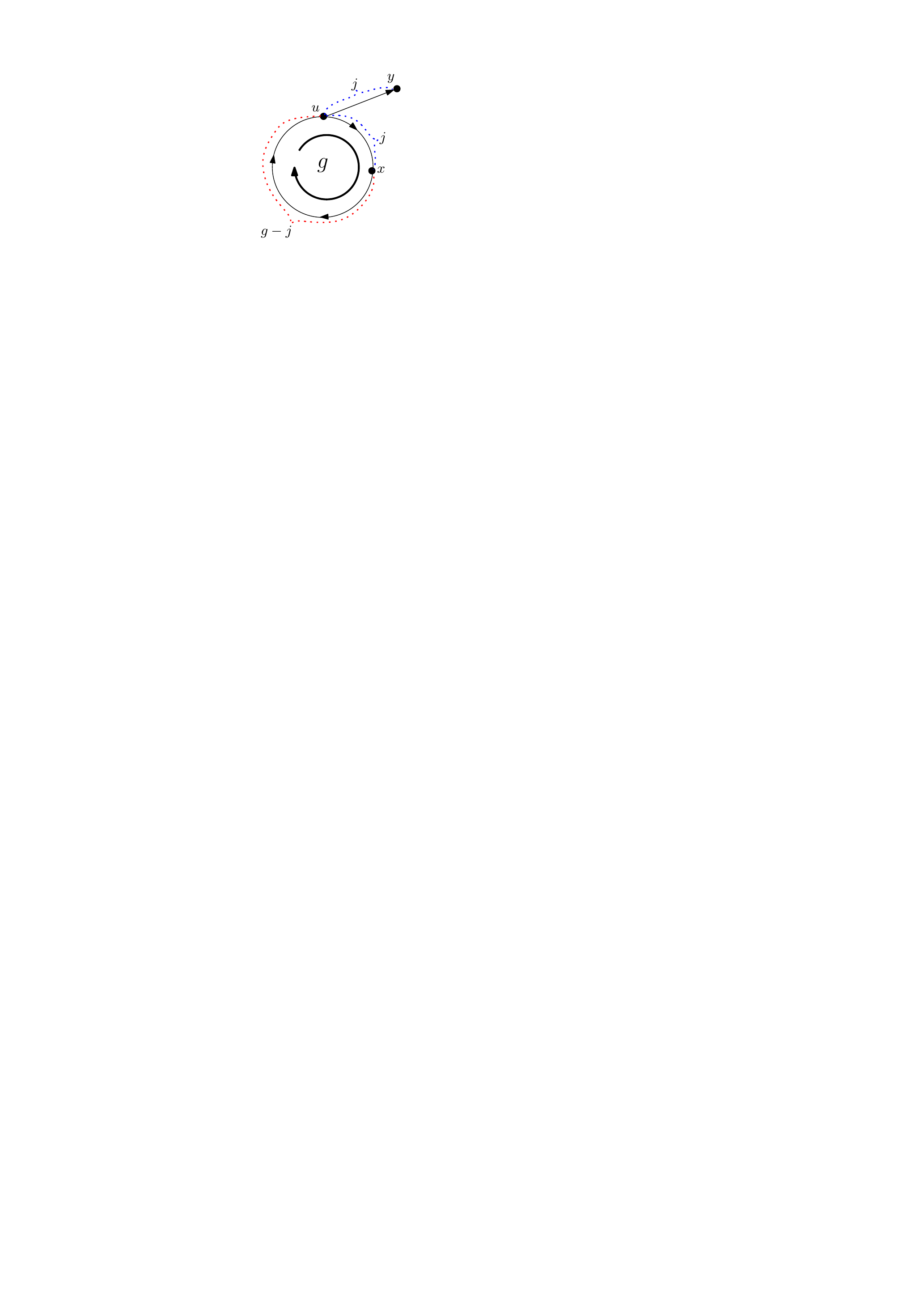}
  \caption{Here there is a cycle of length $g$ containing $u$. A node $x$ on the cycle is at distance $j$ from $u$ along the cycle and another node $y$ is at distance $\leq j$ from $u$. Then the distance from $x$ to $y$ is at most $g$ since one way to go from $x$ to $y$ is to go from $x$ to $u$ along the cycle at a cost of $g-j$, and then from $u$ to $y$ at a cost of $\leq j$. If the cycle is a shortest cycle containing $u$ and if $x\neq u$, then the distance in the graph from $u$ to $x$ is $j$, as the path along the cycle needs to be a shortest path.}
  \label{fig:aroundcycle}
\end{figure}

In other words, we obtain that $x$ is in $\{w\in B^j(u)~|~d(w,y)\leq i,~ \forall y\in \bar{B}^j(u)\}$.

Suppose that we are able to pick a random sample $R^j(u)$ of $c\log n$ vertices from $\bar{B}^j(u)$ (we will show how later). Then we can define 
$$\bar{B}'^j(u):=\{z\in \bar{B}^j(u)~|~d(z,y)\leq i,~ \forall y\in R^j(u)\}.$$

Using Lemma~\ref{lemma:setreduce} we will show that if $|\bar{B}^j(u)|\geq 10n^t$, 
then $|\bar{B}'^j(u)|\leq 0.8 \bar{B}^j(u)$ and if $x$ is in $\{w\in B^j(u)~|~d(w,y)\leq i,~ \forall y\in \bar{B}^j(u)\}$, then whp $x\in \bar{B}'^j(u)$. We will then repeat the argument to obtain $B'^j(u)$ of size $O(n^t)$.

Consider any $s\in V$ with at least $0.2 |\bar{B}^j(u)|$
 nodes $v\in V$ so that $d(s,v),d(v,s)\leq i$. As 
%
$|\bar{B}^j(u)|\geq 10n^t$ (as otherwise we would be done), $0.2 |\bar{B}^j(u)|\geq 2n^t$, 
and so with high probability, for $s$ with the property above, our earlier random sample $Q$ contains some $q$ with $d(s,q),d(q,s)\leq i$, and so $V'_i\neq \emptyset$ which we assumed didn't happen. Thus with high probability, for every $s\in V$, there are at most $0.2 |\bar{B}^j(u)|$
 nodes $v\in V$ so that $d(s,v),d(v,s)\leq i$.
Hence we also have that every $z\in \bar{B}^j(u)$ has at most $0.2 |\bar{B}^j(u)|$
 nodes $v\in V$ so that $d(s,v),d(v,s)\leq i$.
 
Thus we can apply Lemma~\ref{lemma:setreduce} to $\bar{B}^j(u)$ and conclude that $|\bar{B}'^j(u)|\leq 0.8 |\bar{B}^j(u)|$, while also any node $x\in B^j(u)$ on the cycle $C$ (containing $u$) is also in $\bar{B}'^j(u)$. 

 We will iterate this process until we arrive at a subset of $\bar{B}^j(u)$ that is smaller than $10n^t$ 
and still contains all $x\in B^j(u)$ on an $\leq i$-length cycle $C$.

We do this as follows. Let $B^j_{0}(u)=\bar{B}^j(u)$. For each $k=0,\ldots, 2\log n$, let $R^j_{k}(u)$ be a random sample of $O(\log n)$ vertices of $B^j_{k}(u)$. Define $B^j_{k+1}(u)=\{z\in B^j_{k}(u)~|~d(z,y)\leq i,~ \forall y\in \cup_{\ell=0}^k R^j_{\ell}(u)\}$. 
We get that for each $k$, $|B^j_{k}(u)|\leq 0.8^k |\bar{B}^j(u)|$ so that at the end of the last iteration, 
$|B^j_{2\log n}(u)|\leq 10n^t$ and we can set $B'^j(u)$ to $B^j_{2\log n}(u)$.

It is not immediately clear how to obtain the random sample $R^j_{k}(u)$ from $B^j_{k}(u)$ as $B^j_{k}(u)$ is unknown. We do it in the following way, adapting an argument from Chechik et al. \cite{improvedgirthSTOC}. For each $j\leq i$ and $k\leq 2\log n$ we independently obtain a random sample $S_{j,k}$ of $V$ by sampling each vertex independently with probability $p=100\log n / n^t$.
For each of the (in expectation) $O(n^{1-t+\delta}  \log^2(n))$ 
vertices in the sets $S_{j,k}$ we run BFS to and from them, to obtain all their distances.

Now, for $j\leq i$ and $k$, to obtain the random sample $R^j_{k}(u)$ of the unknown $B^j_{k}(u)$, we assume that we already have $R^j_{\ell}(u)$ for $\ell<k$, and define
$$T^j_{k}(u)=\{s\in S_{j,k}~|~s\in \bar{B}^j(u) \textrm{ and } d(s,y)\leq i,~ \forall y\in \cup_{\ell<k} R^j_{\ell}(u)\}.$$
Forming the set $T^j_{k}(u)$ is easy since we have the distances $d(s,v)$ for all $s\in S_{j,k}$ and $v\in V$, so we can check whether $s\in \bar{B}^j(u)$ and $d(s,y)\leq i,~ \forall y\in \cup_{\ell<k} R^j_{\ell}(u)$ in polylogarithmic time for each $s\in S_{j,k}$.  See Procedure $\textsc{RandomSamples}$ in Algorithm \ref{alg:highgirth}.

Now since $S_{j,k}$ is independent from all our other random choices, $T^j_{k}(u)$ is  a random sample of $B^j_{k}(u)$ essentially created by selecting each vertex with probability $p$. If 
$B^j_{k}(u)\geq 100n^t$,
 with high probability, $T^{j}_{k}(u)$ has at least $10\log n$ vertices so we can pick $R^j_{k}(u)$ to be a random sample of $10\log n$ vertices of $T^j_{k}(u)$, and they will also be a random sample of $10\log n$ vertices of $B^j_{k}(u)$.

Once we have the sets $R^j_{k}(u)$ for each $u$ and $j\leq i$, $k\leq 2\log n$, we run our modified BFS from each $u$ from Claim~\ref{claim:modbfs}
 where when we are going through the vertices $x\in N_j(u)$
 we check whether $x\in B'^j(u)$ by checking whether
  $d(x,r)\leq i$ for every $r\in \cup_k R^j_{k}(u)$. This only gives a polylogarithmic overhead so we can run the modified BFS in time 
  $\tilde{O}(mn^{t-1+\delta})$ time.
  We can run it through all $u\in V$ in total time   $\tilde{O}(mn^{t+\delta})$ time,
  and in this time we will be able to compute the length of the shortest cycle if that cycle is of length $\leq i$.

\paragraph{Putting it all together.}
In $\tilde{O}(mn^{1-\delta})$ time we compute the girth exactly if it is $\geq n^\delta$. In $\tilde{O}(mn^{1-t+\delta})$ time, we obtain $i$ so that we have a
$2$-approximation of the girth if the girth is $>i$. In additional $\tilde{O}(mn^{1-t+\delta}+mn^{t+\delta})$ time we compute the girth exactly if it is $\leq i$.

To optimize the running time we set $t=1/2$, $1-\delta=0.5+\delta$, obtaining $\delta=1/4$, and a running time of $\tilde{O}(mn^{3/4})$.
The final algorithm is in Algorithm \ref{alg:highgirth}.
\begin{algorithm}
{
\fontsize{10}{10}\selectfont
\caption{$2$-Approximation algorithm for the girth in unweighted graphs.}
\label{alg:highgirth}
\SetKwProg{procedure}{Procedure}{}{}
\procedure{$\textsc{HighGirth}(G=(V,E))$}{
    Let $R\subseteq V$ be a uniform random sample of $100n^{3/4}\log n$ nodes.\; 
    \ForEach{$s\in R$}{
            Do BFS from $s$ in $G$\;
    }
    Let $g$ be the length of the shortest cycle found by the BFS searches.\;
    Return $g$.
}
\label{alg:randsampunw}
\SetKwProg{procedure}{Procedure}{}{}
\procedure{$\textsc{RandomSamples}(G=(v,E),i)$}{
    \ForEach{$j\in \{1,\ldots,i\}$}{
        \ForEach{$k\in \{1,\ldots,2\log n\}$}{
            Let $S_{j,k}\subseteq V$ be a uniform random sample of $100\sqrt n \log n$ vertices.\;
            \ForEach{$s\in S_{j,k}$}{
                Do BFS to and from $s$ to compute for all $v$, $d(s,v)$ and $d(v,s)$.\;
            }
        }  
    }
    \ForEach{$u\in V$}{
        \ForEach{$j\in \{1,\ldots,i\}$}{
            $R^j(u)\leftarrow \emptyset$.\;
            \ForEach{$k\in \{1,\ldots,2\log n\}$}{
                $T_k^j(u)\leftarrow \{s\in S_{j,k}~|~d(u,s)\leq j \textrm{ and for all } y\in R^j(u):~d(s,y)\leq i\}$.\;
            
                \If{$|T_k^j(u)|< 10\log n$}{
                    $R^j(u)\leftarrow R^j(u)\cup T_k^j(u)$\;
                    Exit this loop (over $k$).
                 }
                \Else{
                    Let $R_k^j(u)$ be a uniform random sample of $10\log n$ nodes from $T_k^j(u)$.\;
                    $R^j(u)\leftarrow R^j(u)\cup R_k^j(u)$.\;
                }
            }
        }
    } Return the sets $R^j(u)$ for all $j\leq i$, $u\in V$, and $d(s,v),d(v,s)$ for all $s\in \cup_{j,k}S_{j,k}$ and $v\in V$.
}

\label{alg:modbfs}
\SetKwProg{procedure}{Procedure}{}{}
\procedure{$\textsc{ModBFS}(G=(v,E),u,i,R^1(u),\ldots,R^i(u)), d(\cdot)$}{
    // $d(\cdot)$ contains $d(s,v),d(v,s)$ for all $s\in \cup_{j,k}S_{j,k}$ and $v\in V$.\;

    $Visited\leftarrow$ empty hash table\;
    $N_0\leftarrow \{u\}$\;
    $Visited.insert(u)$\;
    \ForEach{$j$ from $0$ to $i-1$}{
        $N_{j+1}\leftarrow$ empty linked list\;
        \ForEach{$x\in N_j$}{
            \If{for every $s\in R^j(u)$, $d(x,s)\leq i$}{
                \ForEach{$y$ s.t. $(x,y)\in E$ and $y\notin Visited$}{
                    \If{$y=u$}{
                        Stop and return $j+1$\;
                    }
                    $N_{j+1}.insert(y)$\;
                    $Visited.insert(y)$\;
                }
            }
        }
    }
    Return $\infty$ // No $\leq i$ length cycle found through $u$\;
}    

\label{alg:girthapp}
\SetKwProg{procedure}{Procedure}{}{}
\procedure{$\textsc{GirthApprox}(G=(V,E))$}{
    $g_{high}\leftarrow \textsc{HighGirth}(G)$\;
    
    Let $Q\subseteq V$ be a uniform random sample of $100n^{1/2}\log n$ nodes.\; 
    \ForEach{$s\in Q$}{
            Do BFS from and to $s$ in $G$\;
    }
    Let $i$ be the minimum integer s.t. $\exists s\in Q$ and $\exists v\in V$ with $d(s,v)\leq i+1$ and $d(v,s)\leq i+1$.\;
    $g_{med}\leftarrow 2(i+1)$\;
    
    Let $i$ be the min of $i$ and $n^{1/4}$\;
    
    Run $\textsc{RandomSamples}(G,i)$ to obtain sets $R^j(u)$ for all $j\leq i$, $u\in V$, and $d(\cdot)$ containing $d(s,v),d(v,s)$ for all $s\in \cup_{j,k}S_{j,k}$ and $v\in V$\;
    
    \ForEach{$u\in V$}{
        $g_u\leftarrow \textsc{ModBFS}(G,u,i,R^1(u),\ldots,R^i(u), d(\cdot))$\;
    }
    
    $g\leftarrow \min\{g_{high},g_{med}, \min_{u\in V} g_u\}$\;
    Return $g$\;
}
}
\end{algorithm}

\section{Weighted Graphs: Girth and Roundtrip Spanner.}
\label{sec:weighted}
One of the main differences between our weighted and unweighted algorithms is that for weighted graphs we do not go through each distance value $i$ up to $n^\delta$, but we instead 
process intervals of possible distance values $[(1+\eps)^i,(1+\eps)^{i+1})$ for small $\eps>0$.
This will affect the approximation, so that we will get a $(2+O(\eps))$-approximation. However, it will also enable us to have a smaller running time of $\tilde{O}(m\sqrt n \log(M) /\eps)$, and to be able to output an $\tilde{O}(n^{1.5}\log(M)/\eps)$-edge $(5+O(\eps))$-approximate roundtrip spanner in $\tilde{O}(m\sqrt n \log(M) /\eps^2)$ time, where $M$ is the maximum edge weight. 

Fix $\eps>0$.
For a vertex $u$ and integer $j$, define (differently from the previous section)
$$B^{j}(u):=\{x\in V~|~(1+\eps)^j\leq d(u,x)< (1+\eps)^{j+1}\} \textrm{ and } \bar{B}^{j}(u):=\{x\in V~|~ d(u,x)< (1+\eps)^{j+1}\}.$$
We include a boundary case $B^{\emptyset}(u):=\{x\in V~|~d(u,x)=0\}$. Recall that we originally started with a graph with positive integer weights, but our transformation to vertices of degree $O(m/n)$ created some $0$ weight edges. We note that any distance of $0$ involves at least one of the auxiliary vertices and no roundtrip distance can be $0$.

In our algorithms including our $(2+\epsilon)$-approximation algorithm, we do a restricted version of Dijkstra from every vertex where before running these Dijkstras, we need to efficiently sample a set of vertices $R^j(u)$ of size $O(\log{n})$ from a subset of $B^j(u)$, without computing the set $B^j(u)$. The following lemma is given as input the target approximation factor $2\beta$, a parameter $i$ as an estimated size of cycles the algorithm is handling at a given stage and a parameter $\alpha$ as the target running time $\tilde{O}(mn^\alpha)$ of our algorithms. It outputs the sample sets in this running time. The proof of the lemma is similar to the sampling method of the previous section and is included in the Appendix.

\begin{lemma}
\label{lemma:modifiedDijkstra}
Let $M$ be the maximum edge weight of the graph and suppose that $i\in \{1,\ldots,\log_{1+\epsilon}Mn\}$, $\beta>0$ and $0<\alpha<1$ are given. Suppose that $Q$ is a given sampled set of size $\tilde{O}(n^{\alpha})$ vertices. Let $d=\beta (1+\epsilon)^{i+1}$.  Let $V'_i=\{v\in V~|~\exists q\in Q:~d(v,q)\leq d \textrm{ and } d(q,v)\leq d\}$. In $\tilde{O}(mn^\alpha)$ time, for every $u\in V$ and every $j=\{1,\ldots, \log_{(1+\epsilon)}(Mn)\}$, one can output a sample set $R^j_{i}(u)$ of size $O(\log^2{n})$ from $\bar{Z}^j_i(u)=\bar{B}^j(u)\setminus V'_i$, where the number of vertices in ${Z}^j_i(u)=B^j(u)\setminus V'_i$ of distance at most $d$ from all vertices in $R^j_{i}(u)$ is at most $O(n^{1-\alpha})$ whp.
\end{lemma}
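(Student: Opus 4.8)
The plan is to run the sampling scheme of the previous section, with ``BFS to/from within distance $i$'' replaced by ``Dijkstra to/from within distance $d=\beta(1+\epsilon)^{i+1}$'', and with the role played there by the assumption ``$V'_k=\emptyset$ for all $k\le i$'' now played by the \emph{deletion} of $V'_i$ from every ball. Fix $u\in V$ and $j$, and abbreviate $A_0:=\bar Z^j_i(u)=\bar B^j(u)\setminus V'_i$. I will build a decreasing chain $A_0\supseteq A_1\supseteq\cdots$ together with uniform samples $R^j_{i,k}(u)$ of $\Theta(\log n)$ vertices of $A_k$, where $A_{k+1}:=\{z\in A_k : d(z,y)\le d \text{ for all } y\in\bigcup_{\ell\le k}R^j_{i,\ell}(u)\}$, stopping at the first index $k^\star$ with $|A_{k^\star}|\le c\,n^{1-\alpha}$ for a suitable constant $c$. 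The output is $R^j_i(u):=\bigcup_{k<k^\star}R^j_{i,k}(u)$; by construction it lies in $\bar Z^j_i(u)$, it has size $O(\log^2 n)$, and the set of vertices of $Z^j_i(u)\subseteq\bar Z^j_i(u)$ lying within distance $d$ of every vertex of $R^j_i(u)$ is contained in $A_{k^\star}$ and hence has size $O(n^{1-\alpha})$ --- which is the claimed guarantee.

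The core claim is that $|A_{k+1}|\le 0.8\,|A_k|$ whenever $|A_k|>c\,n^{1-\alpha}$, and this is exactly Lemma~\ref{lemma:setreduce} applied with $S=A_k$ and the integer $d$. Its hypothesis --- that every $s\in A_k$ has at most $0.2\,|A_k|$ vertices $v$ with $d(s,v),d(v,s)\le d$ --- is verified using $Q$: if some $s\in V$ with $s\notin V'_i$ violated it, it would have more than $0.2\,c\,n^{1-\alpha}$ such vertices $v$, and since $Q$ is a uniform random sample of $\tilde O(n^{\alpha})$ vertices, whp $Q$ contains one of them, which would force $s\in V'_i$, a contradiction; note $A_k\subseteq\bar B^j(u)\setminus V'_i$. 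A union bound over the $n$ choices of $s$ (the bad event depends only on $s$ and the given $Q$), and then over the polynomially many triples $(u,j,k)$, shows the hypothesis, and hence the conclusion of Lemma~\ref{lemma:setreduce}, holds everywhere whp. Since $|A_0|\le n$ and each step loses a constant factor, $k^\star=O(\log n)$, so indeed $|R^j_i(u)|=O(\log^2 n)$.

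It remains to deal with the fact that the $A_k$ are not known explicitly (computing $\bar B^j(u)$ for every $u$ is too expensive). As in the previous section, and following Chechik et al.~\cite{improvedgirthSTOC}, for each pair $(j,k)$ I presample $S_{j,k}\subseteq V$ by including each vertex independently with probability $p=\Theta(\log n/n^{1-\alpha})$, run Dijkstra to and from every vertex of $\bigcup_{j,k}S_{j,k}$ and of $Q$, and precompute $V'_i$ from the resulting distances; since $|S_{j,k}|=\tilde O(n^{\alpha})$ whp, there are $\tilde O(\log_{1+\epsilon}(Mn))$ pairs, each search costs $\tilde O(m)$, and $V'_i$ costs $\tilde O(n^{1+\alpha})$, the total time is $\tilde O(mn^{\alpha})$ (the $\log M$ and $1/\epsilon$ overheads being absorbed into the $\tilde O$, as elsewhere in this section). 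Then I set $T^j_{i,k}(u):=\{s\in S_{j,k} : s\in\bar B^j(u)\setminus V'_i,\ d(s,y)\le d\ \forall y\in\bigcup_{\ell<k}R^j_{i,\ell}(u)\}=S_{j,k}\cap A_k$, which is computable in polylog time per $s$ from the stored distances. Because $S_{j,k}$ is independent of everything that determines $A_k$ (namely the sets $S_{j,\ell}$ with $\ell<k$), $T^j_{i,k}(u)$ is a genuine rate-$p$ sample of $A_k$; while $|A_k|>c\,n^{1-\alpha}$ it has $\Omega(\log n)$ elements whp by a Chernoff bound, so $R^j_{i,k}(u)$ is taken to be a uniform $\Theta(\log n)$-element subset of it, and as soon as $|T^j_{i,k}(u)|$ falls below this threshold we stop and set $k^\star:=k$ --- exactly as in the unweighted algorithm.

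The main difficulty here is bookkeeping rather than conceptual: one must choose the stopping constant $c$, the rate $p$, and $|Q|$ so that simultaneously (i) ``$s\notin V'_i$'' implies the ``$\le 0.2|S|$'' hypothesis of Lemma~\ref{lemma:setreduce}, (ii) $|T^j_{i,k}(u)|$ is $\Omega(\log n)$ precisely as long as $|A_k|$ exceeds the threshold, and (iii) every ``whp'' statement survives a union bound over the $\mathrm{poly}(n)$ relevant events. The degenerate cases --- the class $B^{\emptyset}(u)$ of vertices at distance $0$ and the weight-$0$ auxiliary edges introduced by Lemma~\ref{lemma:zerowt}, none of which can contribute a roundtrip distance of $0$ --- are handled exactly as remarked just before the lemma statement.
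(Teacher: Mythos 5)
Your proof is correct and follows essentially the same route as the paper's own proof in the appendix: a geometrically shrinking chain $A_0\supseteq A_1\supseteq\cdots$ driven by Lemma~\ref{lemma:setreduce}, with its hypothesis verified via $Q$ (any $s$ with too many roundtrip-close neighbors would be captured into $V'_i$), and with the unknown sets $A_k$ sampled via independent presampled sets $S_{j,k}$ and the induced $T^j_{i,k}(u)$. One small remark in your favor: the paper's appendix states the presampling rate as $p=100\log n/n^{\alpha}$, which is inconsistent with its own claims that $\bigcup S_{i,j,k}$ has $\tilde O(n^{\alpha})$ vertices and that the Dijkstras cost $\tilde O(mn^{\alpha})$; your choice $p=\Theta(\log n/n^{1-\alpha})$ is the one that actually makes those bounds (and the Chernoff step ``$|A_k|>cn^{1-\alpha}\Rightarrow |T^j_{i,k}(u)|=\Omega(\log n)$ whp'') go through.
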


Now we focus on our $(2+O(\epsilon))$-approximation algorithm for the girth and $(5+O(\epsilon))$-approximate roundtrip spanner. We are going to prove the following Theorem, which consists of Theorem \ref{thn:weighted_spanner} and the second part of Theorem \ref{thm:2approxunw} with a $\log{M}$ factor added to their running times.  
\begin{theorem} 
\label{thm:2approx}
Let $G$ be an $n$-node, $m$-edge directed graph with edge weights in $\{1,\ldots, M\}$. Let $\eps>0$.
One can compute a $(5+\eps)$-roundtrip spanner on $\tilde{O}(n^{1.5}\log^2 M/\eps^2)$ edges in $\tilde{O}(m\sqrt n \log^2(M)/\eps^2)$ time, whp. In $\tilde{O}(m\sqrt n \log(M)/\eps)$ time, whp, one can compute a $(2+\eps)$-approximation to the girth.
\end{theorem}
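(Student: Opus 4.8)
The plan is to mimic the structure of the unweighted $2$-approximation algorithm (Theorem~\ref{thm:2approx_unw}) but replace the exact distance levels $j\in\{1,\dots,n^\delta\}$ by the $O(\log_{1+\eps}(Mn))$ geometric scales $B^j(u)$, and to replace the budget trade-off $n^t$ vs.\ $n^\delta$ by a single scale $\sqrt n$. Concretely: first reduce to a strongly connected graph with all out-degrees $O(m/n)$ via Lemma~\ref{lemma:unwtd} (or rather its weighted analogue Lemma~\ref{lemma:zerowt}), handle the $\log M$ factor at the very end as the paper promises. Then, as in the ``High Girth'' step, sample $R$ of $\tilde O(\sqrt n)$ vertices and run Dijkstra to/from each; this catches the girth exactly (up to $(1+\eps)$, since we round distances to powers of $1+\eps$) whenever the shortest cycle has more than $\sqrt n$ vertices. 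For the remaining case, iterate over scales $i$: sample $Q$ of size $\tilde O(\sqrt n)$, run bounded Dijkstra to/from $Q$, and set $V'_i=\{v:\exists q\in Q,\ d(v,q),d(q,v)\le \beta(1+\eps)^{i+1}\}$ with $\beta=1$ (to get the factor $2$); find the smallest $i$ with $V'_{i+1}\ne\emptyset$, certifying girth $\le 2(1+\eps)^{i+2}$, so if the girth is $\ge (1+\eps)^{i+1}$ we already have a $(2+O(\eps))$-approximation. Otherwise the girth is $< (1+\eps)^{i+1}$, i.e.\ the shortest cycle fits inside the ``radius-$i$'' balls of its own vertices, and we must find it exactly.

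For that last case I would invoke Lemma~\ref{lemma:modifiedDijkstra} with $\alpha=1/2$, $\beta=1$, and the current $i$: it gives, for every $u$ and every scale $j$, a sample set $R^j_i(u)$ of size $O(\log^2 n)$ from $\bar Z^j_i(u)=\bar B^j(u)\setminus V'_i$ such that the ``survivors'' $z\in B^j(u)\setminus V'_i$ with $d(z,y)\le \beta(1+\eps)^{i+1}$ for all $y\in R^j_i(u)$ number only $O(\sqrt n)$ whp. The key geometric fact (the analogue of the displayed inequality $d(x,y)\le i$ in the unweighted proof) is that if $u$ lies on a shortest cycle $C$ with $d_G(C)<(1+\eps)^{i+1}$ and $x\in C\cap B^j(u)$, then for every $y\in\bar B^j(u)$ one has $d(x,y)\le d(x,u)+d(u,y)< (1+\eps)^{i+1}-(1+\eps)^j + (1+\eps)^{j+1}$; with $\beta$ chosen as $2$ (or $1+$ a bit of slack absorbed into $\eps$) this is $\le \beta(1+\eps)^{i+1}=d$, so $x$ survives, and moreover $x\notin V'_i$ because $x$ being in $V'_i$ would (via $Q$) have made $V'_i\ne\emptyset$, contradicting minimality of $i$. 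Hence every cycle node in $B^j(u)$ lies in the reduced set $B'^j(u):=\{z\in\bar B^j(u):d(z,y)\le d\ \forall y\in R^j_i(u)\}$, which has size $O(\sqrt n)$. Then run the modified Dijkstra of Claim~\ref{claim:modbfs} (the weighted version: a Dijkstra from $u$ that only relaxes edges out of a vertex $x$ once we have verified $x\in B'^j(u)$ for the appropriate scale $j$) from every $u$; the running time is $\tilde O$ of (edges relaxed) $=\tilde O(\#\text{scales}\times \sqrt n\times m/n\times n)=\tilde O(m\sqrt n)$, and it finds a shortest cycle through $u$ whenever that cycle has weight $<(1+\eps)^{i+1}$.

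Summing up: $\tilde O(m\sqrt n)$ for the high-girth sampling, $\tilde O(m\sqrt n\,\log_{1+\eps}(Mn))=\tilde O(m\sqrt n\log M/\eps)$ for the $Q$-sampling across all scales, $\tilde O(m\sqrt n)$ per scale for Lemma~\ref{lemma:modifiedDijkstra} and the modified Dijkstras but we only need to do it for the one critical $i$, so $\tilde O(m\sqrt n\log M/\eps)$ overall, giving a $(2+O(\eps))$-approximation; rescaling $\eps$ gives $(2+\eps)$. For the $(5+\eps)$-roundtrip spanner, I would additionally, whenever a short cycle (equivalently a small ``roundtrip ball'' $B_{rt}(v,r)=\{x:d(v,x)+d(x,v)\le r\}$) is discovered, add the in/out shortest-path trees of that ball's center to the spanner; the standard ball-growing / recursive-sampling argument (as in Chechik et al.\ and Duan--Zhang) then bounds the edge count by $\tilde O(n^{1.5}\log M/\eps^2)$ and the stretch by $5+O(\eps)$, where the improvement from $8$ to $5$ comes from using the tighter $(2+\eps)$-girth estimate to decide ball radii. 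The main obstacle I expect is exactly this stretch analysis for the spanner: bounding the roundtrip distance between two vertices that fall into different balls at different recursion levels, and showing the $(2+\eps)$ per-ball guarantee composes into a global $(5+\eps)$ rather than a larger constant — the girth part itself is a fairly direct geometric-scaling adaptation of the unweighted argument and Lemma~\ref{lemma:modifiedDijkstra}, whereas pinning down the constant $5$ requires care about how the radii of nested balls are chosen and how the $\eps$-slack from the geometric bucketing accumulates.
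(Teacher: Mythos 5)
Your plan for the girth estimate is essentially the paper's: reduce out-degrees to $O(m/n)$, iterate over geometric scales $(1+\eps)^i$, sample $Q$ of size $\tilde O(\sqrt n)$, find the smallest $i$ with $V'_{i+1}\neq\emptyset$, and for that single scale run the modified Dijkstra out of every vertex using the sample sets from Lemma~\ref{lemma:modifiedDijkstra}. A few small corrections: the paper needs no separate ``HighGirth'' phase in the weighted setting, because the geometric scales go up to $\log_{1+\eps}(Mn)$ and already cover all girth values; and the right choice of the parameter is $\beta=1+\eps$ (giving threshold $d=(1+\eps)^{i+2}$), not $\beta=1$ or $\beta=2$ — the chain $d(x,y)\le \rd{u}{v}-(1+\eps)^j+(1+\eps)^{j+1}=\rd{u}{v}+\eps(1+\eps)^j\le(1+\eps)^{i+2}$ is exactly why that extra factor $1+\eps$ is needed and why $\beta=1$ would fail. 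These are fixable details.

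The genuine gap is in the spanner half, and you in fact flag it yourself. You propose to build the spanner by ``ball-growing / recursive-sampling,'' which is the Pachocki et al.\ / Chechik et al.\ paradigm, and you concede you do not see how the constant $5$ emerges from nested-ball radii. The paper's construction is structurally different and avoids any recursion. It inserts two kinds of edges per scale $i$: (a) the in/out shortest-path trees from $Q$, and (b) the $\tilde O(\sqrt n\log M/\eps)$ tree edges explored by the modified Dijkstra run from \emph{every} vertex $u$ (Lemma~\ref{lemma:moddijk} explicitly outputs these edges). The stretch analysis is then a one-line case split via Lemma~\ref{lemma:hitpath}: if the shortest $u\!\to\! v$ (resp.\ $v\!\to\! u$) path meets $V'_i$, the $Q$-trees already give $d_H(u,v)\le d(u,v)+2(1+\eps)^{i+2}$ by the triangle inequality through the nearest $q\in Q$; if not, the modified Dijkstra from $u$ computes $d(u,v)$ \emph{exactly}, so the error is $0$. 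Summing the two directions yields
\[
\rdh{H}{u}{v}\le \rd{u}{v}+4(1+\eps)^{i+2}\le \rd{u}{v}\bigl(1+4(1+3\eps)\bigr)\le (5+12\eps)\,\rd{u}{v},
\]
and the $5$ is simply $1+4$ (one roundtrip plus two additive detours of $2(1+\eps)^{i+2}$ each), with no decomposition into balls or recursion levels. The sparsity $\tilde O(n^{1.5}\log^2 M/\eps^2)$ then comes from $\tilde O(n^{1.5})$ tree edges from $Q$ per scale, plus $n\cdot\tilde O(\sqrt n\log M/\eps)$ modified-Dijkstra edges per scale, summed over the $O(\log M/\eps)$ scales. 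Your proposal is missing this central idea that the spanner edges are the modified-Dijkstra trees themselves, and without it the $(5+\eps)$ stretch does not follow.
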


We will start with a sampling approach, similar to that in the unweighted girth approximation. The pseudocode of the girth algorithm can be found in Algorithm \ref{alg:randsamp}, and we will refer to it at each stage of the proof.

\begin{lemma}
Let $G=(V,E)$ be a directed graph with $|V|=n$ and integer edge weights in $\{1,\ldots,M\}$.
Let $d$ be a positive integer, $\eps\geq 0$, and let $Q\subseteq V$ be a random sample of $100\sqrt n \log n$ vertices. In $\tilde{O}(m\sqrt n)$ time we can compute shortest paths trees $T^{in}(q), T^{out}(q)$ into and out of each $q\in Q$.
Let $H$ be the subgraph of $G$ consisting of the edges of these trees $T^{in}(q), T^{out}(q)$. Let $V'=\{v\in V~|~\exists q\in Q,~d(v,q)\leq d \textrm{ and } d(q,v)\leq d\}$.
Then: \begin{itemize}
\item {\bf Girth approximation:} If $V'\neq \emptyset$, then the girth of $G$ is at most $2d$.
\item {\bf Additive distance approximation:} For any $u,v\in V$, if the shortest $u$ to $v$ path contains a node of $V'$, then $d_H(u,v)\leq d(u,v)+2d$.
\item {\bf Sparsity:} The number of edges in $H$ is $\tilde{O}(n^{1.5})$.
\end{itemize}
\label{lemma:hitpath}
\end{lemma}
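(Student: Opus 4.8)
The plan is to establish the three bullet points essentially independently, using a random sample $Q$ of $100\sqrt n\log n$ vertices together with the in- and out-shortest-path trees $T^{in}(q),T^{out}(q)$. The running time claim (computing all these trees in $\tilde O(m\sqrt n)$ time) follows immediately from running Dijkstra's algorithm $2|Q| = \tilde O(\sqrt n)$ times, each in $\tilde O(m)$ time, after using Johnson's trick (or the fact that the graph, post-transformation, has only nonnegative weights except for $0$-weight auxiliary edges, which Dijkstra handles directly); so I will dispatch that in one sentence.

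\emph{Girth approximation.} If $V'\neq\emptyset$, pick $v\in V'$ and $q\in Q$ with $d(v,q)\le d$ and $d(q,v)\le d$. Then concatenating a shortest $v\to q$ path with a shortest $q\to v$ path yields a closed walk of length $\le 2d$ through $v$ (and through $q$), which contains a cycle, so the girth is at most $2d$. This is the easy part.

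\emph{Additive distance approximation.} Suppose the shortest $u\to v$ path $P$ contains a node $w\in V'$, and let $q\in Q$ witness $w\in V'$, i.e. $d(w,q),d(q,w)\le d$. Split $P$ at $w$ into $P_1$ (from $u$ to $w$) and $P_2$ (from $w$ to $v$); since $P$ is a shortest path, $|P_1| = d(u,w)$ and $|P_2| = d(w,v)$. Now route $u\to v$ in $H$ as follows: follow a shortest $u\to w$ path, then a shortest $w\to q$ path (length $\le d$), then the $q\to v$ path in $T^{out}(q)$, which has length $d(q,v)\le d(q,w)+d(w,v)\le d + |P_2|$. Wait — $T^{out}(q)$ gives $u'\to\cdots$ paths \emph{out of} $q$, so the $q\to v$ leg of length $d(q,v)$ lives in $H$; similarly the $u\to w$ leg needs care. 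The cleanest routing: go from $u$ to $q$ and then from $q$ to $v$, both using tree edges of $H$, giving $d_H(u,v)\le d_H(u,q)+d_H(q,v) = d(u,q)+d(q,v)$. Bound $d(u,q)\le d(u,w)+d(w,q)\le d(u,w)+d$ and $d(q,v)\le d(q,w)+d(w,v)\le d + d(w,v)$, and sum: $d_H(u,v)\le d(u,w)+d(w,v)+2d = d(u,v)+2d$ since $w$ lies on a shortest $u\to v$ path. The subtle point to get right is that $d_H(u,q) = d_G(u,q)$ because $T^{in}(q)$ contains a genuine shortest $u\to q$ path, and likewise $d_H(q,v)=d_G(q,v)$ via $T^{out}(q)$; I expect this bookkeeping — making sure the trees are oriented the right way so that both legs genuinely lie in $H$ — to be the main (though minor) obstacle.

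\emph{Sparsity.} Each of the $2|Q| = \tilde O(\sqrt n)$ shortest-path trees has at most $n-1$ edges, so $H$ has at most $\tilde O(\sqrt n)\cdot n = \tilde O(n^{1.5})$ edges. This is immediate.
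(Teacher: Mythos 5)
Your proof is correct and matches the paper's argument essentially line for line: the girth bound via concatenating a $v\to q$ and $q\to v$ shortest path, the additive bound via the triangle inequality $d_H(u,v)\leq d(u,q)+d(q,v)\leq d(u,w)+d(w,q)+d(q,w)+d(w,v)\leq d(u,v)+2d$ using $T^{in}(q)$ and $T^{out}(q)$, and sparsity/runtime by counting the $\tilde O(\sqrt n)$ trees. The only stylistic difference is your mid-proof ``Wait'' detour, after which you land on exactly the paper's routing; also note that since weights are in $\{1,\ldots,M\}$ Dijkstra applies directly without Johnson's trick.
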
 

\begin{proof}
Given a directed $G=(V,E)$ with $|V|=n$, $|E|=m$ and edge weights in $\{1,\ldots,M\}$, let us first take a random sample $Q\subseteq V$ of $100\sqrt n \log n$ vertices. Run Dijkstra's algorithm from and to every $q\in Q$. Determine $V'\subseteq V$ defined as those $v\in V$ for which there is some $q\in Q$ with $d(v,q),d(q,v)\leq d.$
If $V'\neq \emptyset$, we get that the girth of $G$ is at most $2d$.
Suppose that we insert all edges of the in- and out- shortest paths trees rooted at all $q\in Q$ into a subgraph $H$. Then we have only inserted $\tilde{O}(n^{1.5})$ edges as each tree has $\leq n-1$ edges. 

Consider some $u,v\in V$ such that there is some node $x\in V'$ on the shortest $u-v$ path. Let $q\in Q$ be such that $d(x,q),d(q,x)\leq d$. Then $$d_H(u,v)\leq d(u,q)+d(q,v)\leq d(u,x)+d(x,q)+d(q,x)+d(x,v)\leq d(u,v)+2d.$$

\end{proof}

Our approach below will handle the roundtrip spanner and the girth approximation at the same time. 

We will try all choices of integers $i$ from $0$ to $\log_{1+\eps}(Mn)$ to estimate roundtrip distances in the interval $[(1+\eps)^i,(1+\eps)^{i+1})$, and to estimate the girth if it is $< (1+\eps)^{i+1}$. 

Fix a choice for $i$ for now.

Our algorithm first applies the approach of Lemma~\ref{lemma:hitpath} by setting $d=(1+\eps)^{i+2}$ (we will see later why). We compute a random sample $Q$ of size $O(\sqrt n \log n)$, do Dijkstra's from and to all nodes in $Q$, and add the edges of the computed shortest paths trees to our roundtrip spanner $H$. We also compute 
$$V'_i=\{v\in V~|~\exists q\in Q:~d(v,q)\leq (1+\eps)^{i+2} \textrm{ and } d(q,v)\leq (1+\eps)^{i+2}\}.$$ 

By Lemma~\ref{lemma:hitpath}, if $V'_i\neq\emptyset$, the girth of $G$ must be $\leq 2(1+\eps)^{i+2}$. For the choice of $i$ where $(1+\eps)^i\leq g\leq (1+\eps)^{i+1}$, we will get an approximation factor of $2(1+\eps)^2\leq 2(1+3\eps)$.
Just as with the algorithm for unweighted graphs, we can pick the minimum $i$ so that $V'_{i}\neq \emptyset$, use $2(1+\eps)^{i+2}$ as one of our girth estimates and then proceed from now on with a single value $i-1$ considering only the interval $[(1+\eps)^{i-1},(1+\eps)^{i})$.

By Lemma~\ref{lemma:hitpath}, we also get that for any $u,v\in V$  for which the $u$-$v$ shortest path contains a node of $V'_i$, $H$ gives a good additive estimate of $d(u,v)$, i.e. $d(u,v)\leq d_H(u,v)\leq d(u,v)+2(1+\eps)^{i+2}.$

Suppose that also $(1+\eps)^i\leq \rd{u}{v}\leq (1+\eps)^{i+1}$, and that we somehow also get a good estimate for $d(v,u)$ (either because the $v$-$u$ shortest path contains a node of $V'$, or by adding more edges to $H$), so that also 
$d(v,u)\leq d_H(v,u)\leq d(v,u)+2(1+\eps)^{i+2}.$
Then,
\[\rd{u}{v}\leq \rdh{H}{u}{v}\leq \rd{u}{v}+4(1+\eps)^{i+2}\leq \rd{u}{v}(1+4(1+3\eps))\leq  \rd{u}{v}(5+12\eps).\]

In other words, we would get a $5+O(\eps)$-roundtrip spanner, as long as by adding $\tilde{O}(n^{1.5})$ edges to $H$, we can get a good additive approximation to the weights of the $u$-$v$ shortest paths that do not contain nodes of $V'_i$, for all $u,v$ with $(1+\eps)^i\leq \rd{u}{v}\leq (1+\eps)^{i+1}$. We will in fact compute these shortest paths exactly. For the girth $g$ itself, we will show how to compute it exactly, if no node of $V'_i$ hit the shortest cycle, where $i$ is such that $(1+\eps)^{i-1}\leq g\leq (1+\eps)^i$.


Fix $i$.
Let $Z_i=V\setminus V'_i$ and $d=(1+\eps)^{i+1}$. We can focus on the subgraph induced by $Z_i$.


Consider any $u\in Z_i$, and $j\leq i$.
Define $Z^j_i(u)=Z_i\cap B^j(u)$ and $\bar{Z}^j_i(u)=Z_i\cap \bar{B}^j(u)$. We also add the boundary case $Z^{\emptyset}_i=Z_i\cap B^{\emptyset}(u)=\{x\in Z_i~|~d(u,x)=0\}$.


If for all $j\in \{\emptyset\}\cup \{1,\ldots,i\}$, $|Z^j_i(u)|\leq 100\sqrt n$, running Dijkstra's algorithm from $u$ in the graph induced by $Z_i$, up to distance $(1+\eps)^{i+1}$ would be cheap.
Unfortunately, however, some $Z_i^j(u)$ balls can be larger than $100\sqrt n$. In this case, similarly to our approach for the unweighted case, we will replace $Z_i^j(u)$ with a set $Z_i^{\prime j}(u)\subseteq \bar{Z}_i^j(u)$ of size $O(\sqrt n)$ with the guarantee that 
for any $v\in V$ with $(1+\eps)^i\leq \rd{u}{v}< (1+\eps)^{i+1}$ for which the shortest $u$-$v$ path does not contain a node of $V'_i$, every node of this $u$-$v$ shortest path that is in $Z_i^j(u)$ is also in $Z_i^{\prime j}(u)$. 

The following lemma shows how to use such replacement sets.

\begin{lemma} Let $u$ and $i$ be fixed.
Suppose that for every $j\in \{\emptyset\}\cup \{1,\ldots,i\}$ we are given black box access to sets $Z_i^{\prime j}(u)\subseteq \bar{Z}_i^j(u)$ of nodes such that (1) Checking whether a node $z$ is in $Z_i^{\prime j}(u)$ takes $t(n)$ time, (2) $|Z^{\prime j}_i(u)|\leq 100\sqrt n$ whp, and (3) for any $v$ such that $(1+\eps)^i\leq \rd{u}{v}\leq (1+\eps)^{i+1}$, and every $j\leq i$, every node on the shortest path $P$ from $u$ to $v$ that is in $Z_i^j(u)$ is also in $Z_i^{\prime j}(u)$.

Then there is an $\tilde{O}(m \log(M) t(n)/(\eps\sqrt n))$ time algorithm that finds a shortest path from $u$ to any $v$ with $(1+\eps)^i\leq \rd{u}{v}\leq (1+\eps)^{i+1}$ and s.t. the shortest $u$-$v$ path does not contain a node of $V_i'$. The algorithm returns $\tilde{O}(n^{0.5}\log(M)/\eps)$ edges whose union contains all these shortest paths.\label{lemma:moddijk}
\end{lemma}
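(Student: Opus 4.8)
The plan is to run a \emph{modified Dijkstra} from $u$ inside the subgraph induced by $Z_i$, truncated at distance $(1+\eps)^{i+1}$, in direct analogy with the modified BFS of Claim~\ref{claim:modbfs}. Maintain a priority queue of tentative distances, initialized with $u$ at $0$; repeatedly extract the minimum vertex $x$, with settled distance $\delta(x)$; if $\delta(x)\geq (1+\eps)^{i+1}$, stop; otherwise compute the band $j=j(x)$, i.e.\ the index with $(1+\eps)^{j}\leq \delta(x)<(1+\eps)^{j+1}$ (and $j=\emptyset$ when $\delta(x)=0$), query the black box whether $x\in Z_i^{\prime j}(u)$ in $t(n)$ time, and, only if the answer is ``yes'', relax the (at most $O(m/n)$) out-edges of $x$ and record the edge from the current parent of $x$ to $x$. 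At the end return all recorded edges.

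For correctness, fix a $v$ with $(1+\eps)^i\leq \rd{u}{v}\leq (1+\eps)^{i+1}$ whose shortest $u$-$v$ path $P$ avoids $V_i'$, and write $P=(x_0=u,x_1,\dots,x_\ell=v)$. Since every prefix of a shortest path is a shortest path, $\delta^*_k:=d(u,x_k)$ equals the length of $P$ up to $x_k$, and $\delta^*_\ell=d(u,v)\leq \rd{u}{v}\leq (1+\eps)^{i+1}$; since $P$ misses $V_i'$ we have $x_k\in Z_i$, hence $x_k\in Z_i^{j(x_k)}(u)$ for the band $j(x_k)\leq i$ determined by $\delta^*_k$, and property (3) gives $x_k\in Z_i^{\prime j(x_k)}(u)$. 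Now induct on $k$: by the standard Dijkstra argument, when $x_k$ is extracted its settled distance is $\delta^*_k$ — the one point beyond the textbook proof is that the predecessor $x_{k-1}$ must actually have \emph{relaxed} $x_k$, which holds because, by the inductive hypothesis, $x_{k-1}$ was settled at distance $\delta^*_{k-1}$, so its band was computed correctly, the membership query $x_{k-1}\in Z_i^{\prime j(x_{k-1})}(u)$ returned ``yes'', and its out-edge $(x_{k-1},x_k)$ was relaxed. In particular $v$ is settled at $d(u,v)<(1+\eps)^{i+1}$ (so it is not pruned) and passes its own membership test, so the parent-pointer path from $v$ is a genuine shortest $u$-$v$ path, each of whose vertices lies in $\{u\}\cup\bigcup_j Z_i^{\prime j}(u)$ (each was relaxed from, hence is ``active'') and each of whose edges was recorded.

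Finally, a vertex is relaxed from (``active'') only after passing a membership test, so for each of the $O(\log_{1+\eps}(Mn))=\tilde{O}(\log(M)/\eps)$ bands there are at most $100\sqrt n$ active vertices whp, i.e.\ $\tilde{O}(\sqrt n\log(M)/\eps)$ active vertices in all; every vertex ever extracted from the queue is either $u$ or an out-neighbor of an active vertex. Each active vertex costs one $t(n)$-time membership query, $O(m/n)$ edge relaxations, and $O(\log n)$ priority-queue work per relaxation, for a total of $\tilde{O}\big(\sqrt n\log(M)\eps^{-1}(t(n)+m/n)\big)=\tilde{O}\big(m\log(M)\,t(n)/(\eps\sqrt n)\big)$ (using $m\geq n$ by strong connectivity), with at most $\sum_j|Z_i^{\prime j}(u)|=\tilde{O}(n^{0.5}\log(M)/\eps)$ recorded edges. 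The subtle point — the ``main obstacle'' — is purely bookkeeping around the bands: property (3) is phrased in terms of the \emph{true} band of a vertex, so the proof must guarantee that the modified Dijkstra has already reconstructed $d(u,\cdot)$ exactly for every vertex of every relevant shortest path by the time it is extracted; vertices that lie on no such path may be settled at overestimates and filed under the wrong band, but this never causes a needed relaxation to be skipped, so it does no harm (zero-weight edges created by the degree reduction are absorbed by the usual Dijkstra tie-breaking argument together with the boundary band $B^{\emptyset}$).
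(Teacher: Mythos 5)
Your proof is correct and follows essentially the same route as the paper's: a modified Dijkstra from $u$ that, on extracting a vertex $x$, computes the $(1+\eps)$-band $j$ of its settled estimate $d[x]$, queries the black box for $x\in Z_i^{\prime j}(u)$, and relaxes out-edges only on a ``yes'', together with the induction along the shortest $u$-$v$ path to show each $x_k$ is settled at $d(u,x_k)$. Your accounting of the running time and output-edge count matches the paper's, and your explicit remark that vertices off the relevant shortest paths may be filed under the wrong band without causing any needed relaxation to be skipped is a slightly more careful version of a point the paper's proof leaves implicit.
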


\begin{proof}
Assume we have the sets $Z_i^{\prime j}(u)$ for $j\in \{\emptyset\}\cup \{1,\ldots,i\}$ as in the statement of the lemma.

Then we will define a modified Dijkstra's algorithm out of $u$. The algorithm begins by placing $u$ in the Fibonacci heap with $d[u]=0$ and all other vertices with $d[\cdot]=\infty$. When a vertex $x$ is extracted from the heap with estimate $d[x]$, we determine the $j$ for which $(1+\eps)^j\leq d[x]<(1+\eps)^{j+1}$; here $j$ could be the boundary case that we called $\emptyset$ if $d[x]=0$.
Then we check whether $x$ is in $Z_i^{\prime j}$. If it is not, we ignore it and extract a new vertex from the heap. Otherwise if $x\in Z_i^{\prime j}$, we go through all its out-edges $(x,y)$, and if $d[y]>d[x]+w(x,y)$, we update $d[y]=d[x]+w(x,y)$. 
For any new cycle to $u$ found, we update the best weight found, and in the end we return it. See Procedure $\textsc{ModDijkstra}$ in Algorithm \ref{alg:randsamp}.

Since we only go through the edges of at most $O(\sqrt n \log (Mn) / \eps)$ vertices and the degrees are all $O(m/n)$, the runtime is $O(m\log (Mn) / (\eps \sqrt n))$. For the same reason, the modified shortest paths tree whose edges we add to our roundtrip spanner has at most $O(\sqrt n \log (Mn) / \eps)$ edges.

Let $v$ be such that $(1+\eps)^i\leq \rd{u}{v}\leq (1+\eps)^{i+1}$ and for which the shortest $u$-$v$ path does not contain a node of $V'_i$. We will show by induction that our modified Dijkstra's algorithm will compute the shortest path from $u$ to $v$ exactly. 

The induction will be on the distance from $u$.
Let's call the nodes on the shortest $u$ to $v$ path, $u=u_0,u_1,\ldots,u_t=v$. The induction hypothesis for $u_k$ is that $u_k$ is extracted from the heap with $d[u_k]=d(u,u_k)$. Let us show that $u_{k+1}$ will also be extracted from the heap with $d[u_{k+1}]=d(u,u_{k+1})$. The base case is clear since $u$ is extracted first.

When $u_k$ is extracted from the heap, by the induction hypothesis, $d[u_k]=d(u,u_k)$. Let $j$ be such that $(1+\eps)^j\leq d[u_k]<(1+\eps)^{j+1}$. As no node on the $u$-$v$ shortest path is in $V_i'$, we get that $u_k\in Z_i^j$. By the assumptions in the lemma, we also have that $u_k\in Z_i^{\prime j}$. Thus, when $u_k$ is extracted, we will go over its edges. In particular, $(u_k,u_{k+1})$ will be scanned, and $d[u_{k+1}]$ will be set to (or it already was) $d[u_k]+w(u_k,u_{k+1})=d(u,u_{k+1})$.
This completes the induction.

It is also not hard to see that the girth will be computed exactly if $u$ is on a shortest cycle, the girth is in $[(1+\eps)^i,(1+\eps)^{i+1})$ and $V'_i$ is empty.
\end{proof}

Now we compute the sets $Z'^j_i(u)$. 
First consider $u,v\in V$ with $(1+\eps)^i\leq \rd{u}{v}< (1+\eps)^{i+1}$.
Let $x$ be any node on the $u$ to $v$ roundtrip path (cycle) so that $x\in Z_i^j(u)$ for some integer $j\leq i$.
Recall that this means $(1+\eps)^j\leq d(u,x)<(1+\eps)^{j+1}$.
Then for every $y$ with $d(u,y)<(1+\eps)^{j+1}$ and so for each $y\in \bar{Z}_i^j(u)$ we must have (see Figure \ref{fig:aroundcyclewt}) that 
\[d(x,y)\leq d(x,u)+d(u,y)\leq \rd{u}{v}-d(u,x)+d(u,y) \leq \rd{u}{v}-(1+\eps)^j+(1+\eps)^{j+1}\]
\[=
\rd{u}{v}+\eps(1+\eps)^{j}\leq \rd{u}{v}+\eps(1+\eps)^{i}\leq \rd{u}{v}(1+\eps)\leq
(1+\eps)^{i+2}.\]

\begin{figure}[h]
  \centering
  \includegraphics[width=.45\linewidth]{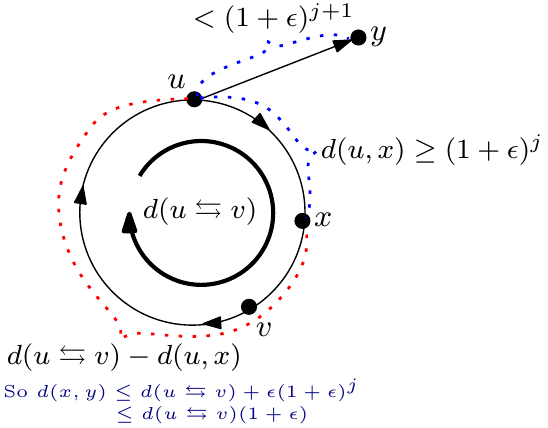}
  \caption{Here $u$ and $v$ have roundtrip distance more than $(1+\eps)^j$. A node $x$ on the shortest $u$-$v$ path is at distance at least $(1+\eps)^j$ from $u$, 
 and another node $y$ is at distance at most $(1+\eps)^{j+1}$ from $u$. Then the distance from $x$ to $y$ is at most $\rd{u}{v}(1+\eps)$ since one way to go from $x$ to $y$ is to go from $x$ to $u$ along the $u$-$v$ roundtrip cycle at a cost of at most $\rd{u}{v}-(1+\eps)^j$, and then from $u$ to $y$ at a cost of at most $(1+\eps)^{j+1}$.}
  \label{fig:aroundcyclewt}
\end{figure}

In other words, $x$ must be in $\{w\in \bar{Z}_i^j(u)~|~d(w,y)\leq (1+\eps)^{i+2} ,~\forall y\in \bar{Z}_i^j(u)\}$. 

We apply Lemma \ref{lemma:modifiedDijkstra} for $\beta = (1+\epsilon)$ and $\alpha = 1/2$. It outputs sets $R_i^j(u)$ of size $O(\log^2{n})$ vertices, where the number of vertices in $\bar{Z}_i^j(u)$ that are at distance $(1+\epsilon)^{i+2}$ from all vertices in $R_i^j(u)$ is $O(\sqrt{n})$ (See Procedure $\textsc{RandomSamplesWt}$ in Algorithm \ref{alg:randsamp}). So all vertices $x\in Z_i^j(u)$ that are in a roundtrip path $u-v$ with $(1+\eps)^i\leq \rd{u}{v}< (1+\eps)^{i+1}$ are in this set, so we let $Z_i^{\prime j}(u) = \{w\in \bar{Z}_i^j(u)| d(w,y)\le (1+\epsilon)^{i+2},~\forall y\in R_i^j(u)\}$.

Now that we have the random samples, we implement the modified Dijkstra's algorithm from Lemma~\ref{lemma:moddijk} with only a polylogarithmic overhead as follows: 

Fix some $j$. Let's look at the vertices $x$ with $(1+\eps)^j\leq d[x]<(1+\eps)^{j+1}$ that the modified Dijkstra's algorithm extracts from the heap. Since $d[x]$ is always an overestimate, $d(u,x)\leq d[x]<(1+\eps)^{j+1}$, and so $x\in \bar{B}^j(u)$. Now, since $x$ is already in $\bar{B}^j(u)$, to check whether $x\in Z'^j_i(u)$, we only need to check whether $x\in Z_i$ (easy) and whether $d(x,y)\leq (1+\eps)^{i+2}$ for all $y\in R^j_{i}(u)$ (this takes $O(\log^2 n)$ time since we have all the distances to the nodes in the random samples).

The final running time is $\tilde{O}(m\sqrt n \log^2(M)/\eps^2)$ since we need to run the above procedure $O(\log(Mn)/\eps)$ times, once for each $i$, and each procedure costs $\tilde{O}(m\log(M)\sqrt n/\eps)$ time.
As we mentioned before, to estimate the girth to within a $(2+\eps)$-factor, we do not need to run the procedure for all $i$ but (as with the algorithm for unweighted graphs), only for the minimum $i$ for which $V'_{i+1}\neq \emptyset$. Thus the running time for the girth becomes  $\tilde{O}(m\sqrt n \log(M)/\eps)$. See Procedure $\textsc{GirthApproxWt}$ in Algorithm \ref{alg:randsamp}.
\begin{algorithm}
{
\fontsize{10}{10}\selectfont
\caption{$2+\eps$-Approximation algorithm for the girth in weighted graphs.}
\label{alg:randsamp}
\SetKwProg{procedure}{Procedure}{}{}
\procedure{$\textsc{RandomSamplesWt}(G=(v,E),i,\eps)$}{
    \ForEach{$j\in \{1,\ldots,i\}$}{
        \ForEach{$k\in \{1,\ldots,2\log n\}$}{
            Let $S_{j,k}\subseteq V$ be a uniform random sample of $100\sqrt n \log n$ vertices.\;
            \ForEach{$s\in S_{j,k}$}{
                Run Dijkstra's to and from $s$ to compute for all $v$, $d(s,v)$ and $d(v,s)$.\;
            }
        }  
    }
    \ForEach{$u\in V$}{
        \ForEach{$j\in \{0,\ldots,i\}$}{
            $R^j(u)\leftarrow \emptyset$.\;
            \ForEach{$k\in \{1,\ldots,2\log n\}$}{
                $T_k^j(u)\leftarrow \{s\in S_{j,k}~|~d(u,s)< (1+\eps)^{j+1} \textrm{ and for all } y\in R^j(u):~d(s,y)\leq (1+\eps)^{i+2}\}$.\;
            
                \If{$|T_k^j(u)|< 10\log n$}{
                    $R^j(u)\leftarrow R^j(u)\cup T_k^j(u)$\;
                    Exit this loop (over $k$).
                 }
                \Else{
                    Let $R_k^j(u)$ be a uniform random sample of $10\log n$ nodes from $T_k^j(u)$.\;
                    $R^j(u)\leftarrow R^j(u)\cup R_k^j(u)$.\;
                }
            }
        }
    } Return the sets $R^j(u)$ for all $j\leq i$, $u\in V$, and $d(s,v),d(v,s)$ for all $s\in \cup_{j,k}S_{j,k}$ and $v\in V$.
}

\label{alg:moddijk}
\SetKwProg{procedure}{Procedure}{}{}
\procedure{$\textsc{ModDijkstra}(G=(v,E),u,i,\eps,R^1(u),\ldots,R^i(u)), d(\cdot)$}{
    // $d(\cdot)$ contains $d(s,v),d(v,s)$ for all $s\in \cup_{j,k}S_{j,k}$ and $v\in V$.\;

    $F\leftarrow$ empty Fibonacci heap\;
    $Extracted\leftarrow$ empty hash table\;
    $F.insert(u, 0)$\;
    $g_u\leftarrow\infty$\;
    \While{$F$ is nonempty}{
        $(x,d[x])\leftarrow F.extractmin$\;
        $Extracted.insert(x)$\;
        \If{for every $s\in R^j(u)$, $d(x,s)\leq (1+\eps)^{i+2}$}{
            \ForEach{$y$ s.t. $(x,y)\in E$}{
                \If{$y\notin Extracted$}{
                    \If{$y$ is in $F$}{
                        $F.DecreaseKey(y,d[x]+w(x,y))$\;
                    }
                    \Else{
                        $F.insert(y,d[x]+w(x,y))$\;
                    }
                }
                \If{$y=u$}{
                    $g_u\leftarrow \min\{g_u,d[x]+w(x,y)\}$\;
                }
            }
        }
    }
    
    Return $g_u$\;
}    

\label{alg:girthapp2}
\SetKwProg{procedure}{Procedure}{}{}
\procedure{$\textsc{GirthApproxWt}(G=(V,E),\eps)$}{  
    Let $Q\subseteq V$ be a uniform random sample of $100n^{1/2}\log n$ nodes.\; 
    \ForEach{$s\in Q$}{
            Do Dijkstra's from and to $s$ in $G$\;
    }
    Let $i$ be the minimum integer s.t. $\exists s\in Q$ and $\exists v\in V$ with $d(s,v)< (1+\eps)^{i+2}$ and $d(v,s)< (1+\eps)^{i+2}$.\;
    $g_{med}\leftarrow \min_{s\in Q, v\in V} d(s,v)+d(v,s)$ // $g_{med}<2(1+\eps)^{i+2}$\;
    \If{$i<0$}{
        // Here $i=-1$ and $d(s,v)=d(v,s)=1$\;
        Return $g_{med}$\;
    }
    
    Run $\textsc{RandomSamplesWt}(G,i,\eps)$ to obtain sets $R^j(u)$ for all $j\leq i$, $u\in V$, and $d(\cdot)$ containing $d(s,v),d(v,s)$ for all $s\in \cup_{j,k}S_{j,k}$ and $v\in V$\;
    
    \ForEach{$u\in V$}{
        $g_u\leftarrow \textsc{ModDijkstra}(G,u,i,\eps,R^1(u),\ldots,R^i(u), d(\cdot))$\;
    }
    
    $g\leftarrow \min\{g_{med}, \min_{u\in V} g_u\}$\;
    Return $g$\;
}
}
\end{algorithm}

\subsection{$(4+\epsilon)$-Approximation Algorithm for the Girth in $\tilde{O}(mn^{\sqrt{2}-1})$ Time}
In this section we are going to prove the modified version of Theorem \ref{thm:4approx}, where a $\log{M}$ factor is added to the running time with $M$ being the maximum edge weight.  

\begin{theorem}
\label{thm:modified_4approx}
{For every $\eps>0$, there is a $(4+\eps)$-approximation algorithm for the girth in directed graphs with edge weights in $\{1,\ldots,M\}$ that runs in $\tilde{O}(mn^{\sqrt{2}-1}\log(M) /\eps)$ time.}
\end{theorem}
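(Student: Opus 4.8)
The plan is to build a two-level recursive version of the $(2+\eps)$-approximation algorithm of Theorem~\ref{thm:2approx}; iterating the construction $k$ times gives Theorem~\ref{thm:genk}, and the case $k=2$, where the defining equation $\alpha_2(1+\alpha_2)=1-\alpha_2$ has the positive solution $\alpha_2=\sqrt2-1$, is exactly Theorem~\ref{thm:modified_4approx}. Recall the shape of the $(2+\eps)$-approximation: at the relevant scale $[(1+\eps)^i,(1+\eps)^{i+1})$ one samples an outer hitting set $Q$ of size $\tilde{O}(\sqrt n)$ and applies Lemma~\ref{lemma:hitpath}; if $V'_i\neq\emptyset$ one is done with a $2(1+\eps)^2$-approximation at that scale, and otherwise every vertex has at most $\tilde{O}(\sqrt n)$ vertices roundtrip-$\le d$-close to it, so by Lemma~\ref{lemma:setreduce} (packaged as Lemma~\ref{lemma:modifiedDijkstra}) each ball $\bar B^j(u)\cap Z_i$ can be shrunk to size $\tilde{O}(\sqrt n)$ while still containing every vertex of every relevant cycle through $u$, after which the restricted Dijkstra of Lemma~\ref{lemma:moddijk}, run from every $u$, finds the girth exactly in $\tilde{O}(m\sqrt n\log M/\eps)$ time.

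The first change is to take the outer hitting set $Q$ smaller, of size $\tilde{O}(n^{\alpha_2})$ rather than $\tilde{O}(\sqrt n)$, so that the Dijkstras from and to $Q$ (and the shortest-path trees of Lemma~\ref{lemma:hitpath}, and the sampling of Lemma~\ref{lemma:modifiedDijkstra}) cost only $\tilde{O}(mn^{\alpha_2}\log M/\eps)$. The price is that, in the case $V'_i=\emptyset$, the argument now only guarantees that every vertex has at most $\tilde{O}(n^{1-\alpha_2})$ vertices roundtrip-$\le d$-close to it, so the first round of ball reduction shrinks each ball only to size $\tilde{O}(n^{1-\alpha_2})$, which is too large for the restricted Dijkstra of Lemma~\ref{lemma:moddijk} (running it from every $u$ would cost $\tilde{O}(mn^{1-\alpha_2}\log M/\eps)$). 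The new ingredient is a second round of reduction carried out inside the subgraphs induced by the first-round balls: such a ball has only $\tilde{O}(n^{1-\alpha_2})$ vertices and, since all degrees are $O(m/n)$, only $\tilde{O}(mn^{-\alpha_2})$ edges, so one can afford more work per vertex there. Inside it I would rerun the hitting-set-plus-reduction scheme once more --- reusing the distances already computed in the first round together with a few additional globally sampled sources --- to shrink the ball to size $\tilde{O}(n^{\alpha_2})$ while retaining the vertices of the relevant cycles, or else to certify a short cycle. The certified cycle in this second round is only guaranteed to have length $\le 4(1+\eps)^{O(1)}g$ (rather than $\le 2(1+\eps)^{O(1)}g$), because keeping all the relevant cycle vertices through an already-thinned ball forces this round to work with a radius larger by a factor roughly $2$; this is precisely where the approximation factor degrades from $2+\eps$ to $4+\eps$. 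With doubly-reduced balls of size $\tilde{O}(n^{\alpha_2})$, the final restricted Dijkstra of Lemma~\ref{lemma:moddijk} from every $u$ costs $\tilde{O}(mn^{\alpha_2}\log M/\eps)$.

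The exponent is then fixed by equating the dominant costs --- the outer Dijkstras and sampling, $\tilde{O}(mn^{\alpha_2})$; the total second-round work over all $\tilde{O}(n)$ first-round balls; and the final restricted Dijkstras, $\tilde{O}(mn^{\alpha_2})$ --- which yields $\alpha_2(1+\alpha_2)=1-\alpha_2$, i.e.\ $\alpha_2=\sqrt2-1$. The $\log M$ factor is then removed exactly as for the other weighted algorithms at the end of Section~\ref{sec:weighted}, giving Theorem~\ref{thm:4approx}.

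The main obstacle is making the second round genuinely pay for itself on both fronts at once. For the running time, there are $\tilde{O}(n)$ first-round balls, so one has to be careful which samplings are global (amortized across all balls via shared Dijkstras) and which must be redone locally inside each ball; only a specific split makes the second-round total land at $\tilde{O}(mn^{\alpha_2})$ rather than something larger. For the approximation, one has to track how the radius used for the reduction grows between the rounds, ensuring that every vertex of a shortest cycle through $u$ survives into the doubly-reduced ball while the ``caught'' case costs exactly one extra factor of $2$. Reconciling these two requirements simultaneously --- and then iterating the same template with $k$ nested rounds, where the reduction radius grows by an additive $\Theta(g)$ per round --- is the heart of the argument and yields Theorem~\ref{thm:genk}.
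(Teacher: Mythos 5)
Your high-level intuitions are right --- the exponent equation $\alpha_2(1+\alpha_2)=1-\alpha_2$, the factor-of-two blowup in the approximation coming from doubling the radius $\beta$ in Lemma~\ref{lemma:modifiedDijkstra} (from $1+\eps$ to $2(1+\eps)$), and the two-level nesting. But the mechanism you propose for the running time is not the paper's, and as described it does not close. You describe, for each $u$ separately, running a second round of hitting-set-plus-reduction inside the first-round ball of size $\tilde{O}(n^{1-\alpha_2})$ to shrink it to $\tilde{O}(n^{\alpha_2})$. This second shrink cannot be done using the globally sampled distances: with a global hitting set $Q$ of size $\tilde{O}(n^{\alpha_2})$, the only guarantee is that each vertex has at most $\tilde{O}(n^{1-\alpha_2})$ roundtrip-$\le d$-close neighbors, and iterating Lemma~\ref{lemma:setreduce} only reduces $|S|$ while $0.2|S|$ dominates that fixed bound, so the ball stalls at size $\Theta(n^{1-\alpha_2})$ --- exactly where the first round leaves it. Pushing below $n^{1-\alpha_2}$ requires per-ball sampling, and since there are $\tilde{O}(n)$ balls you cannot afford even one Dijkstra inside each one. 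You flag this yourself (``only a specific split makes the second-round total land at $\tilde{O}(mn^{\alpha_2})$'') but never produce the split, and I don't think one exists in the form you're imagining.

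The paper's actual proof avoids a second shrink entirely. It maintains an ``on/off'' marking of vertices and amortizes: for a vertex $u$ whose first-round ball $S_i(u)$ (up to distance $g'/2$) has size $\le n^\alpha$ you run a small Dijkstra and mark only $u$ off; but if $|S_i(u)|>n^\alpha$, that is \emph{good news}: you do a single wider modified Dijkstra (radius $\beta g'$, out to distance $3g'/2$) reaching $O(n^{1-\alpha})$ vertices, \emph{recurse the $(4+\eps)$-approximation algorithm on the subgraph induced by those vertices}, and mark all $\ge n^\alpha$ vertices of $S_i(u)$ off so they are never reprocessed. The recursion on an $O(n^{1-\alpha})$-vertex, $O((m/n)n^{1-\alpha})$-edge graph costs $\tilde{O}((m/n)(n^{1-\alpha})^{1+\alpha})$, and charging it to the $\ge n^\alpha$ newly-off vertices gives per-vertex cost $\tilde{O}((m/n)n^{1-\alpha^2-\alpha})$, which matches the easy case $\tilde{O}((m/n)n^\alpha)$ exactly when $\alpha(1+\alpha)=1-\alpha$. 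The two crucial ideas missing from your writeup are the off-marking amortization and the self-recursion on the induced subgraph; without them the big-ball case is not paid for.
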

\begin{proof}
Suppose that we want an $\tilde{O}(mn^\alpha)$ time girth approximation algorithm. 
Let $\beta = 2(1+\epsilon)$. 
 As a first step, we sample a set $Q$ of $\tilde{O}(n^{\alpha})$ vertices and do in and out Dijkstra from them.

We let $V'_i=\{v\in V~|~\exists q\in Q:~d(v,q)\leq \beta(1+\eps)^{i+1} \textrm{ and } d(q,v)\leq \beta(1+\eps)^{i+1}\}$. If $V_i'\neq \emptyset$ for some $i$, then we have that the girth $g$ is at most $2\beta(1+\epsilon)^{i+1}$. If $(1+\epsilon)^i\le g\le (1+\epsilon)^{i+1}$, this is a $2\beta(1+\epsilon)\le 4(1+3\epsilon)=4+O(\epsilon)$ approximation.

So take the minimum $i$ where $V'_{i+1}\neq \emptyset$. Let $g'=(1+\epsilon)^{i+1}$ be our current upper bound for the girth $g$. We initially mark all vertices ``on", meaning that they are not processed yet. For each on vertex $u$, we either find the smallest cycle of length at most $g'$ passing through $u$ where all vertices of the cycle are on, or conclude that there is no cycle of length at most $g'$ passing through $u$. When a vertex $u$ is processed, we mark it as ``off". We proceed until all vertices are off.





We apply Lemma \ref{lemma:modifiedDijkstra} for $\beta = 2(1+\epsilon)$. Note that since $V'_i=\emptyset$, $Z^j_i(u)=B^j(u)$ is all the vertices at distance $[(1+\epsilon)^j, (1+\epsilon)^{j+1})$ from $u$. The lemma outputs sets $R_i^j(u)\subseteq Z^j_i(u)$, where $|R_i^j(u)|=O(\log^2{n})$ and the number of vertices in $B^j(u)$ at distance $\beta g'$ from $R_i^j(u)$ is at most $O(n^{1-\alpha})$ whp.  
Fix some on vertex $u$. We do modified Dijkstra from $u$ up to vertices with distance at most $g'/2$ from $u$ as follows:

We begin by placing $u$ in the Fibonacci heap with $d[u]=0$ and all other on vertices with $d[\cdot]=\infty$. When a vertex $x$ is extracted from the heap with estimate $d[x]$, we determine the $j$ for which $(1+\eps)^j\leq d[x]<(1+\eps)^{j+1}$; here $j$ could be the boundary case that we called $\emptyset$ if $d[x]=0$.
Then we check whether $d(x,r)\le g' - (1+\epsilon)^{j} + (1+\epsilon)^{j'+1}$ for all $r\in R^{j'}_i(u)$ for all $j'$. If $x$ does not satisfy this condition, we ignore it and extract a new vertex from the heap. Otherwise, we go through all its out-edges $(x,y)$, and if $d[y]>d[x]+w(x,y)$, we update $d[y]=d[x]+w(x,y)$. 
We stop when the vertex $u$ extracted from the heap has $d[u]> g'/2$. 

Let $S_i(u)$ be the set of all the vertices visited in the modified out-Dijkstra. Simillarly, let $T_i(u)$ be all the vertices visited in the analogous modified in-Dijkstra (using an analogous version of Lemma \ref{lemma:modifiedDijkstra}). 

Suppose that there is a vertex $v$ with $\rd{u}{v}\le g'$, where all vertices in the $uv$ cycle $C$ are on. Without loss of generality, suppose that $d_C(u,v)\le g'/2$. So $d(u,v)\le g'/2$. Moreover, suppose that $v\in Z_i^j(u)$, i.e. $(1+\epsilon)^j\le d(u,v) \le (1+\epsilon)^{j+1}$. So for any vertex $w\in Z_i^{j'}(u)$ for some $j'$ we have that $
  d(v,w)\le d(v,u)+d(u,w) \le g'-(1+\epsilon)^j+(1+\epsilon)^{j'+1}. 
$ Since all vertices on the $uv$ path that is part of the cycle are on and the length of this path is at most $g'/2$, we visit $v$ in the out-Dijkstra, i.e. $v\in S_i(u)$. Similarly, if $d(v,u)\le g'/2$, we visit $v$ in the in-Dijkstra and so $v\in T_i(u)$.


If both $S_i(u)$ and $T_i(u)$ have size at most $n^\alpha$, we do Dijkstra from $u$ in the induced subgraph on $S_i(u)\cup T_i(u)$, and see if there is a cycle of length at most $g'$ passing through $u$ (and find the smallest such cycle), which takes $O(\frac{m}{n}n^\alpha)$ time. We take the length of this cycle as one of our estimates. The modified in and out Dijkstras take $O(\log^2{n}.\frac{\log{nM}}{\eps}.n^\alpha.\frac{m}{n})$, as checking the conditions for each $x$ extracted from the heap takes $O(\log^2{n}.\frac{\log{nM}}{\eps})$ time.
So in $\tilde{O}(\frac{\log{M}}{\eps}n^\alpha.\frac{m}{n})$ time we process $u$ and mark it as "off and proceed to another vertex.




Suppose $S_i(u)$ has size bigger than $n^\alpha$ (the case where $T_i(u)$ has size bigger than $n^\alpha$ is similar). Note that by Lemma \ref{lemma:modifiedDijkstra} we have $|S_i(u)|\le O(n^{1-\alpha})$ because for each $r\in R_i^j(u)$, we have that $d(x,r)\le g'-(1+\eps)^j+(1+\eps)^{j+1}\le g'+\eps(1+\eps)^j\le g'+\eps g'/2\le \beta g'$. So it is a subset of vertices that are at distance at most $\beta g'$ from all samples in $R_i^j$ for all $j$. Our new goal is the following: 
 We want to either find the smallest cycle of length at most $g'$ passing through $S_i(u)$ that contains no off vertices, or say that there is no cycle of length $\le g'$ passing through any of the vertices in $S_i(u)$ whp. 

For this, we do another Modified Dijkstra from $u$ as follows:

We begin by placing $u$ in the Fibonacci heap with $d[u]=0$ and all other on vertices with $d[\cdot]=\infty$. When a vertex $x$ is extracted from the heap with estimate $d[x]$, we determine the $j$ for which $(1+\eps)^j\leq d[x]<(1+\eps)^{j+1}$; here $j$ could be the boundary case that we called $\emptyset$ if $d[x]=0$.
Then we check whether $d(x,r)\le \beta g'=2(1+\eps)g'$ for all $r\in R_i^j(u)$. If it is not, we ignore it and extract a new vertex from the heap. Otherwise, we go through all its out-edges $(x,y)$, and if $d[y]>d[x]+w(x,y)$, we update $d[y]=d[x]+w(x,y)$. We stop when the vertex $u$ extracted from the heap has $d[u]> 3g'/2$.

We show that if there is a cycle of length at most $g'$ going through $v\in S_i(u)$ containing to off vertex, all vertices of the cycle are among the vertices we visit in the modified Dijkstra: Suppose that $\rd{w}{v}\le g'$, and suppose that $v\in Z_i^j(u)$ and $w\in Z_i^{j'}(u)$. Then for every $r\in R_i^{j'}(u)$, we have that $d(w,r)\le d(w,v)+d(v,r)\le g'-d(v,w)+ g'-(1+\eps)^j+(1+\eps)^{j'+1}$. Since $d(v,w)\ge (1+\eps)^{j'}-(1+\eps)^{j+1}$, we have $d(w,r)\le 2g'+\eps(1+\eps)^{j'}+\eps(1+\eps)^j\le 2g'+3\eps g'/2+\eps g'/2 =\beta g'$. Since the $uw$ path that goes through $v$ is a path of length at most $\beta g'$ that has no off vertices, we visit $w$ in the modified Dijkstra. 

By Lemma \ref{lemma:modifiedDijkstra} the total number of vertices visited in the modified Dijkstra is at most $O(n^{1-\alpha})$. Let the subgraph on these vertices be $G'$. We recurse on $G'$, and find a $4+O(\eps)$ approximation of the girth in $G'$. The girth in $G'$ is a lower bound on the minimum cycle of length $\le g'$ passing through any vertex in $S_i(u)$ that has no off vertex. We take this value as one of our estimates. So we have processed all vertices in $S_i(u)$ and we mark them off. This takes $O(\frac{m}{n}.\frac{\log{M}}{\eps}.((n^{1-\alpha})^{1+\alpha}))$, and we have marked off at least $n^\alpha$ vertices. So we spend $O(\frac{m}{n}.\frac{\log{M}}{\eps}.n^{1-\alpha^2-\alpha})$ for processing each vertex. 
Letting $1-\alpha^2-\alpha=\alpha$, we have that $\alpha = \sqrt{2}-1$. So the total running time is $\tilde{O}(m n^{\sqrt 2 -1} \log(M)/\eps)$. Our final estimate of the girth is the minimum of all the estimates we get through processing vertices.
\end{proof}


\subsection{$(2k+\epsilon)$-Approximation Algorithm For the Girth}

In this section we are going to prove a modified version of Theorem \ref{thm:genk}, where a $\log{M}$ factor is added to the running time with $M$ being the maximum edge weight. The proof is a generalization of the proof of Theorem \ref{thm:modified_4approx}. 

\begin{theorem}\label{thm:modified_genk}
For every $\eps>0$ and integer $k\geq 1$, there is a $(2k+\eps)$-approximation algorithm for the girth in directed graphs with edge weights in $\{1,\ldots,M\}$ that runs in $\tilde{O}(mn^{\alpha_k}\log(M) /\eps)$ time, where $\alpha_k>0$ is the solution to $\alpha_k(1+\alpha_k)^{k-1}=1-\alpha_k$.
\end{theorem}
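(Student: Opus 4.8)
The plan is to bootstrap the argument of Theorem~\ref{thm:modified_4approx}, which is the case $k=2$, into a recursion with $k-1$ nested levels. I would aim for running time $\tilde{O}(mn^{\alpha_k}\log M/\eps)$ and first reduce to degree $O(m/n)$ via Lemma~\ref{lemma:zerowt}. Sample a set $Q$ of $\tilde{O}(n^{\alpha_k})$ vertices, run Dijkstra to and from each, and form $V'_i=\{v:\exists q\in Q,\ d(v,q),d(q,v)\le \beta(1+\eps)^{i+1}\}$, now with the enlarged slack $\beta=k(1+\eps)$ in place of $2(1+\eps)$. Whenever $V'_i\neq\emptyset$ this certifies $g\le 2\beta(1+\eps)^{i+1}$, which is a $2\beta(1+\eps)=2k(1+O(\eps))$-approximation for every $i$ with $(1+\eps)^i\le g$. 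Taking the least $i$ with $V'_{i+1}\neq\emptyset$ and setting $g':=(1+\eps)^{i+1}$, the only remaining case is $g<g'$, which is dispatched by processing every vertex to extract its shortest cycle essentially exactly, as in Lemma~\ref{lemma:moddijk}.

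For the processing I would generalize the ``shrink-the-ball-then-recurse'' step of the $k=2$ proof. To handle a still-active vertex $u$: run the sample-pruned in- and out-Dijkstras of Lemmas~\ref{lemma:moddijk} and~\ref{lemma:modifiedDijkstra}, instantiated with $\beta=k(1+\eps)$ and parameter $\alpha_k$; if the visited region is already small enough, search it directly for the shortest cycle through $u$ and mark $u$ off; otherwise Lemma~\ref{lemma:modifiedDijkstra} certifies that its cycle-closure has only $\tilde{O}(N^{1-\alpha_k})$ nodes (with $N$ the size of the subgraph currently being worked in), and we recurse on the induced subgraph --- either iterating the shrink-the-ball step again or, at the innermost of the $k-1$ nested levels, handing the subgraph to a recursive call of the $(2k+\eps)$-approximation algorithm --- marking off the whole batch at once. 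The enlarged $\beta=k(1+\eps)$ is precisely what pays for the ball radius (of order $kg'$) needed to iterate the closure step through $k-1$ levels while preserving, as an invariant maintained by induction on the nesting depth, the property that the current subgraph still contains the entire shortest cycle of length $\le g'$ through every not-yet-processed vertex of the batch; each level spends one extra multiplicative $(1+\eps)$ on the radius, and the $k-1$ such factors multiply to $(1+\eps)^{k-1}=1+O((k-1)\eps)$, which a rescaling of $\eps$ absorbs.

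For the running time, let $T_k(n,m)$ be the cost on a bounded-degree $n$-node $m$-edge graph. The ``direct'' steps contribute $\tilde{O}(mn^{\alpha_k}\log M/\eps)$ in total; a ``batch'' step marks off $\ge n^{\alpha_k}$ vertices at once by recursing on a subgraph shrunk by the factor guaranteed by Lemma~\ref{lemma:modifiedDijkstra}, so there are at most $n^{1-\alpha_k}$ batches. Unrolling the resulting recurrence through the $k-1$ nested shrinkings and requiring that the total batch cost telescope back to $\tilde{O}(mn^{\alpha_k}\log M/\eps)$ forces $\alpha_k$ to be the positive solution of $\alpha_k(1+\alpha_k)^{k-1}=1-\alpha_k$; uniqueness is elementary, as the left side is increasing and the right side decreasing on $(0,1)$. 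The $\log M$ factor is then removed by the argument used at the end of Section~\ref{sec:weighted}, yielding the $\tilde{O}(mn^{\alpha_k}/\eps)$ bound of Theorem~\ref{thm:genk}.

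The step I expect to be the real obstacle is the correctness of the nested peeling: showing that with the $(1+\eps)$-slacks compounding over the $k-1$ levels, the shortest cycle of every still-active vertex of a batch genuinely survives into the subgraph handed to the recursive call, and that this call --- a $(2k+O(\eps))$-approximation on a subgraph whose girth equals $g$ in the relevant situation --- therefore returns a correct estimate. Once that invariant and the radius accounting are pinned down, the running-time unrolling and the approximation bound should be routine.
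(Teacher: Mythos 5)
Your high-level outline follows the paper's proof: sample a hitting set $Q$, form $V'_i$, fix the critical level $i$, and for each still-on vertex $u$ build a sequence of pruned balls $S_i^1(u)\subseteq\cdots\subseteq S_i^k(u)$ with $|S_i^k(u)|\le O(n^{1-\alpha_k})$ by Lemma~\ref{lemma:modifiedDijkstra}, recursing on one of them and marking off a batch. But the running-time accounting as you have stated it is a genuine gap, and it does not produce the $\alpha_k$ of the statement.

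Your analysis bounds the number of batches by $n^{1-\alpha_k}$ and charges each batch the cost of recursing on a subgraph of $\tilde{O}(n^{1-\alpha_k})$ nodes, which corresponds to always recursing at the innermost level ($S_i^k$). That gives total batch cost $\tilde{O}(n^{1-\alpha}\cdot\frac{m}{n}\cdot(n^{1-\alpha})^{1+\alpha})=\tilde{O}(m\,n^{1-\alpha-\alpha^2})$, and forcing this to be $\tilde{O}(mn^\alpha)$ only yields $\alpha^2+2\alpha-1\ge 0$, i.e.\ $\alpha\ge\sqrt{2}-1$, for \emph{every} $k$. So with this charging you cannot obtain $\alpha_k<\sqrt{2}-1$ for $k>2$, which contradicts the claim. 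The paper instead recurses at the \emph{first} level $l$ at which $|S_i^{l+1}(u)|\le c\bigl(|S_i^l(u)|\cdot n^\alpha\bigr)^{1/(1+\alpha)}$. Then the per-vertex amortized cost, $\frac{m}{n}\cdot|S_i^{l+1}|^{1+\alpha}/|S_i^l|$, is $O(\frac{m}{n}n^\alpha)$ uniformly, and summing over all $n$ vertices gives $\tilde{O}(mn^\alpha)$. The existence of such a level is exactly Lemma~\ref{lemma:alphaformula}: telescoping the \emph{negation} of the shrinkage condition from $|S_i^1|\ge n^\alpha$ for $k-1$ steps produces the exponent $\alpha/(1+\alpha)^{k-1}+\alpha\sum_{j=1}^{k-1}(1+\alpha)^{-j}$, which is $<1-\alpha$ iff $\alpha(1+\alpha)^{k-1}<1-\alpha$ --- contradicting $|S_i^k|\le O(n^{1-\alpha})$ precisely when $\alpha$ satisfies $\alpha(1+\alpha)^{k-1}=1-\alpha$. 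This pigeonhole step --- deciding \emph{where} to cut, not just iterating to the end --- is what you would need to supply.

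Two smaller issues: your $\beta=k(1+\eps)$ is slightly too small. The correctness invariant compounds an additive $\eps l^2 g'$ error over the $l$-th level plus a $(1+\eps)^{j'+1}-(1+\eps)^j$ boundary slack, so the pruning threshold must be taken roughly $\beta=k+k^2\eps+k\eps$ to guarantee that all $k$ levels remain inside the $\beta g'$-ball controlled by Lemma~\ref{lemma:modifiedDijkstra}. And the closure invariant you flag as the ``real obstacle'' --- that if $w$ lies on a $\le g'$-cycle with no off vertex through some $v\in S_i^{l-1}(u)$, then $w\in S_i^l(u)$ --- is indeed exactly what must be proved (Lemma~\ref{lemma:klevelsets} in the paper), by a case analysis on whether $d(u,w)$ falls above or below the level of $d(u,v)$, using the triangle inequality together with the roundtrip bound $\rd{v}{w}\le g'$.
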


Suppose that we are aiming for a $2k(1+O(\epsilon))$ approximation algorithm for the girth, in $\tilde{O}(mn^{\alpha}\log{M}/\eps)$ time, where we set $\alpha$ later. So basically we want to spend $\tilde{O}(\frac{m}{n}\frac{\log{M}}{\eps}n^{\alpha})$ per vertex. Let $\beta = k+k^2\epsilon+k\epsilon=k+O(\eps)$.
As before, first we sample a set $Q$ of $\tilde{O}(n^{\alpha})$ and do in and out Dijkstra from each vertex $q\in Q$. Let $i_{min}$ be the minimum number $i$ such that the set $V'_i=\{v\in V~|~\exists q\in Q:~d(v,q)\leq \beta(1+\eps)^{i+1} \textrm{ and } d(q,v)\leq \beta(1+\eps)^{i+}\}$ is non-empty. So our initial estimate of the girth is $2\beta(1+\epsilon)^{i_{min}+1}$. 

Let $i=i_{min}-1$ and let $g'=(1+\eps)^{i+1}$ be our estimate of the girth. Initially we mark all vertices as ``on", and as we process each vertex, we either find a smallest cycle of length at most $g'$ with no ``off" vertex, or we say that there is no cycle of length at most $g'$ passing through it whp, and we mark the vertex as off.

We apply Lemma \ref{lemma:modifiedDijkstra} for $\beta = k+k^2\epsilon+k\epsilon$ and the set $Q$ as input. It gives us the sets $R_i^j(u)$ of size $O(\log^2{n})$ for all $j$, such that the number of vertices in $\bar{B}^j(u)=\{w\in V|d(u,w)\le (1+\epsilon)^{j+1}\}$ that are at distance at most $\beta g'$ from all $r\in R_{i}^j(u)$ is at most $O(n^{1-\alpha})$ whp. 


We take an on vertex $u$ and do ``modified" Dijkstra from (to) $u$, stopping at distance $g'/2$, such that the set of vertices we visit contains any cycle of length $g'$ that passes through $u$ that has no off vertex. We explain this modified Dijkstra later.

We call the set of vertices that we visit in the modified out-Dijkstra $S_i^1(u)$. If $S_i^1(u)\le n^\alpha$, we do an analogous modified in-Dijkstra from $u$, and let $T_i^1(u)$ be the set of vertices visited in this in-Dijkstra. If $T_i^1(u)\le n^\alpha$, then we do Dijkstra from $u$ in the subgraph induced by $S_i^1(u)\cup T_i^1(u)$, and hence find a smallest cycle of length $\le g'$ that passes through $u$ with no off vertex. We take the length of this cycle as one of our estimates for the girth. If there is no such cycle, we don't have any estimate from $u$. Now we mark $u$ as off and proceed the algorithm by taking another on vertex. Our modified Dijkstras takes $O(\log^2{n}.\frac{\log{Mn}}{\eps}.\frac{m}{n}|S|)$ time if $S$ is the set of vertices visited by the Dijkstra.
Hence for processing $u$ we spend $O(\log^2{n}.\frac{\log{Mn}}{\eps}.\frac{m}{n}n^\alpha)$ time. 

So suppose that either $S_i^1(u)$ or $T_i^1(u)$ have size bigger than $n^\alpha$. Without loss of generality assume that $|S_i^1(u)|\ge n^\alpha$ (the other case is analogous). For $1\le l\le k$, define sets $S_i^l(u)$ as the set of on vertices $w\in V$ such that there is a path of length at most  $(2l-1)g'/2$ from $u$ to $w$ that contains no off vertex, and if $w\in B^j(u)$, then for all $r\in R_i^{j'}(u)$ for all $j'$, we have $d(w,r)\le (l+l^2\eps)g' + (1+\epsilon)^{j'+1}-(1+\epsilon)^{j}$. Once we explain our modified Dijkstras, it will be clear that $S_i^1$ defined here is indeed the set of vertices visited in the first modified out-Dijkstra.

We set $S_i^0(u)=\{u\}$. We prove the following useful lemma in the Appendix.

\begin{lemma}
\label{lemma:klevelsets}
For all $l\in \{1,\ldots,k\}$, we have that $S_i^{l-1}(u)\subseteq S_i^{l}(u)$. Moreover, if $w\in V$ is in a cycle of length at most $g'$ with some vertex in $S_i^{l-1}(u)$ such that the cycle contains no off vertex,
then we have $w\in S_i^{l}(u)$. 
\end{lemma}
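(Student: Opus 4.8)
The plan is to prove Lemma~\ref{lemma:klevelsets} by induction on $l$, mirroring the structure of the single-level arguments in the proof of Theorem~\ref{thm:modified_4approx}. For the containment $S_i^{l-1}(u)\subseteq S_i^l(u)$, I would simply check that the defining conditions of $S_i^l(u)$ are strictly weaker than those of $S_i^{l-1}(u)$: the path-length bound $(2(l-1)-1)g'/2$ is smaller than $(2l-1)g'/2$, so any path witnessing membership in $S_i^{l-1}(u)$ also witnesses the length condition for $S_i^l(u)$; and for the distance-to-sample condition, the bound $(l-1+(l-1)^2\eps)g' + (1+\eps)^{j'+1}-(1+\eps)^j$ is at most $(l+l^2\eps)g' + (1+\eps)^{j'+1}-(1+\eps)^j$ since $l+l^2\eps \geq l-1+(l-1)^2\eps$. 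So the containment is essentially a monotonicity check on the two parameters.

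The substantive part is the second assertion: if $w$ lies on a $\le g'$ cycle $C$ together with some $v\in S_i^{l-1}(u)$, and $C$ has no off vertex, then $w\in S_i^l(u)$. Here I would first produce a path from $u$ to $w$ of length at most $(2l-1)g'/2$ with no off vertex: take the witnessing path $P$ from $u$ to $v$ of length $\le (2(l-1)-1)g'/2 = (2l-3)g'/2$ guaranteed by $v\in S_i^{l-1}(u)$, and append the shorter of the two arcs of $C$ from $v$ to $w$, which has length $\le g'/2$; concatenating gives a $u$-to-$w$ walk of length $\le (2l-3)g'/2 + g'/2 = (2l-2)g'/2 \le (2l-1)g'/2$, and it contains no off vertex since $P$ has none and $C$ has none. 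Second, I need the distance-to-sample bound: suppose $w\in B^{j}(u)$ and fix any $r\in R_i^{j'}(u)$ with $v\in B^{j'}(u)$ (the relevant $j'$ is the level of $v$). Then
\[
d(w,r)\leq d(w,v)+d(v,r).
\]
Since $v\in S_i^{l-1}(u)$ and $r\in R_i^{j'}(u)$, the induction hypothesis on the sample condition gives $d(v,r)\le (l-1+(l-1)^2\eps)g' + (1+\eps)^{j'+1} - (1+\eps)^{j''}$ where $j''$ is the level of... actually the cleaner route is to bound $d(w,v)$ by $g' - d(v,w)$ using the cycle $C$ (going the long way around), and to control $d(v,w)$ from below by $(1+\eps)^{j} - (1+\eps)^{j'+1}$ as was done in the $(4+\eps)$ proof, then substitute and collect the $\eps$ terms, checking that the total is at most $(l+l^2\eps)g' + (1+\eps)^{j'+1}-(1+\eps)^j$. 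This is the step where the constant $\beta = k+k^2\eps+k\eps$ and the coefficient pattern $l+l^2\eps$ are calibrated, and it is the main obstacle: one has to track three or four $\eps$-error terms (from $d(v,w)$, from the two ball-width slacks, from the inductive coefficient) and verify the recursion $l+l^2\eps \geq (l-1)+(l-1)^2\eps + (\text{new slack})$ closes. I expect the bookkeeping to require $d(v,w)\ge (1+\eps)^{j'} - (1+\eps)^{j+1}$ and $(1+\eps)^j \le g'$ (since $j\le i$ and $(1+\eps)^{i+1}=g'$), exactly the inequalities used in the $k=2$ case, so the generalization should go through with the coefficient $l+l^2\eps$ absorbing one extra unit of $\eps$-slack per level.

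Once both parts of the induction step are in place the lemma follows, and (as the surrounding text indicates) this is what justifies that the recursion on the $\le g'$-cycles reaches every vertex of such a cycle within $k$ rounds, so that by level $k$ we have captured the whole cycle and can recurse on an induced subgraph of size $O(n^{1-\alpha})$.
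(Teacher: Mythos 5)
Your containment argument matches the paper's: both conditions defining $S_i^l(u)$ are monotone in $l$, so the inclusion is a routine check. For the second part your overall plan (concatenate the witnessing $u$--$v$ path with the cycle arc, then combine the cycle and the inductive distance bound via the triangle inequality) is the paper's plan, but there are two concrete problems.

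First, the claim that one can ``append the shorter of the two arcs of $C$ from $v$ to $w$, which has length $\le g'/2$'' does not hold for a \emph{directed} cycle: there is exactly one directed arc from $v$ to $w$ along $C$, and its length can be as large as $|C|\le g'$, not $g'/2$. Your claimed total $(2l-2)g'/2$ is therefore unjustified. Fortunately the correct bound $(2l-3)g'/2 + g' = (2l-1)g'/2$ is exactly what the definition of $S_i^l(u)$ requires, so this is fixable---but as written the reasoning is wrong.

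Second, the distance-to-sample step is the substance of the lemma and you leave it as an expectation rather than a proof, and the route you sketch only covers half of it. Writing $j_1$ for $v$'s level and $j_2$ for $w$'s level, the lower bound you invoke, $d(v,w) > (1+\eps)^{j_2} - (1+\eps)^{j_1+1}$, is only positive (hence only useful) when $j_2 > j_1$; the paper splits into two cases. When $j_1\ge j_2$ it uses the trivial bound $d(w,v)\le g'$ together with the observation that the term $-(1+\eps)^{j_1}$ inherited from the inductive bound on $d(v,r)$ is at least as strong as the needed $-(1+\eps)^{j_2}$. When $j_1<j_2$ it substitutes $d(w,v)\le g'-d(v,w)$ and your lower bound on $d(v,w)$, producing a leftover error term $\eps(1+\eps)^{j_1}$ that must then be absorbed; this requires the further fact $(1+\eps)^{j_1}\le d(u,v)\le (2l-3)g'/2\le (l-1)g'$, so that $\eps(1+\eps)^{j_1}\le \eps(l-1)g'$ and $(l-1)^2\eps + (l-1)\eps \le l^2\eps$ closes the recursion. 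None of this bookkeeping, and in particular neither the case split nor the $(1+\eps)^{j_1}\le (l-1)g'$ ingredient, appears in your proposal, so the proof is incomplete at precisely the step you flag as ``the main obstacle.''
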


Our algorithm will do at most $k$ modified Dijkstras from $u$, where we prove that the set of vertices visited in the $l$th Dijkstra is $S_i^l(u)$. After performing each Dijkstra we decide if we continue to the next modified Dijkstra from $u$ or proceed to another on vertex.

Suppose that at some point we know that the set $S_i^{l-1}(u)$ is the set of vertices visited in the $(l-1)$th modified Dijkstra, and we want to proceed to the $l$th Dijkstra. Our new goal is the following: We want to catch a minimum cycle of length $\le g'$ passing through $S_i^l$ with no off vertex. For this, we do the $l$th modified Dijkstra form $u$ as follows. 

We begin by placing $u$ in the Fibonacci heap with $d[u]=0$ and all other on vertices with $d[\cdot]=\infty$. When a vertex $x$ is extracted from the heap with estimate $d[x]$, we determine the $j$ for which $(1+\eps)^j\leq d[x]<(1+\eps)^{j+1}$; here $j$ could be the boundary case that we called $\emptyset$ if $d[x]=0$.
Then we check whether $d(x,r)\le (l+l^2\eps)g'- (1+\epsilon)^{j} + (1+\epsilon)^{j'+1}$ for all $r\in R^{j'}_i(u)$ for all $j'$. If $x$ does not satisfy this condition, we ignore it and extract a new vertex from the heap. Otherwise, we go through all its out-edges $(x,y)$, and if $d[y]>d[x]+w(x,y)$, we update $d[y]=d[x]+w(x,y)$. 
We stop when the vertex $u$ extracted from the heap has $d[u]> (2l-1)g'/2$. 

It is clear by definition that the set of vertices that this modified Dijkstra visits is $S_i^l(u)$.
Now if $|S_{i}^l(u)|\le c(|S_i^{l-1}(u)|.n^\alpha)^{\frac{1}{1 +\alpha}}$ for some constant $c$, we recurse on the subgraph induced by $S_{i}^l(u)$, i.e. $G[S_{i}^l(u)]$, to get an $2k+O(\eps)$ approximation of the girth on this subgraph. The girth in $G[S_{i}^l(u)]$ is a lower bound on the minimum cycle of length $\le g'$ passing through $S_{i}^{l-1}(u)$ with no off vertex. So we take this value as one of our estimates
and we mark all vertices of $S_i^{l-1}(u)$ as off. The running time of this recursion is $\tilde{O}(\frac{m}{n}\frac{\log{M}}{\eps}|S_i^l(u)|^{1+\alpha})$ as the average degree is $O(\frac{m}{n})$. Since we process $S_{i}^{l-1}(u)$ vertices in this running time, we spend $\tilde{O}(\frac{m}{n}\frac{\log{M}}{\eps}.|S_i^l(u)|/|S_i^{l-1}(u)|)\le \tilde{O}(\frac{m}{n}.\frac{\log{M}}{\eps}.n^\alpha)$ for each vertex.

Note that $|S_i^k(u)|\le O(n^{1-\alpha})$. This is because for all $x\in S_i^{k}\cap B^{j}(u)$ and for all $r\in R_i^j(u)$, we have that $d(x,r)\le (k+k^2\eps)g'+(1+\epsilon)^{j+1}-(1+\epsilon)^j\le (k+k^2\eps)g'+\epsilon(1+\epsilon)^{j}\le (k+k^2\eps)g'+\epsilon(2k-1)g'/2 \le (k+k^2\epsilon+k\eps)g'=\beta g'$. So $S_i^k(u)$ is a subset of all vertices in $B^{j}(u)$ with distance at most $\beta g'$ from all $r\in R_i^j(u)$, and so by Lemma \ref{lemma:modifiedDijkstra} it has size at most $O(n^{1-\alpha})$.

When all vertices are marked off, we take the minimum value of all the estimates as our estimate for $g$.

Since we have that $S_i^l(u)\le O(n^{1-\alpha})$, if we set $\alpha$ appropriately, for some $l<k$ we have that $|S_{i}^{l+1}(u)|\le (|S_i^l(u)|.n^\alpha)^{\frac{1}{1+\alpha}}$. For $k=1$, setting $\alpha=1/2$ gives us the algorithm of Theorem \ref{thm:2approx}. For $k>1$, the following lemma determines $\alpha$. The proof of the lemma can be found in the Appendix.

\begin{lemma}
\label{lemma:alphaformula}
For $k>1$, let the sets $S_i^{l}$ for $l=1,\ldots,k$ be such that $S_i^l\subseteq S_{i}^{l+1}$ for all $l<k$, $S_i^1\ge n^\alpha$ and $S_i^k\le O(n^{1-\alpha})$. Let $0<\alpha<1$ satisfy 
$\alpha(1+\alpha)^{k-1}=1-\alpha$. Then there is $l<k$ and a constant $c$ such that $|S_i^{l+1}|\le c(|S_i^l|.n^\alpha)^{\frac{1}{1+\alpha}}$.
\end{lemma}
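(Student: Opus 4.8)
The plan is to treat this as a pure arithmetic/pigeonhole argument about the sequence of sizes $s_l := |S_i^l|$ for $l = 0, 1, \ldots, k$. We know $s_0 = 1$ (since $S_i^0(u) = \{u\}$), that the sequence is nondecreasing, that $s_1 \geq n^\alpha$, and that $s_k \leq c' n^{1-\alpha}$ for some constant $c'$. The goal is to show that the ``doubling-type'' inequality $s_{l+1} \leq c(s_l \cdot n^\alpha)^{1/(1+\alpha)}$ must hold for at least one index $l < k$, provided $\alpha$ satisfies $\alpha(1+\alpha)^{k-1} = 1 - \alpha$. The natural strategy is proof by contradiction: assume the inequality \emph{fails} for every $l \in \{1, \ldots, k-1\}$, i.e.\ $s_{l+1} > c(s_l \cdot n^\alpha)^{1/(1+\alpha)}$ for all such $l$ (for the constant $c$ we will get to choose), and derive a lower bound on $s_k$ that contradicts $s_k \leq c' n^{1-\alpha}$.

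First I would unroll the recurrence. Writing the negation as $s_{l+1}^{1+\alpha} > c^{1+\alpha} \, s_l \, n^\alpha$, or more cleanly working with logarithms, set $x_l = \log_n s_l$. Then the failure assumption reads (up to the additive constant coming from $c$, which I will suppress for the sketch and reinstate at the end) $x_{l+1}(1+\alpha) > x_l + \alpha$, i.e.\ $x_{l+1} > \tfrac{x_l + \alpha}{1+\alpha}$. Starting from $x_1 \geq \alpha$ and iterating this $k-1$ times, the fixed point of the map $x \mapsto (x+\alpha)/(1+\alpha)$ is $x = 1$; since $x_1 = \alpha < 1$, each step moves $x_l$ strictly toward $1$ but the correct direction is to note we need a \emph{lower} bound on $x_l$, and since the map is increasing and $x_1 \ge \alpha$, we get $x_l \geq$ (the $l$-th iterate of the map applied to $\alpha$). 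A direct computation gives the $(k-1)$-st iterate: $1 - x_k \leq (1-\alpha)\bigl(\tfrac{1}{1+\alpha}\bigr)^{k-1}$, hence $x_k \geq 1 - (1-\alpha)(1+\alpha)^{-(k-1)}$. Now I would invoke the defining equation $\alpha(1+\alpha)^{k-1} = 1-\alpha$, equivalently $(1-\alpha)(1+\alpha)^{-(k-1)} = \alpha$, to conclude $x_k \geq 1 - \alpha$, i.e.\ $s_k \geq n^{1-\alpha}$. This is exactly on the boundary of what $s_k \leq c' n^{1-\alpha}$ allows, which is why the constant bookkeeping matters: I need the negated inequality to carry a constant $c$ large enough (depending on $c'$ and $k$) that the iterated lower bound on $s_k$ strictly exceeds $c' n^{1-\alpha}$, forcing the contradiction. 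Choosing $c$ as a suitable function of $c'$ and $k$ — e.g.\ $c = (2c')^{1+\alpha}$ or similar, absorbing the geometric sum of log-corrections — closes this gap.

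The main obstacle is precisely this constant-tracking: the recurrence $s_{l+1} \le c(s_l n^\alpha)^{1/(1+\alpha)}$ has a multiplicative slack $c$ at every step, and over $k-1$ steps these compound. When negated they compound \emph{against} us, so I must verify that the accumulated factor is still only a constant (it is, since $k$ is a fixed constant and $\sum_{l} (1+\alpha)^{-l}$ converges) and that it can be beaten by an appropriate choice of $c$ relative to the constant $c'$ hidden in $s_k \le O(n^{1-\alpha})$ from Lemma~\ref{lemma:modifiedDijkstra}. A secondary, easier point is confirming that $\alpha \in (0,1)$ genuinely exists and is unique: the function $f(\alpha) = \alpha(1+\alpha)^{k-1} + \alpha$ is continuous and strictly increasing on $[0,1]$ with $f(0) = 0 < 1$ and $f(1) = 2^{k-1} + 1 > 1$, so there is exactly one root, matching the remark after Theorem~\ref{thm:genk}. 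With $\alpha$ fixed and the constants chosen, the contradiction is immediate and the lemma follows.
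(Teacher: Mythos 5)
Your proof is correct and takes essentially the same route as the paper's: both argue by contradiction, unroll the negated bound across all $l<k$ (you in log form via $1-x_{l+1}\le(1-x_l)/(1+\alpha)$, the paper via a geometric sum of exponents), and invoke the defining equation $\alpha(1+\alpha)^{k-1}=1-\alpha$ to force $|S_i^k|>Cn^{1-\alpha}$. Your concern about carefully tuning the constant is unnecessary: the accumulated exponent on $C$ after unrolling is $\sum_{j=0}^{k-2}(1+\alpha)^{-j}\ge 1$, so simply reusing the constant $C>1$ from the hypothesis $|S_i^k|\le Cn^{1-\alpha}$ already yields the contradiction, exactly as the paper does.
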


Note that for $k=2$, Lemma \ref{lemma:alphaformula} sets $\alpha=\sqrt{2}-1$ and thus gives us the algorithm of Theorem \ref{thm:modified_4approx}. 

 \subsection{Removing the $\log{M}$ factor}
 In this subsection we show how to remove the $\log{M}$ factor in the running times of our algorithms where $M$ is the maximum edge weight, resulting in strongly polynomial algorithms.

Assume that we have a $(2k+\epsilon)$-approximation algorithm $A$ for the girth in $\tO(mn^{\alpha_k}\log{M}/\epsilon)$ running time for some $0\le \alpha_k\le 1$. We want to obtain an algorithm that gives us a $(2k+O(\epsilon))$-approximation of the girth in $\tO(mn^{\alpha_k}/\epsilon)$ time. 

First, suppose that we know the smallest number $W$ such that there is a cycle with all edge weights at most $W$. Then by the definition of $W$ we have that $W\le g$ and $g\le nW$. Moreover, note that the edges of any cycle with total weight at most $(2k+O(\epsilon))g$ cannot have weights more than $3knW$, so we can remove any edge with weight more than $3knW$. Let $R=W\epsilon/n$. Let $H$ be a copy of $G$, with the weight $w_G(e)$ of the edge $e$ replaced by $w_H(e)=\floor{w_G(e)/R}$. Note that the weights of $H$ are bounded by $O(n^2/\epsilon')$. 

Now consider a cycle $C$ in $G$. Suppose that $C$ has $n_C$ edges. Let $w_G(C)$ and $w_H(C)$ be the sum of the edge-weights of $C$ in $G$ and $H$ respectively. For any edge $e$, we have that $w_G(e)-R\le R\cdot w_H(e)\le w_G(e)$. This gives us 
\begin{equation}
\label{eq:cyclesrelations}
w_G(C)-Rn_C\le R\cdot w_H(C)\le w_G(C).
\end{equation}

Note that if $C$ is the cycle with minimum length in $G$, then we have that $Rg'\le Rw_H(C)\le g$, where $g'$ is the girth of $H$. 

Now we apply algorithm $A$ on $H$, which takes $\tO(mn^{\alpha_k}/\epsilon)$ time. Suppose that it outputs a cycle $C$ such that $g'\le w_H(C)\le (2k+\eps)g'$. Since $g\ge Rg'$ and by equation \ref{eq:cyclesrelations} we have $w_G(C)\le Rw_H(C)+Rn_c\le (2k+\eps)Rg'+Rn\le (2k+2\eps)g$. The last inequality uses the fact that $Rn=W\eps \le g\eps$. 

It suffices to show how we obtain $W$. We sort the edges of $G$ in $\tO(m)$ time, so that the edge weight are $w_1\le \ldots\le w_m$. We find $W$ using binary search and DFS as follows: Suppose that we are searching for $W$ in the interval $w_i\le \ldots\le w_j$ for $1\le i\le j\le m$. Let $r=(i+j)/2$, we remove all the edges with weight more than $w_{r}$ and then do DFS in the remaining graph to see if it has a cycle. If it does, we update $j = r$, otherwise we update $i= r$. Note that this process takes $\tO(m)$ time.

\section{Hardness}
\label{sec:hardness}
Our hardness result is based on the following $k$-Cycle hypothesis (see \cite{lincolnsoda,ancona2019,maxdynamic}).

\begin{hypothesis}[$k$-Cycle Hypothesis] \label{hyp:cycle}
In the word-RAM model with $O(\log m)$ bit words,
for any constant $\eps > 0$, there exists a constant integer $k$, so that there is no $O(m^{2-\eps})$ time algorithm that can detect a $k$-cycle in an $m$-edge graph.
\end{hypothesis}

All known algorithms for detecting $k$-cycles in directed graphs with $m$ edges run at best in time $m^{2-c/k}$ for various small constants $c$ \cite{YuZw04,AlYuZw97,lincolnsoda,patternscycles19}, even using powerful tools such as fast matrix multiplication. Refuting the $k$-Cycle Hypothesis above would resolve a big open problem in graph algorithms. Moreover, as shown by Lincoln et al. \cite{lincolnsoda} any algorithm for directed 
$k$-cycle detection, for $k$-odd,
with running time $O(mn^{1-\eps})$ for $\eps>0$ whenever $m=\Theta(n^{1+2/(k-1)})$ would imply an $O(n^{k-\delta})$ time algorithm for $k$-clique detection for $\delta>0$. If the cycle algorithm is ``combinatorial'', then the clique algorithm would be ``combinatorial'' as well, and since all known $O(n^{k-\delta})$ time $k$-clique algorithms use fast matrix multiplication, such a result for $k$-cycle would be substantial.

We will show that under Hypothesis~\ref{hyp:cycle}, approximating the girth to a factor better than $2$ would require $mn^{1-o(1)}$ time, and so up to this hypothesis, our approximation algorithm is optimal for the girth in unweighted graphs.

\begin{theorem}
Suppose that for some constants $\eps>0$ and $\delta>0$, there is an $O(m^{2-\eps})$ time algorithm that can compute a $(2-\delta)$-approximation of the girth in an $m$-edge directed graph. Then for every constant $k$, one can detect whether an $m$-edge directed graph contains a $k$-cycle, in $O(m^{2-\eps})$ time, and hence the $k$-Cycle Hypothesis is false.
\end{theorem}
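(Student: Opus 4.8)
The plan is to reduce $k$-cycle detection in an arbitrary $m$-edge directed graph $H$ to a small number of runs of the hypothesized $(2-\delta)$-approximation algorithm on auxiliary graphs whose girth is \emph{either} exactly $k$ \emph{or} at least $2k$. Since $(2-\delta)k<2k$, on such a graph the value returned by the approximation algorithm is strictly below $2k$ in the first case and at least $2k$ in the second, so comparing the returned value against $2k$ tells us which case we are in. The only real content is constructing, from $H$, a graph in which all cycle lengths are forced to be multiples of $k$ \emph{and} a cycle of length exactly $k$ is guaranteed to be a genuine simple $k$-cycle of $H$ (not a non-simple closed walk); I would use color coding for this.

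Concretely, I would first dispose of $k=1$ (detecting a self-loop, trivially $O(m)$ time) and assume $k\ge 2$. Given $H=(V,E)$ and a coloring $c\colon V\to\Z_k$, let $H_c$ be the spanning subgraph of $H$ keeping exactly the edges $(u,v)$ with $c(v)\equiv c(u)+1\pmod k$. Along every edge of $H_c$ the color increases by $1$ modulo $k$, so going around any cycle of $H_c$ the total change is $0$ modulo $k$; hence every cycle of $H_c$ has length divisible by $k$, and $\mathrm{girth}(H_c)\in\{k,2k,3k,\dots\}\cup\{\infty\}$. A cycle of length exactly $k$ in $H_c$ visits $k$ vertices with $k$ distinct colors, hence $k$ distinct vertices, and since $H_c$ is a subgraph of $H$ it is a simple $k$-cycle of $H$. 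Conversely, if $v_0\to v_1\to\cdots\to v_{k-1}\to v_0$ is a $k$-cycle of $H$ and $c$ satisfies $c(v_t)\equiv c(v_0)+t\pmod k$ for all $t$, then all $k$ edges of this cycle survive in $H_c$, so $\mathrm{girth}(H_c)=k$. Thus running the approximation algorithm on $H_c$ and testing whether its output is $<2k$ decides whether $H_c$ has a $k$-cycle, which happens iff $H$ contains a $k$-cycle that is ``colorful in the right cyclic order'' under $c$.

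To catch an \emph{arbitrary} $k$-cycle of $H$, I would repeat the above over $T=O(k^{k-1}\log n)$ independent uniformly random colorings: a fixed $k$-cycle is colorful in the right cyclic order under a random $c$ with probability $k\cdot k^{-k}=k^{1-k}$, so with high probability some coloring among the $T$ works. If some round reports a $k$-cycle we output ``yes'', otherwise ``no''; soundness is immediate because, as shown above, $\mathrm{girth}(H_c)<2k$ forces a simple $k$-cycle of $H$. Each round costs $O(m)$ to form $H_c$ and $O(m^{2-\eps})$ to run the algorithm, for a total of $O(k^{k-1}m^{2-\eps}\log n)=m^{2-\eps+o(1)}$ for constant $k$; this refutes the $k$-Cycle Hypothesis, since for any $\eps'<\eps$ it yields an $O(m^{2-\eps'})$-time $k$-cycle algorithm for every constant $k$. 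If a genuinely randomization-free statement is wanted, the random colorings can be replaced by an $(n,k)$-perfect hash family together with a loop over all $(k-1)!$ cyclic color orders, which only multiplies the number of girth computations by $2^{O(k)}$.

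The step I expect to be the crux is exactly the one highlighted above: merely forcing cycle lengths to be multiples of $k$ (e.g.\ via a plain $k$-layer blow-up of $H$) is \emph{not} enough, because a closed walk of total length $k$ consisting of two shorter cycles sharing a vertex would masquerade as a $k$-cycle; color coding removes precisely this slack while keeping the ``$k$-cycle present'' case at $\mathrm{girth}=k$ and the ``$k$-cycle absent'' case at $\mathrm{girth}\ge 2k$. This is why any approximation factor strictly below $2$ suffices to decide $k$-cycle through this reduction (and a factor of exactly $2$ would not, since then the estimate could be $2k$ in both cases).
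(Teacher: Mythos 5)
Your proposal is correct and takes essentially the same approach as the paper: random $\Z_k$-coloring, keeping only edges that advance the color by one, so all cycle lengths become multiples of $k$ and a $(2-\delta)$-approximation separates girth $k$ from girth $\ge 2k$, then boosting over $\Theta(k^k\log n)$ trials and derandomizing with standard color-coding tools. The minor differences (your slightly sharper $k^{1-k}$ success probability from counting rotations, and your explicit note that a length-$k$ cycle in $H_c$ is automatically simple because it uses $k$ distinct colors) are refinements of the same argument, not a different route.
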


\begin{proof}
The proof is relatively simple.
Suppose that for some constants $\eps>0$ and $\delta>0$, there is an $O(m^{2-\eps})$ time algorithm that can compute a $(2-\delta)$-approximation of the girth in an $m$-edge directed graph.

Now let $k\geq 3$ be any constant integer and let $G$ be an $n$-node, $m$-edge graph. First randomly color each vertex of $G$ with one of $k$ colors. Let $C$ be any $k$-cycle in $G$. With probability $1/k^k$, for each $i=0,\ldots, k-1$, the $i$th vertex of $C$ is colored $i$. 

Now, for each $0\leq i\leq k-1$, let $V_i$ be the vertices colored $i$. For each vertex $u\in V_i$, and each directed edge $(u,v)$ out of $u$, keep $(u,v)$ if and only if $v\in V_{i+1}$ where the indices are taken mod $k$. This builds a graph $G'$ which is a subgraph of $G$ and contains a $k$-cycle if $G$ does with probability $\geq 1/k^k$.

$G'$ has two useful properties. (1) Any cycle of $G'$ has length divisible by $k$, and (2) (which follows from (1)) the girth of $G'$ is $k$ if $G'$ contains a $k$-cycle and it is $\geq 2k$ otherwise.

As $G'$ has at most $m$ edges (it is a subgraph of $G$), we can use our supposedly fast $2-\delta$ approximation algorithm to determine whether the girth is $k$ or larger in $O(m^{2-\eps})$ time.
By iterating the construction $O(k^k\log n)$ times, we get that the $k$-cycle problem in $G$ can be solved in $\tilde{O}(k^k m^{2-\eps})$ time, and as $k$ is a constant, we are done. The approach can be derandomized with standard techniques (e.g. \cite{colorcodingj}).
\end{proof}
\section{Acknowledgements}
We thank the anonymous reviewers for their insightful comments.
\bibliographystyle{plain}
\bibliography{references}

\section{Appendix}
\subsection{Omitted proofs}
\label{subsec:omittedproofs}

\begin{proof}[Proof of Lemma~\ref{lemma:unwtd}]

We start with a simple claim which is proved at the end:

\begin{claim}
Let $q\geq 2$ be an integer. Let $L\geq 1$ be an integer.
There is a directed rooted tree with $\leq 3L$ nodes, $L$ leaves, with every node of outdegree $\leq q$ and such that every root to leaf path has the same length $\lceil \log_q L \rceil$.\label{claim:tree}
\end{claim}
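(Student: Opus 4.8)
The plan is to construct the tree level by level from the root down, deciding first how many nodes sit at each level and then wiring consecutive levels together as evenly as possible. Set $h := \lceil \log_q L\rceil$, so that $q^h \ge L$; the tree will have depth exactly $h$. At level $\ell \in \{0,\dots,h\}$ I put $n_\ell := \min(q^\ell, L)$ nodes, so that $n_0 = 1$ (the root) and $n_h = L$. The two facts I will need, both checked by separating the cases $q^\ell \le L$ and $q^\ell > L$, are that $n_\ell \le n_{\ell+1}$ and $n_{\ell+1} \le q\,n_\ell$ for every $\ell < h$.

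Next I would join level $\ell$ to level $\ell+1$ by distributing the $n_{\ell+1}$ nodes of level $\ell+1$ among the $n_\ell$ nodes of level $\ell$ as evenly as possible, giving each level-$\ell$ node either $\lfloor n_{\ell+1}/n_\ell\rfloor$ or $\lceil n_{\ell+1}/n_\ell\rceil$ children, and orienting every edge away from the root. Because $n_{\ell+1}\ge n_\ell$, every level-$\ell$ node with $\ell<h$ receives at least one child, hence is internal and not a leaf; because $n_{\ell+1}\le q\,n_\ell$ and $q$ is an integer, $\lceil n_{\ell+1}/n_\ell\rceil\le q$, so no out-degree exceeds $q$. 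Therefore the leaves are exactly the $n_h=L$ nodes at level $h$, and every root-to-leaf path has length exactly $h=\lceil\log_q L\rceil$.

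It then remains to bound the total number of nodes $\sum_{\ell=0}^{h} n_\ell$. Writing $h':=\lfloor\log_q L\rfloor$ we have $q^\ell\le L$ exactly when $\ell\le h'$, so this sum equals $\sum_{\ell=0}^{h'} q^\ell + (h-h')\,L$; the geometric part is $\frac{q^{h'+1}-1}{q-1}\le \frac{q}{q-1}\,q^{h'}\le 2q^{h'}\le 2L$ (using $q\ge 2$ and $q^{h'}\le L$), and $h-h'\le 1$, so the whole sum is at most $2L+L=3L$. The boundary case $L=1$, where $h=0$ and the ``tree'' is a single node, is covered by the same formulas. I do not expect a genuine obstacle here; the only points needing care are verifying the chain $n_\ell\le n_{\ell+1}\le q\,n_\ell$ through the region where $n_\ell$ has already saturated at $L$, and checking that the even split of children never pushes an out-degree past $q$ — both reduce to $q$ being an integer with $q^h\ge L$.
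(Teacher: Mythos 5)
Your proof is correct, and it takes a genuinely different route from the paper's. The paper expands $L$ in base $q$, $L=a_{d-1}q^{d-1}+\dots+a_0$, and builds the tree recursively: the root gets $a_{d-1}$ complete $q$-ary subtrees of depth $d-1$ plus one more child that is the root of a tree built recursively for the remainder $a_{d-2}q^{d-2}+\dots+a_0$; the $\leq 3L$ bound is then proved by induction on $d$, tracking the digit contributions. You instead fix the level sizes up front, $n_\ell=\min(q^\ell,L)$, and connect consecutive levels by an even split, with correctness resting on the two inequalities $n_\ell\le n_{\ell+1}\le q\,n_\ell$ (the first guarantees no early leaves and the second bounds out-degree, using that $q$ is an integer). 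Your node count is then a single geometric sum plus the saturated tail, bounded directly by $\tfrac{q}{q-1}q^{h'}+(h-h')L\le 2L+L=3L$. The two arguments prove the same thing; yours is non-recursive, avoids the digit bookkeeping (and the slightly delicate handling of the ``$a_{d-1}+1$ children'' step when lower digits vanish), and gives the size bound in one line rather than by induction. The paper's version has the mild advantage of producing a tree assembled from complete $q$-ary blocks, which is conceptually closer to the ``take a complete tree and prune'' intuition it opens with, but the even-split construction is the cleaner proof of the stated claim.
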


The idea of the proof is to represent every edge $(u,v)$ of $G$ by a $t$-length path from $u$ to $v$ via some auxiliary nodes, so that the total number of auxiliary nodes is small, and the degree of every node is small as well.

Let $q=\max\{2,\lceil m/n\rceil\}$. 
Consider some node $u$ and its out-neighbors $v_1,\ldots,v_{deg(u)}$.
Remove the edge from $u$ to $v_j$ for each $j$. Let $d$ be the smallest power of $q$ that is larger than $deg(u)$, i.e. $q^{d-1}< deg(u)\leq q^d$ and $d=\lceil \log_q deg(u) \rceil$.

Using the construction of Claim~\ref{claim:tree}, create a partial $q$-ary tree $T_u$ of at most $3\lceil deg(u)/q \rceil$ auxiliary nodes, with $\lceil deg(u)/q \rceil$ leaves,  and so that the leaves are all at depth $\lceil\log_q (\lceil deg(u)/q\rceil)\rceil= d-1$. Then, make the original out-neighbors $v_1,\ldots,v_{deg(u)}$ of $u$ children of the leaves of $T_u$ so that every leaf of $T_u$ has at most $q$ children.

Let $t=\lceil \log_q n \rceil$.
Notice that since $deg(u)<n$, we have that $d\leq t$. If $d=t$, set $r_u=u$. If $d<t$, add another $t-d$ new auxiliary nodes $u_1,\ldots,u_{t-d}$, connect them into a directed path $u_1\rightarrow \ldots \rightarrow u_{t-d}$ and then add the edge $(u_{t-d},r_u)$. Let $u_1=u$. This completes a directed tree $T(u)$ rooted at $u$ such that the number of edges on any root-to-leaf path is $t$.
See Figure \ref{fig:trees} for example trees.

\begin{figure}[h]
  \centering
  \includegraphics[width=.7\linewidth]{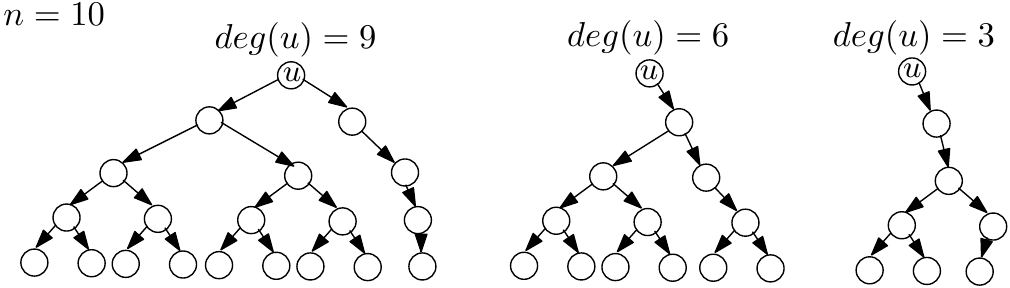}
  \caption{Here we give examples of the construction of $T(u)$ when $t=4$ (e.g. when $n=10$), and when the out-degree of $u$ is $9,6$ and $3$.}
  \label{fig:trees}
\end{figure}

The obtained graph is unweighted. Notice that for each $j$, the original edge $(u,v_j)$ is replaced by a path in $T(u)$ of length exactly $t$, and hence for every $u,v\in V$, $d_{G'}(u,v)=td_G(u,v)$. 

Since the auxiliary nodes do not create new cycles, any cycle $C$ in $G'$ must correspond to a cycle in $G$ that can be obtained from $C$ by replacing each subpath between nodes of $V$ with the edge corresponding to it, and the girth of $G'$ is exactly $t$ times the girth of $G$.
Similarly, if we had a $c$-roundtrip spanner over the new graph $G'$, we can obtain a $c$-spanner of $G$ by replacing each auxiliary path between vertices of $V$ with the corresponding edge of $G$. The number of edges does not increase.

Every vertex in the new graph has out-degree at most $q$. If we would like the in-degrees to be bounded by $q$ as well, we can perform the same procedure (with edge directions reversed) on the in-neighborhoods.

The total number of auxiliary vertices added to $T(u)$ is $$t+3\lceil deg(u)/q\rceil \leq \log n + 3+3n\cdot deg(u)/m.$$
Over all vertices the total number of auxiliary vertices is at most $$\sum_{u\in V} \left[\log n + 3+3\frac{n\cdot deg(u)}{m}\right]=O(n\log n).$$
\end{proof}

\begin{proofof}{Claim \ref{claim:tree}}
It is easy to see that if $q^{d-1}<L\leq q^d$, we can always take a complete $q$-ary tree on $q^d$ leaves and remove enough leaves until we only have $L$. This would definitely achieve the depth requirement. However, if we are not careful, we might have more than $3L$ nodes in the tree. Here we do a more fine-tuned analysis to have both the size and the depth of the tree under control.

Let us consider the $q$-ary representation of $L$:
$L=a_{d-1}q^{d-1}+a_{d-2}q^{d-2}+\ldots+a_0.$
Here each $a_j\in \{0,\ldots,q-1\}$.

We will show inductively how to build a rooted tree with out-degree $\leq q$ so that every leaf is at depth $d$.
The base case is when $d=1$, so that $L=a_0$. Then we simply have a root with $a_0$ children.

Suppose that $d>1$. Let us assume that for every integer $\ell<q^{d-1}$ we can create a rooted tree with outdegree at most $q$, $\ell$ children all of depth $d-1$.
Consider now $L=a_{d-1}q^{d-1}+a_{d-2}q^{d-2}+\ldots+a_0.$
Create a root $r$ with $a_{d-1}+1$ children. The first $a_{d-1}$ children are roots of complete $q$-ary trees with $q^{d-1}$ leaves. These have depth $d-1$, and together with the edge from $r$ to their roots, they have depth $d$. The last child of $r$ is a root of a directed tree formed inductively to have $a_{d-2}q^{d-2}+\ldots+a_0$ leaves (all of depth $d-1$) and out-degree $\leq q$.
As $d=\lceil \log_q L\rceil$, we are done with the depth argument.

As for the number of nodes in the tree, we prove it by induction. The base case is when $d=1$, so $L=a_0$ and the number of nodes in the tree is $L+1\leq 3L$ (as $L\geq 1$). Suppose the number of nodes in the tree is $\leq 3L$ for all $L<q^{d-1}$. Consider $L=a_{d-1}q^{d-1}+a_{d-2}q^{d-2}+\ldots+a_0.$
The number of nodes in the tree is then at most
\[1+a_{d-1}\frac{q^d}{q-1}+3\cdot (L-a_{d-1}q^{d-1}),\]
where $1$ is for the root, $\frac{q^d}{q-1}$ is the number of nodes of a complete $q$-ary tree with $q^{d-1}$ leaves and $(L-a_{d-1}q^{d-1})$ is the number of leaves left after the first $a_{d-1}q^{d-1}$ are covered by the complete $q$-ary trees.
The expression above is
\[\leq 3L + 1+\frac{a_{d-1}}{q-1}(q^d-1-3q^{d}+3q^{d-1})\leq 3L + \frac{a_{d-1}}{q-1}(-2q^{d}+3q^{d-1}+q-2).\]
Now, since $d\geq 2$ ($d=1$ was the base case), and $q\geq 2$, we have that
$$-2q^{d}+3q^{d-1}+q-2=q^{d-1}(3-2q)+q-2\leq -q^{d-1}+q-2\leq -2<0,$$
and hence the number of nodes is $\leq 3L$.
\end{proofof}

\begin{proofof}{Lemma \ref{lemma:modifiedDijkstra}}
First suppose that we are able to pick a random sample $R_{i}^j(u)$ of $c\log n$ vertices from $\bar{Z}_i^j(u)$. Then we can define $B_{i}^j(u)=\{z\in \bar{Z}_i^j(u)~|~d(z,y)\leq d ,~\forall y\in R_i^j(u)\}$.

Consider any $s\in V$ with at least $0.2 |\bar{Z}_i^j(u)|$ nodes $v\in V$ so that $d(s,v),d(v,s)\leq d$. As $\bar{Z}_i^j(u)\geq 10n^{1-\alpha}$ (as otherwise we would be done and the sampled vertices would work), $0.2 |\bar{Z}_i^j(u)|\geq 2n^{1-\alpha}$, and so with high probability, for $s$ with the property above, $Q$ contains some $q$ with $d(s,q),d(q,s)\leq d$, and so $s\in V'_i$. Thus with high probability, for every $s\in Z_i$, there are at most $0.2 |\bar{Z}_i^j(u)|$
 nodes $v\in V$ so that $d(s,v),d(v,s)\leq d$.
 
We will iterate this sampling process until we arrive at a subset of $\bar{Z}_i^j(u)$ that is smaller than $10n^{1-\alpha}$ that contains all the vertices in $Z_i^j(u)$ with distance at most $d$ to all the sampled vertices, as follows: 
 
Let $Z^j_{i,0}(u)=\bar{Z}_i^j(u)$. For each $k=0,\ldots, 2\log n$, let $R^j_{i,k}(u)$ be a random sample of $O(\log n)$ vertices of $Z^j_{i,k}(u)$. Define $Z^j_{i,k+1}(u)=\{z\in \bar{Z}_i^j(u)~|~d(z,y)\leq d \forall y\in \cup_{\ell=0}^k R^j_{i,\ell}(u)\}$. 
We get that for each $k$, $|Z^j_{i,k}(u)|\leq 0.8^k |\bar{Z}_i^j(u)|$ so that at the end of the last iteration, $|Z^j_{i,2\log n}|\leq 10 n^{1-\alpha}$. Hence we get the set $Z'^j_i(u)$ that we are after as $Z^j_{i,2\log n}$.

It is not immediately clear how to obtain the random sample $R^j_{i,k}(u)$ from $Z^j_{i,k}(u)$ as $Z^j_{i,k}(u)$ is unknown. We do it in the following way. For each $i,j,k$ we independently obtain a random sample $S_{i,j,k}$ of $Z_i$ by sampling each vertex independently with probability $p=100\log n / n^{\alpha}$.
For each of the (in expectation) $O(n^\alpha \log^4(n))$ vertices in the sets $S_{i,j,k}$ we run Dijkstra's to and from them, to obtain all their distances.
 
Now, for a fixed $i$, $j\leq i$, $k$, to obtain the random sample $R^j_{i,k}(u)$ of the unknown $Z^j_{i,k}(u)$, we assume that we already have $R^j_{i,\ell}(u)$ for $\ell<k$, and define
\[T^j_{i,k}(u)=\{s\in S_{i,j,k}~|~s\in \bar{Z}_i^j(u) \textrm{ and } d(s,y)\leq d \forall y\in \cup_{\ell<k} R^j_{i,\ell}(u).\}\]
Forming the set $T^j_{i,k}(u)$ is easy since we have the distances $d(s,v)$ for all $s\in S_{i,j,k}$ and $v\in V$, so we can check whether $s\in \bar{B}^j(u)$ and $s\in Z_i$ (thus checking that $s\in \bar{Z}_i^j(u)$) and $d(s,y)\leq d \forall y\in \cup_{\ell<k} R^j_{d,\ell}(u)$ in polylogarithmic time for each $s\in S_{i,j,k}$.

Now since $S_{i,j,k}$ is independent from all our other random choices, $T^j_{i,k}(u)$ is  a random sample of $Z^j_{i,k}(u)$ essentially created by selecting each vertex with probability $p$. If $Z^j_{i,k}(u)\geq 100n^{1-\alpha}$, with high probability, $T^{j}_{i,k}(u)$ has at least $10\log n$ vertices so we can pick $R^j_{i,k}(u)$ to be a random sample of $10\log n$ vertices of $T^j_{i,k}(u)$, and they will also be a random sample of $10\log n$ vertices of $Z^j_{i,k}(u)$. So we let $R_{i}^j(u)=\cup_{k} R_{i,k}^j(u)$, which has size $O(\log^2{n})$. The running time of this sampling procedure comes from the Dijkstras we perform from $S_{i,j,k}$s and hence it is $\tilde{O}(mn^{\alpha})$.
\end{proofof}

\begin{proofof}{Lemma \ref{lemma:klevelsets}}
First it is clear that for all $v\in S_{i}^{l-1}(u)$ and each $r\in R_i^{j'}(u)$, we have $d(v,r)\le ((l-1)+\eps(l-1)^2)g'+(1+\epsilon)^{j'+1}-(1+\epsilon)^j\le (l+\eps l^2)g'+(1+\epsilon)^{j'+1}-(1+\epsilon)^j$, and so $v\in S_{i}^l(u)$. 

Now suppose that for $v\in S_i^{l-1}(u)$ and $w\in V$, we have $\rd{v}{w}\le g'$. Suppose that $v\in B^{j_1}(u)$, $w\in B^{j_2}(u)$. For $r\in R_i^{j_3}$, we have 
$$
d(w,r)\le d(w,v)+d(v,r)\le d(w,v)+((l-1)+(l-1)^2\eps)g'+(1+\epsilon)^{j_3+1}-(1+\epsilon)^{j_1}.
$$ 
If $j_1 \ge j_2$, then since $d(w,v)\le g'$, we have 
$$
d(w,r)\le (l+(l-1)^2\eps)g'+(1+\epsilon)^{j_3+1}-(1+\epsilon)^{j_1}\le (l+l^2\eps)g'+(1+\epsilon)^{j_3+1}-(1+\epsilon)^{j_2}.
$$
If $j_1< j_2$, then we have $d(w,v)\le g'-d(v,w)\le g'-[(1+\epsilon)^{j_2}-(1+\epsilon)^{j_1+1}]$. Using the fact that $(1+\eps)^{j1}\le (l-1)g'$ we have that 
\begin{align*}
d(w,r)&\le (l+(l-1)^2\eps)g'+(1+\epsilon)^{j_3+1}-(1+\epsilon)^{j_2}+\eps(1+\eps)^{j_1}\\
&\le (l+(l-1)^2\eps)g'+(1+\epsilon)^{j_3+1}-(1+\epsilon)^{j_2}+\eps (l-1)g'\\
&\le (l+l^2\eps)g'+(1+\epsilon)^{j_3+1}-(1+\epsilon)^{j_2}.
\end{align*}

We have that $d(u,w)\le d(u,v)+d(v,w)\le (2l-3)g'/2+g'\le (2l-1)g'/2$. If the path $vw$ contains no off vertex, then there $uw$ path passing through $v$ contains no off vertex and so there is a path of length at most $(2l-1)g'/2$ with all vertices. So $w\in S_i^l(u)$.
\end{proofof}

\begin{proofof}{Lemma \ref{lemma:alphaformula}}
Assume that $|S_i^k|\le Cn^{1-\alpha}$ for some constant $C>1$. Suppose that for all $l<k$, we have that $|S_i^{l+1}|>C(|S_i^l|n^\alpha)^{\frac{1}{1+\alpha}}$. Using $|S_i^1|> n^\alpha$, we have that $|S_i^{k}|>Cn^{\frac{\alpha}{(1+\alpha)^{1-k}}+\alpha\sum_{j=1}^{k-1}\frac{1}{(1+\alpha)^j}}$. Since $1-\frac{1}{(1+\alpha)^{k-1}}=\alpha\sum_{j=1}^{k-1}\frac{1}{(1+\alpha)^j}$ and we have that $\alpha(1+\alpha)^{k-1}=1-\alpha$ iff $\alpha+\frac{\alpha-1}{(1+\alpha)^{k-1}}=0$, we obtain that $|S_i^k|>Cn^{1-\alpha}$, which is a contradiction.
\end{proofof}

\end{document}